\algnewcommand{\LineComment}[1]{\Statex\hspace{\algorithmicindent}\(\triangleright\) #1}
\algnewcommand\algorithmicforeach{\textbf{for each}}
\algorithmic\endcsname{\itemsep\z@}{\itemsep=0.25ex}{}{}
\long\def\remove#1{}
\theoremstyle{plain}
\newtheorem{theorem}{Theorem}[section]
\newtheorem{proposition}{Proposition}[section]
\theoremstyle{definition}
\newtheorem{Definition}{Definition}[section]
\newtheorem{problem}{Problem}[section]
\newtheorem{Remark}{Remark}[section]
\newcommand{\Hm}{\text{\sf H}}
\newcommand{\Chn}{\text{\sf C}}
\newcommand{\Zyc}{\text{\sf Z}}
\newcommand{\Bnd}{\text{\sf B}}
\newcommand{\by}{\times}
\renewcommand{\ker}{\mathrm{Ker}\,}
\newcommand{\img}{\mathrm{Img}\,}
\newcommand{\interior}{\mathrm{Int}}
\newcommand{\cof}{\mathrm{cof}\,}
\newcommand{\persdgm}{\mathsf{D}}
\newcommand{\ind}{\mathrm{ind}}
\newcommand{\bd}{\mathrm{bd}}
\newcommand{\Sspn}{\mathcal{S}}
\newcommand{\real}{\mathbb{R}}
\renewcommand{\bar}[1]{\overline{#1}}
\newcommand{\bll}{\mathbb{B}}
\newcommand{\inv}{^{-1}}
\newcommand\SetSymbol[1][]{\nonscript\:#1\vert\allowbreak\nonscript\:\mathopen{}}
\providecommand\given{} 
\DeclarePairedDelimiterX\Set[1]\{\}{\renewcommand\given{\SetSymbol[\delimsize]}#1}
\let\wdhat\widehat
\let\wdtild\widetilde
\let\emptyset\varnothing
\let\intersect\cap
\let\union\cup
\newcommand{\Acal}{\mathcal{A}}
\newcommand{\Bcal}{\mathcal{B}}
\newcommand{\Fcal}{\mathcal{F}}
\newcommand{\Ical}{\mathcal{I}}
\newcommand{\Mcal}{\mathcal{M}}
\newcommand{\Ncal}{\mathcal{N}}
\newcommand{\Rcal}{\mathcal{R}}
\newcommand{\Vcal}{\mathcal{V}}
\newcommand{\Zcal}{\mathcal{Z}}
\newcommand{\Gbb}{\mathbb{G}}
\newcommand{\Zbb}{\mathbb{Z}}
\newcommand{\aG}{\alpha}
\newcommand{\DG}{\Delta}
\newcommand{\kG}{\kappa}
\newcommand{\oG}{\omega}
\newcommand{\OG}{\Omega}
\newcommand{\sG}{\sigma}
\newcommand{\SG}{\Sigma}
\newcommand{\thG}{\theta}
\newcommand{\zG}{\zeta}
\newcommand{\Dim}{d}
\newcommand{\diml}{q}
\newcommand{\Dimtop}{{d+1}}
\newcommand{\Dimless}{d}
\newcommand{\Dimlsls}{{d-1}}
\newcommand{\birth}{{\beta}}
\newcommand{\death}{{\delta}}
\newcommand{\algmargin}{\the\ALG@thistlm}
\newlength{\whilewidth}
\algnewcommand{\parState}[1]{\State%
  \parbox[t]{\dimexpr\linewidth-\algmargin}{\strut #1\strut}}
\begin{document}

\title{Computing Minimal Persistent Cycles:\\Polynomial and Hard Cases\thanks{Supported by NSF grants CCF-1740761 and CCF-1839252.}}

\author{Tamal K. Dey\thanks{Department of Computer Science and Engineering, The Ohio State University. \texttt{dey.8@osu.edu}}
\and Tao Hou\thanks{Department of Computer Science and Engineering, The Ohio State University. \texttt{hou.332@osu.edu}}
\and Sayan Mandal\thanks{Department of Computer Science and Engineering, The Ohio State University. \texttt{mandal.25@osu.edu}}
}

\date{}

\maketitle
\thispagestyle{empty}

\begin{abstract}
Persistent cycles, especially the minimal ones, are useful geometric features
functioning as augmentations for the intervals
in a purely topological persistence diagram (also termed as barcode).
In our earlier work,
we showed that 
computing minimal 1-dimensional persistent cycles (persistent 1-cycles) for finite intervals is NP-hard
while the same for infinite intervals is polynomially tractable.
In this paper, we address this problem for general dimensions with $\Zbb_2$ coefficients.
In addition to proving that it is NP-hard to compute minimal persistent $\Dim$-cycles ($\Dim>1$)
for both types of intervals given arbitrary simplicial complexes, 
we identify two interesting cases which are polynomially tractable.
These two cases assume the complex to
be a certain generalization of manifolds
which we term as {\it weak pseudomanifolds}.
For finite intervals from the $\Dim_\text{th}$ persistence diagram of a weak $(\Dim+1)$-pseudomanifold,
we utilize the fact that persistent cycles of such intervals are null-homologous
and reduce the problem to a minimal cut problem.
Since the same problem for infinite intervals is NP-hard,
we further assume the weak $(\Dim+1)$-pseudomanifold to be embedded in $\real^{\Dim+1}$
so that the complex has a natural dual graph structure
and the problem reduces to a minimal cut problem.
Experiments with both algorithms on scientific data 
indicate that the minimal persistent cycles capture various significant features of the data.
\end{abstract}

\newpage
\setcounter{page}{1}

\section{Introduction}
Persistent homology~\cite{edelsbrunner2010computational}, which captures essential topological features of data,
has proven to be a useful stable descriptor 
since Edelsbrunner~et~al.~\cite{edelsbrunner2000topological} first proposed the algorithm for its computation.
The understanding of topological persistence was later expanded by several works~\cite{chazal2009proximity,cohen2007stability,dey2014computing,zomorodian2005computing}
in terms of both theory and computation.
To make use of persistent homology, one typically computes a {\it persistence diagram} (also called {\it barcode})
which is a set of intervals with birth and death points.
Besides just utilizing the set of intervals, 
some applications~\cite{dey19pers1cyc,wu2017optimal} need persistence diagrams 
augmented with representative cycles for the intervals
for gaining more insight into the data.
These representative cycles, termed as {\it persistent cycles}~\cite{dey19pers1cyc},
have been studied by Wu~et~al.~\cite{wu2017optimal}, Obayashi~\cite{obayashi2018volume},
and Dey~et~al.~\cite{dey19pers1cyc} recently from the view-point of optimality.

Although the original persistence algorithm of Edelsbrunner et al.~\cite{edelsbrunner2000topological} implicitly computes persistent cycles, 
it does not necessarily provide minimal ones.
In an earlier work~\cite{dey19pers1cyc}, we showed that it is NP-hard to compute minimal persistent $1$-cycles
(cycles for 1-dimensional homology groups)
when the given interval is finite. Interestingly, the same 
for infinite intervals turned out to be computable in polynomial time~\cite{dey19pers1cyc}.
This naturally leads to the following questions:
Are there other interesting cases beyond $1$-dimension for which minimal persistent cycles can be computed in polynomial time? Also, what are the cases that are NP-hard?
In this paper, we settle the complexity question for computing 
minimal persistent cycles with $\Zbb_2$ coefficients 
in general dimensions.
We first show that when $\Dim\geq 2$,
computing minimal persistent $\Dim$-cycles for both finite and infinite intervals is NP-hard in general.
We then identify a special but important class of simplicial complexes, 
which we term as {\it weak $(\Dim+1)$-pseudomanifolds},
whose minimal persistent $\Dim$-cycles can be computed in polynomial time.
A~weak $(\Dim+1)$-pseudomanifold\footnote{
  The naming of weak pseudomanifold is adapted from the commonly accepted name {\it pseudomanifold} (see Definition~\ref{dfn:pseudomnfld}).} 
is a generalization of a $(\Dim+1)$-manifold and is defined as follows:

\begin{Definition} \label{dfn:weak-pseudo}
A simplicial complex $K$ is a {\it weak $(\Dim+1)$-pseudomanifold} if each $\Dim$-simplex is a face of no more than two $(\Dim+1)$-simplices in $K$.
\end{Definition}

Specifically, 
we find that if the given complex is a weak $(\Dim+1)$-pseudomanifold,
the problem of computing minimal persistent $\Dim$-cycles for finite intervals 
can be cast into a minimal cut problem (see Section~\ref{sec:fin-alg})
due to the fact that persistent cycles of such kind are null-homologous in the complex.
However, when $d\geq 2$ and intervals are infinite,
the computation of the same becomes NP-hard (see Section~\ref{sec:hardness}).
Nonetheless, for infinite intervals, if
we assume that the weak $(\Dim+1)$-pseudomanifold is embedded in $\real^{\Dim+1}$,
the minimal persistent cycle problem 
reduces to a minimal cut problem (see Section~\ref{sec:inf-alg}) and hence belongs to P. 
Note that a simplicial complex embedded in $\real^{\Dim+1}$ is automatically a weak $(\Dim+1)$-pseudomanifold.
Also note that while there is an algorithm~\cite{chen2011hardness} in the non-persistence setting
which computes minimal $\Dim$-cycles by minimal cuts,
the non-persistence algorithm assumes the $(\Dim+1)$-complex to be embedded in $\real^{\Dim+1}$.
Our algorithm for finite intervals, to the contrary, does not need the embedding assumption. 

In order to make our statements about the hardness results precise, 
we let PCYC-FIN$_\Dim$ denote 
the problem of computing minimal persistent $\Dim$-cycles for finite intervals 
when the given simplicial complex is arbitrary,
and let PCYC-INF$_\Dim$ denote the same problem for infinite intervals (see definitions of Problem~\ref{prob:pcyc-fin} and~\ref{prob:pcyc-inf}).
We also let WPCYC-FIN$_\Dim$ denote a subproblem\footnote{
  For two problems $P_1$ and $P_2$, $P_2$ is a {\it subproblem} of $P_1$ 
  if any instance of $P_2$ is an instance of $P_1$
  and $P_2$ asks for computing the same solutions as $P_1$.
} of PCYC-FIN$_\Dim$
and let WPCYC-INF$_\Dim$, WEPCYC-INF$_\Dim$ denote two subproblems of PCYC-INF$_\Dim$,
with the subproblems requiring additional constraints on the given simplicial complex.
Table~\ref{tbl:hardness} lists the hardness results for all problems of interest,
where the column ``Restriction on $K$'' specifies 
the additional constraints subproblems require on the given simplicial complex $K$.
Note that WPCYC-INF$_\Dim$ being NP-hard trivially implies that 
PCYC-INF$_\Dim$ is NP-hard.

\begin{table*}[htb!]
\caption{Hardness results for minimal persistent cycle problems with bold results denoting new findings.}
\label{tbl:hardness}
\small
\centering
\begin{tabular}{llll}
\hline
{\it Problem}     & {\it Restriction on $K$} & $d$                  & {\it Hardness} \\ \hline
PCYC-FIN$_\Dim$   & $-$                                             & $\geq 1$ & \textbf{NP-hard}    \\
WPCYC-FIN$_\Dim$  & $K$ a weak $(\Dim+1)$-pseudomanifold            & $\geq 1$ & \textbf{Polynomial} \\
PCYC-INF$_\Dim$   & $-$                                             & $=1$     & Polynomial     \\
WPCYC-INF$_\Dim$  & $K$ a weak $(\Dim+1)$-pseudomanifold            & $\geq 2$ & \textbf{NP-hard}    \\
WEPCYC-INF$_\Dim$ & $K$ a weak $(\Dim+1)$-pseudomanifold in $\real^{\Dim+1}$ & $\geq 2$ & \textbf{Polynomial} \\ \hline
\end{tabular}
\end{table*}

\paragraph{Main contributions.}
We summarize our contributions as follows:
\begin{itemize}
  \item We prove the NP-hardness of PCYC-FIN$_\Dim$ and WPCYC-INF$_\Dim$ for all $d\geq 2$.
  \item We present two polynomial time algorithms
  for WPCYC-FIN$_\Dim$ and WEPCYC-INF$_\Dim$ when $d\geq 1$,
  based on the duality of minimal persistent cycles and minimal cuts.
  Other than the minimal cut computation, 
  steps in both algorithms run in linear or almost linear time.
\end{itemize}


\subsection{Related works}\label{sec:review}
In the context of computing optimal cycles, most works have been done in the non-persistence setting. 
These works compute minimal cycles for homology groups of a given simplicial complex.
Only very few works address the problem while taking into account the persistence.
We review some of the relevant works below.

\paragraph{Minimal cycles for homology groups.}
In terms of computing minimal cycles for homology groups,
two problems are of most interest:
the localization problem and the minimal basis problem.
The localization problem asks for computing a minimal cycle in a homology class
and the minimal basis problem asks for computing
a set of generating cycles for a homology group
whose sum of weights is minimal.
With $\Zbb_2$ coefficients, these two problems are in general hard.
Specifically, Chambers~et~al.~\cite{chambers2009minimum}
proved that the localization problem over dimension one is NP-hard when the given simplicial complex is a 2-manifold.
Chen and Freedman~\cite{chen2011hardness} proved that
the localization problem is NP-hard to approximate with fixed ratio over arbitrary dimension. 
They also showed that the minimal basis problem is NP-hard to approximate with fixed ratio over dimension greater than one.
For one-dimensional homology,
Dey~et~al.~\cite{dey2010approximating} proposed a polynomial 
time algorithm for the minimal basis problem.
Several other works~\cite{borradaile2017minimum,chen2010measuring,dey2011optimal,erickson2005greedy}
address variants of the two problems while considering
special input classes, alternative cycle measures, or coefficients for homology other than $\Zbb_2$.




In this work, we use graph cuts and their duality extensively.
The duality of cuts on a planar graph and separating cycles on the dual graph 
has long been utilized to efficiently compute maximal flows and minimal cuts on planar graphs,
a topic for which Chambers~et~al.~\cite{chambers2009minimum} provide a comprehensive review.
In their paper~\cite{chambers2009minimum},
Chambers~et~al. discover the duality between minimal cuts of a surface-embedded graph
and minimal homologous cycles in a dual complex,
and then devise $O(n\log n)$ algorithms for both problems 
assuming the genus of the surface to be fixed.
Chen and Freedman~\cite{chen2011hardness} proposed an algorithm which computes a minimal non-bounding $\Dim$-cycle
given a $(\Dim+1)$-complex embedded in $\real^{\Dim+1}$,
utilizing a natural duality of $\Dim$-cycles in the complex and cuts in the dual graph.
The minimal non-bounding cycle algorithm can be further extended to 
solve the localization problem and the minimal basis problem
over dimension $\Dim$
given a $(\Dim+1)$-complex embedded in $\real^{\Dim+1}$.

\paragraph{Persistent cycle.}
As pointed out earlier, our main focus is the optimality of representative cycles in the
persistence framework.
Some early works~\cite{emmett2016multiscale,escolar2016optimal} 
address the representative cycle problem for persistence
by computing minimal cycles at the birth points of intervals 
without considering what actually die at the death points.
Wu~et~al.~\cite{wu2017optimal} proposed an algorithm computing minimal persistent 1-cycles
for finite intervals using an annotation technique and heuristic search. 
However, the time complexity of the algorithm is exponential in the worst-case.
Obayashi~\cite{obayashi2018volume} casts the minimal persistent cycle problem for finite intervals 
into an integer program,
but the rounded result of the relaxed linear program is not guaranteed to be optimal.
Dey~et~al.~\cite{dey19pers1cyc} formalizes the definition of persistent cycles for both finite and infinite intervals.
They also proved the NP-hardness of computing minimal persistent 1-cycles for finite intervals
and proposed a polynomial time algorithm for computing non-optimal ones which are still good in practice. 

\section{Preliminaries}\label{sec:pre}

In this section we present some concepts necessary for presenting the results in this paper.

\paragraph{Simplicial complex.}
A {\it simplicial complex} $K$ is a collection of {\it simplices}
which are abstractly defined as subsets of a ground set called the {\it vertex set} of $K$.
If a simplex $\sG$ is in $K$, then all its subsets called its {\it faces} 
are also in $K$. 
The simplex $\sG$ is also referred to as a $\diml$-simplex
if the cardinality of the vertex set of $\sG$ is $\diml+1$.
A {\it$\diml$-face} of $\sG$ is a $\diml$-simplex being a face of $\sG$
and a {\it$\diml$-coface} of $\sG$ is a $\diml$-simplex having $\sG$ as a face.
We call a $\diml$-simplex of $K$ a {\it boundary $\diml$-simplex} if it has less than two $(\diml+1)$-cofaces in $K$.
A {\it simplicial set} is a set of simplices 
and the {\it closure} of a simplicial set $\SG$ is the simplicial complex consisting of all the faces of the simplices in $\SG$.
A~simplicial complex is {\it finite} if it contains finitely many simplices.
In this paper, we only consider finite simplicial complexes.

If each vertex of a simplicial complex $K$ is a point in a Euclidean space,
then each simplex of $K$ can be interpreted as the convex hull of its vertices.
The simplicial complex $K$ is said to be {\it embedded in} the Euclidean space if 
the interiors of all its simplices are disjoint.
The {\it underlying space} of $K$, denoted by $|K|$, is the point-wise union of all the simplices of $K$.

\begin{Definition}[Oriented simplex~\cite{munkres2018elements}]
A $\diml$-simplex 
with an ordering of its vertices is an {\it oriented $\diml$-simplex}.
For each $\diml$-simplex $\sG$ ($\diml>0$),
there are exactly two equivalent classes of vertex orderings, 
resulting in two oriented $\diml$-simplices of $\sG$.
We refer to them as the {\it oppositely oriented $\diml$-simplices}.
\end{Definition}

\begin{Remark}
Any simplex by default is unoriented. 
We denote an unoriented $\diml$-simplex $\sG$ spanned by vertices 
$v_0,\ldots,v_\diml$ as $\sG=\Set{v_0,\ldots,v_\diml}$ and 
an oriented $\diml$-simplex $\vec{\sG}$ 
as $\vec{\sG}=[v_0,\ldots,v_\diml]$, 
where $v_0,\ldots,v_\diml$ specify the ordering of the spanning vertices.
\end{Remark}

\paragraph{Filtration.}
A {\it filtration} $\Fcal$ of a simplicial complex $K$ is a filtered sequence of subcomplexes of $K$, 
$\Fcal:\emptyset=K_0\subseteq K_1\subseteq\ldots\subseteq K_n=K$,
such that $K_{i}$ and $K_{i-1}$ differ by one simplex denoted by $\sigma^\Fcal_{i}$. 
We let $i$ be the {\it index} of $\sG^\Fcal_{i}$ in $\Fcal$ and denote it as  $\ind(\sG^\Fcal_i)=i$. 
A subcomplex $K_i$ in the filtered sequence of $\Fcal$ is also referred to as a {\it partial complex}.



\paragraph{Simplicial homology.}
We provide a brief overview of simplicial homology used in this paper.
See any standard book on the topic, e.g.~\cite{munkres2018elements}.
Let $q\geq 0$, $K$ be a simplicial complex, and $\Gbb$ be an abelian group.
The {\it $q_\text{th}$ chain group} $\Chn_q(K;\Gbb)$ 
is defined to be the abelian group containing all finite sums of the form $\sum_i n_i\vec{\sG}_i$,
where $n_i\in \Gbb$ and $\vec{\sG}_i$ is an oriented $q$-simplex of $K$.
Each element in $\Chn_q(K;\Gbb)$ is called a {\it$q$-chain} of $K$.
Note that for two oppositely oriented $q$-simplices $\vec{\sG}$ and $\vec{\sG}'$,
we have that $n\vec{\sG}=(-n)\vec{\sG}'$ for any $n\in\Gbb$.
Therefore, $\Chn_q(K;\Gbb)$ can be interpreted as a direct sum of $N_q$ copies of $\Gbb$
where $N_q$ is the number of $q$-simplices of $K$ 
and each copy of $\Gbb$ corresponds to a $q$-simplex of $K$.
The {\it $q_\text{th}$ boundary operator} $\partial_q:\Chn_q(K;\Gbb)\to\Chn_{q-1}(K;\Gbb)$
is a group homomorphism  
such that for any oriented $q$-simplex $[v_0,\ldots,v_\diml]$
\[\partial_q\big([v_0,\ldots,v_q]\big)=\sum_{i=0}^q(-1)^i[v_0,\ldots,\wdhat{v}_i,\ldots,v_q]\]
where the notation $[v_0,\ldots,\wdhat{v}_i,\ldots,v_q]$ means that $\wdhat{v}_i$ is deleted from the simplex.
For brevity, 
we often omit the subscript of the boundary operator $\partial_q$
and denote it as $\partial$
when this does not cause any confusion.
The kernel of $\partial_q$ is called the {\it $q_\text{th}$ cycle group} of $K$
and is denoted as $\Zyc_q(K;\Gbb)$.
The image of $\partial_{q+1}$ is called the {\it $q_\text{th}$ boundary group} of $K$
and is denoted as $\Bnd_q(K;\Gbb)$.
A $q$-chain in $\Zyc_q(K;\Gbb)$ is called a {\it $q$-cycle}
and a $q$-chain in $\Bnd_q(K;\Gbb)$ is called a {\it $q$-boundary}.
For a $q$-chain $A$, the $(q-1)$-chain $\partial(A)$ is also called the {\it boundary} of $A$.

A fundamental fact in homology theory is that $\partial_{q}\partial_{q+1}=0$ for any $q$.
This implies that $\Bnd_q(K;\Gbb)\subseteq\Zyc_q(K;\Gbb)$.
The {\it $q_\text{th}$ homology group} of $K$ denoted by $\Hm_q(K;\Gbb)$
is defined as the quotient $\Zyc_q(K;\Gbb)/\Bnd_q(K;\Gbb)$.
Each coset in $\Hm_q(K;\Gbb)$ is called a {\it homology class}
and a cycle is said to be {\it homologous to} another cycle if they belong to the same homology class.
As any boundary cycle represents the homology class $0$ in $\Hm_q(K;\Gbb)$,
a boundary is also said to be {\it null-homologous}.

The abelian group $\Gbb$ in the above definitions is called the {\it coefficient group}
for the homology groups.
Sometimes,
when the coefficient group $\Gbb$ is clear,
we simply drop it and denote a chain group as $\Chn_q(K)$.
This applies to other groups defined in simplicial homology.
In this paper, two coefficient groups $\Zbb_2$ and $\Zbb$ are used for simplicial homology.
When not explicitly stated,
the coefficients are assumed to be in $\Zbb_2$.
With $\Zbb_2$ coefficients, 
the orientations of simplices no longer matter
and a $q$-chain can be interpreted as a set of $q$-simplices
with summation of two $q$-chains being the {\it symmetric difference}.
A $q$-cycle is then a set of $q$-simplices
where every $(q-1)$-face of these simplices adjoins an even number of
$q$-simplices.
Also note that because $\Zbb_2$ is a field, 
all groups defined in simplicial homology with $\Zbb_2$ coefficients
become vector spaces 
and homomorphisms between these groups (such as $\partial$) become linear maps.


\begin{Definition}[$\diml$-weighted] \label{dfn:q-weighted}
A simplicial complex $K$ is {\it $\diml$-weighted} if each $\diml$-simplex $\sG$ of $K$ has a non-negative finite weight $w(\sG)$.
The weight of a $\diml$-chain $A$ of $K$ is then defined as $w(A)=\sum_{\sG\in A}w(\sG)$.
\end{Definition}

\begin{Definition}[$\diml$-connected] \label{dfn:q-conn}
Let $K$ be a simplicial complex, for $\diml\geq 1$, 
two $\diml$-simplices $\sG$ and $\sG'$ of $K$ are {\it $\diml$-connected in $K$} if there is a sequence of $\diml$-simplices of $K$, $(\sG_0,\ldots,\sG_l)$, 
such that $\sG_0=\sG$, $\sG_l=\sG'$, and for all $0\leq i<l$, $\sG_i$ and $\sG_{i+1}$ share a $(\diml-1)$-face. 
The property of $\diml$-connectedness defines 
an equivalence relation on $\diml$-simplices of $K$.
Each set in the partition induced by the equivalence relation 
constitutes a {\it $\diml$-connected component} of $K$.
We say $K$ is {\it $\diml$-connected} if any two $\diml$-simplices of $K$ are $\diml$-connected in $K$.
\end{Definition}
\begin{Remark}
See Figure~\ref{fig:dual-graph} for an example of 1-connected components and 2-connected components.
\end{Remark}

\begin{Definition}[$\diml$-connected cycle]
A $\diml$-cycle $\zG$ (with $\Zbb_2$ coefficients) is {\it $\diml$-connected} 
if the complex derived by taking the closure of the simplicial set $\zG$ is $\diml$-connected.
\end{Definition}

\paragraph{Persistent homology.}
We will provide a brief description of persistent homology.
We recommend the book by Edelsbrunner and Harer~\cite{edelsbrunner2010computational} for a detailed explanation of this topic
and the book by Chazal~et~al.~\cite{chazal2016structure} for its underlying Mathematical structure, {\it persistence module}.
Note that persistent homology in this paper is always assumed to be with $\Zbb_2$ coefficients.
The persistence algorithm starts with a filtration $\Fcal:\emptyset=K_0\subseteq K_1\subseteq\ldots\subseteq K_n=K$
of a simplicial complex $K$,
and for each simplex $\sG_i^\Fcal$,
inspects whether $\partial(\sG_i^\Fcal)$ is a boundary in $K_{i-1}$.
If $\partial(\sG_i^\Fcal)$ is a boundary in $K_{i-1}$, $\sG_i^\Fcal$ is called {\it positive};
otherwise, it is called {\it negative}.
The $d$-chains (or $d$-cycles) in $K_i$ that are not in $K_{i-1}$ are said to be {\it born in} $K_i$
or {\it created by} $\sG_i^\Fcal$.
A positive $d$-simplex creates some $d$-cycles
and a negative $d$-simplex makes some $(d-1)$-cycles become boundaries.
In the latter case, we also say that the negative $d$-simplex {\it kills} or {\it destroys} those $(d-1)$-cycles.
What is central to the persistence algorithm is a notion called {\it pairing}:
A positive simplex is initially {\it unpaired} when introduced;
when a negative $d$-simplex $\sG_i^\Fcal$ comes, 
the algorithm finds a $(d-1)$-cycle
created by an unpaired positive $(d-1)$-simplex $\sG_j^\Fcal$ 
which is homologous to $\partial(\sG_i^\Fcal)$
and {\it pair} $\sG_j^\Fcal$ {\it with} $\sG_i^\Fcal$.
Alongside the pairing, 
a {\it finite interval} $[j,i)$ is added to the $(d-1)_\text{th}$ {\it persistence diagram},
which is denoted by $\persdgm_{d-1}(\Fcal)$.
After all simplices are processed, some positive simplices may still be unpaired.
For each $\sG_i^\Fcal$ of these unpaired simplices,
an {\it infinite interval} $[i,+\infty)$ is added to $\persdgm_d(\Fcal)$,
where $d$ is the dimension of $\sG_i^\Fcal$.

Note that the pairing in the persistence algorithm 
for a given filtration is unique.
Also note that in this paper,
we assume a filtration of a complex is given
and the persistence intervals start and end with indices of the paired simplices.
However, in real-life applications,
one is often given a function on a simplicial complex.
To produce the persistence intervals,
a filtration needs to be derived
and the endpoints of the intervals 
are taken as function values on the paired simplices.
In such cases, 
we can associate a given interval to its simplex pair,
take the indices of the paired simplices,
and get an interval which can serve as an input to our algorithms.

\paragraph{The persistent cycle problems.}
We can now formally define the minimal persistent cycle problems:

\begin{problem}[PCYC-FIN$_\Dim$] \label{prob:pcyc-fin}
Given a finite $\Dim$-weighted simplicial complex $K$, 
a filtration $\Fcal: \emptyset=K_0\subseteq K_1\subseteq\ldots\subseteq K_n=K$, 
and a finite interval $[\birth,\death)\in\persdgm_\Dim(\Fcal)$, 
this problem asks for computing a $\Dim$-cycle with the minimal weight which is born in $K_{\birth}$ and becomes a boundary in $K_{\death}$.
\end{problem}

\begin{problem}[PCYC-INF$_\Dim$] \label{prob:pcyc-inf}
Given a finite $\Dim$-weighted simplicial complex $K$, 
a filtration $\Fcal: \emptyset=K_0\subseteq K_1\subseteq\ldots\subseteq K_n=K$, 
and an infinite interval $[\birth,+\infty)\in\persdgm_\Dim(\Fcal)$,
this problem asks for computing a $\Dim$-cycle with the minimal weight which is born in $K_\birth$.
\end{problem}

\begin{Remark}
The definitions of the above two problems are derived directly from the definition of persistent $d$-cycles~\cite{dey19pers1cyc}.
\end{Remark}

\paragraph{Undirected flow network.}
An {\it undirected flow network} $(G,s_1,s_2)$ consists of an undirected graph $G$ 
with vertex set $V(G)$ and edge set $E(G)$, 
a capacity function $c:E(G)\to [0,+\infty]$, 
and two non-empty disjoint subsets $s_1$ and $s_2$ of $V(G)$. 
Vertices in $s_1$ are referred to as {\it sources}
and vertices in $s_2$ are referred to as {\it sinks}.
A {\it cut} $(S,T)$ of $(G,s_1,s_2)$ consists of two disjoint subsets $S$ and $T$ of $V(G)$ such that $S\cup T=V(G)$, $s_1\subseteq S$, and $s_2\subseteq T$. 
The set of edges 
that connect a vertex in $S$ and a vertex in $T$ are referred as the edges
{\it across} the cut $(S,T)$ and is denoted as $\xi(S,T)$.
The {\it capacity} of a cut $(S,T)$ is defined as $c(S,T)=\sum_{e\in\xi(S,T)}c(e)$. 
A {\it minimal cut} of $(G,s_1,s_2)$ is a cut with the minimal capacity.
Note that we allow parallel edges in $G$ (see Figure~\ref{fig:dual-graph}) 
to ease the presentation.
These parallel edges can be merged into one edge during computation.

\section{Minimal persistent $\Dimless$-cycles of finite intervals for weak $(\Dimtop)$-pseudomanifolds}
\label{sec:fin-alg}

\begin{figure}
  \centering
  \subfloat[]{\includegraphics[width=0.23\linewidth]{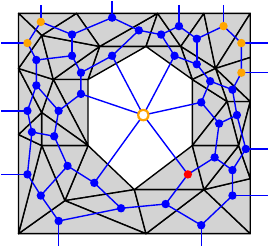}\label{fig:fin-ex1}}
  \hspace{0.5em}
  \subfloat[]{
    \raisebox{0.007\linewidth}{
      \includegraphics[width=0.202\linewidth]{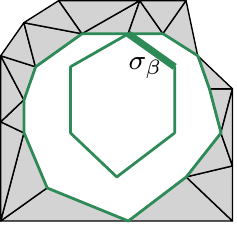}
    }\label{fig:fin-ex2}}
  \hspace{0.5em}
  \subfloat[]{
    \raisebox{0.007\linewidth}{
      \includegraphics[width=0.202\linewidth]{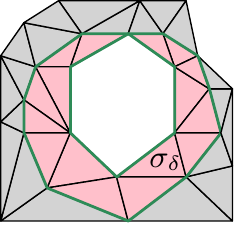}
    }\label{fig:fin-ex3}}
  \hspace{0.5em}
  \subfloat[]{\includegraphics[width=0.23\linewidth]{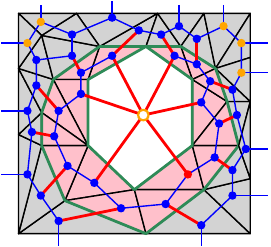}\label{fig:fin-ex4}}
  \caption{
    An example of the constructions in our algorithm 
    showing the duality between persistent cycles and cuts having finite capacity for $\Dimless=1$.
    (a)~The input weak 2-pseudomanifold $K$ with its dual flow network drawn in blue, 
    where the central hollow vertex denotes the dummy vertex,
    the red vertex denotes the source,
    and all the orange vertices (including the dummy one) denote the sinks.
    All ``dangled'' graph edges dual to the outer boundary 1-simplices actually 
    connect to the dummy vertex and these connections are not drawn.
    (b)~The partial complex $K_\birth$ in the input filtration $\Fcal$, 
    where the bold green 1-simplex denotes $\sG_\birth^\Fcal$ which
    creates the green 1-cycle.
    (c)~The partial complex $K_\death$ in $\Fcal$, where the 2-simplex $\sG_\death^\Fcal$
    creates the pink 2-chain killing the green 1-cycle.
    (d)~The green persistent 1-cycle of the interval $[\birth,\death)$ 
    is dual to a cut $(S,T)$ having finite capacity,
    where $S$ contains all the vertices inside the pink 2-chain and $T$ contains all the other vertices.
    The red graph edges denote those edges across $(S,T)$
    and their dual 1-chain is the green persistent 1-cycle.}
  \label{fig:fin-ex}
\end{figure}

In this section, we present an algorithm which computes minimal 
persistent $\Dimless$-cycles for finite intervals 
given a filtration of a weak $(\Dimtop)$-pseudomanifold when $\Dimless\geq 1$. 
The general process is as follows:
Suppose that the input weak $(\Dimtop)$-pseudomanifold is $K$ associated with a filtration $\Fcal:K_0\subseteq K_1\subseteq\ldots\subseteq K_n$ and
the task is to compute the minimal persistent cycle of 
a finite interval $[\birth,\death)\in\persdgm_\Dimless(\Fcal)$.
We first construct an undirected dual graph $G$ for $K$
where vertices of $G$ are dual to $(\Dimtop)$-simplices of $K$
and edges of $G$ are dual to $\Dimless$-simplices of $K$.
One dummy vertex termed as {\it infinite vertex}
which does not correspond to any $(\Dimtop)$-simplices is added to
$G$ for graph edges dual to those boundary $\Dimless$-simplices.
We then build an undirected flow network on top of $G$ 
where the source is the vertex dual to $\sG_\death^\Fcal$
and the sink is the infinite vertex along with
the set of vertices dual to those $(\Dimtop)$-simplices which are added to $\Fcal$ after $\sG_\death^\Fcal$.
If a $\Dimless$-simplex is $\sG_\birth^\Fcal$ or added to $\Fcal$ before $\sG_\birth^\Fcal$,
we let the capacity of its dual graph edge be its weight;
otherwise, we let the capacity of its dual graph edge be $+\infty$.
Finally, we calculate a minimal cut of this flow network 
and return the $\Dimless$-chain dual to the edges across the minimal cut as a minimal persistent cycle of the interval.

The intuition of the above algorithm is best explained by an example in Figure~\ref{fig:fin-ex}, where $d=1$.
The key to the algorithm is the duality between persistent cycles of the input interval 
and cuts of the dual flow network having finite capacity.
To see this duality,
first consider a persistent $\Dimless$-cycle $\zG$ of the input interval $[\birth,\death)$.
There exists a $(\Dimtop)$-chain $A$ in $K_\death$ created by $\sG_\death^\Fcal$
whose boundary equals $\zG$,
making $\zG$ killed.
We can let $S$ be the set of graph vertices dual to the simplices in $A$
and let $T$ be the set of the remaining graph vertices,
then $(S,T)$ is a cut.
Furthermore, $(S,T)$ must have finite capacity as 
the edges across it are exactly dual to the $\Dimless$-simplices in $\zG$ 
and the $\Dimless$-simplices in $\zG$ have indices in $\Fcal$ less than or equal to $\birth$.
On the other hand, 
let $(S,T)$ be a cut with finite capacity,
then the $(\Dimtop)$-chain whose simplices are dual to the vertices in $S$
is created by $\sG_\death^\Fcal$.
Taking the boundary of this $(\Dimtop)$-chain, we get a $\Dimless$-cycle $\zG$.
Because $\Dimless$-simplices of $\zG$ are exactly dual to the edges across $(S,T)$
and each edge across $(S,T)$ has finite capacity,
$\zG$ must reside in $K_\birth$. 
We only need to ensure that
$\zG$ contains $\sG_\birth^\Fcal$ in order to show that $\zG$ is a persistent cycle of $[\birth,\death)$.
In Section~\ref{sec:corr-fin}, we argue that
$\zG$ actually contains $\sG_\birth^\Fcal$, so $\zG$ is indeed a persistent cycle.
Note that while the above explanation introduces the general idea,
the rigorous statement and proof of the duality are articulated by Proposition~\ref{prop:algr-fin-bar-fincut-pers} and~\ref{prop:algr-fin-bar-pers-fincut}.



\begin{algorithm}[h]
\caption{Computing minimal persistent $\Dimless$-cycles of finite intervals for weak $(\Dimtop)$-pseudomanifolds}

\vspace{1ex}

\textbf{Input:}

\hspace{7pt}$K$: finite $\Dimless$-weighted weak $(\Dimtop)$-pseudomanifold

\hspace{7pt}$\Dimless$: integer $\geq 1$

\hspace{7pt}$\Fcal$: filtration $K_0\subseteq K_1\subseteq\ldots\subseteq K_n$ of $K$

\hspace{7pt}$[\birth,\death)$: finite interval of $\persdgm_{\Dimless}(\Fcal)$

\textbf{Output:} 

\hspace{7pt}minimal persistent $\Dimless$-cycle of $[\birth,\death)$

\vspace{1ex}

\begin{algorithmic}[1]
\Procedure{MinPersCycFin}{$K,\Dimless,\Fcal,[\birth,\death)$}
\LineComment{\textsf{set up the complex $\wdtild{K}$ being worked on}}
\State $C^\Dimtop\leftarrow$ $(\Dimtop)$-connected component of $K$ containing $\sG^\Fcal_\death$
\label{alg-line:set-Cd}
\State $\wdtild{K}\leftarrow$ closure of the simplicial set $C^\Dimtop$ \label{alg-line:set-Ktild}
\LineComment{\textsf{construct dual graph}}
\State $(G,\thG)\leftarrow\textsc{DualGraphFin}(\wdtild{K},\Dimless)$ \label{alg-line:dual-graph-fin}
\LineComment{\textsf{assign capacity to $G$}}
\ForEach{$e\in E(G)$} \label{alg-line:assgn-cap-for-fin}
\If{$\ind(\thG\inv(e))\leq \birth$}
\State $c(e)\leftarrow w(\thG\inv(e))$
\Else
\State $c(e)\leftarrow +\infty$
\EndIf
\EndFor
\vspace{-0.5em}
\LineComment{\textsf{set the source}}
\State $s_1\leftarrow\Set{\thG(\sigma^\Fcal_\death)}$
\LineComment{\textsf{set the sink}}
\State $s_2\leftarrow\Set{v\in V(G)\given v\neq \phi,\,\ind(\thG\inv(v))>\death}$
\If{$\phi\in V(G)$}
\State $s_2\leftarrow s_2\union\Set{\phi}$ \label{alg-line:sink-loop}
\EndIf
\State $(S^*,T^*)\leftarrow$ min-cut of $(G,s_1,s_2)$ \label{alg-line:min-cut-fin}
\State \Return $\thG\inv(\xi(S^*,T^*))$ \label{alg-line:return-fin}
\EndProcedure
\end{algorithmic}
\label{algr:min-cyc-fin-bar}
\end{algorithm}

We list the pseudo-code in Algorithm~\ref{algr:min-cyc-fin-bar} and 
it works as follows:
Line~\ref{alg-line:set-Cd} and~\ref{alg-line:set-Ktild} set up a complex $\wdtild{K}$ that the algorithm mainly works on,
where $\wdtild{K}$ is taken as the closure of the $(\Dimtop)$-connected component 
of $K$ containing $\sG_\death^\Fcal$.
The reason for working on $\wdtild{K}$ instead of the entire complex is explained later in this section.
Line~\ref{alg-line:dual-graph-fin} constructs the dual graph $G$ from $\wdtild{K}$ and
line~\ref{alg-line:assgn-cap-for-fin}$-$\ref{alg-line:sink-loop} 
builds the flow network on top of $G$.
Note that we denote the infinite vertex by $\phi$.
Line~\ref{alg-line:min-cut-fin}
computes a minimal cut for the flow network and
line~\ref{alg-line:return-fin} returns the $\Dimless$-chain dual to the edges across the minimal cut.
In the pseudo-codes of this paper, to ease the exposition,
we treat a Mathematical function as a computer program object.
For example,
the function $\thG$ returned by $\textsc{DualGraphFin}$ in Algorithm~\ref{algr:min-cyc-fin-bar}
denotes the bijection between the simplices of $\wdtild{K}$ 
and their dual vertices or edges
(see Section~\ref{sec:grph-cons-fin} for details).
In practice, these constructs can be easily implemented in any computer programming language.

To see the reason why we work on $\wdtild{K}$,
we first note that the dual graph constructed directly from $K$ may be disconnected\footnote{
  For an example in $d=1$, take $K$ as two disconnected triangulated 2-spheres.
  Its dual graph consists of two connected components.}.
While cuts are still well-defined for a disconnected flow network,
one may prefer a connected one as 
the minimal cut computation only concerns the graph component containing the source.
By constructing the dual graph from $\wdtild{K}$,
it can be ensured that the graph is connected.
In order for Algorithm~\ref{algr:min-cyc-fin-bar} to work,
one has to further show that the sink is non-empty 
so that the computed persistent cycle is non-empty. 
This is verified in Proposition~\ref{prop:algr-fin-bar-nonempt}.
An intuitive reason why the computation from $\wdtild{K}$ is still correct
is as follows:
Each persistent $\Dimless$-cycle $\zG$ of the given interval corresponds to a
$(\Dimtop)$-chain $A$ which kills $\zG$, i.e., $\partial(A)=\zG$.
Suppose that $A$ is not entirely contained in $\wdtild{K}$. Notice that
$A\intersect \wdtild{K}\neq \emptyset$ and contains at least the killer simplex $\sG_\death^\Fcal$. 
Then $\partial(A\intersect \wdtild{K})$ 
must be a persistent cycle of the interval
residing in $\wdtild{K}$
which has a smaller weight.
Hence, a~minimal persistent cycle must reside in $\wdtild{K}$.
In Section~\ref{sec:corr-fin}, we formally verify the construction.

\paragraph{Complexity.} 
The time complexity of Algorithm~\ref{algr:min-cyc-fin-bar} depends on 
the encoding scheme of the input and 
the data structure used for representing a simplicial complex.
For encodings of the input,
we assume $K$ and $\Fcal$ to be represented by a sequence of all the simplices of $K$
ordered by their indices in $\Fcal$,
where each simplex is denoted by its set of vertices.
We also assume a simple yet reasonable simplicial complex data structure as follows:
In each dimension, simplices are mapped to integral identifiers ranging from 0 to 
the number of simplices in that dimension minus 1;
each $\diml$-simplex has an array (or linked list) storing all the id's of its $(\diml+1)$-cofaces;
a hash map for each dimension is maintained for the query of the integral id 
of each simplex in that dimension based on the spanning vertices of the simplex.
We further assume $\Dimless$ to be constant.
By the above assumptions, 
let $n$ be the size (number of bits) of the encoded input,
then there are no more than $n$ elementary $O(1)$ operations 
in line~\ref{alg-line:set-Cd} and~\ref{alg-line:set-Ktild}.
So, the time complexity of line~\ref{alg-line:set-Cd} and~\ref{alg-line:set-Ktild}
is $O(n)$.
It is not hard to verify that the flow network construction also takes $O(n)$ time
so the time complexity of Algorithm~\ref{algr:min-cyc-fin-bar} is determined by the minimal cut algorithm.
Using the max-flow algorithm by Orlin~\cite{orlin2013max},
the time complexity of Algorithm~\ref{algr:min-cyc-fin-bar} becomes $O(n^2)$.

\paragraph{}
In the rest of this section, 
we first explain the bijection $\thG$ returned by \textsc{DualGraphFin},
then prove the correctness of the algorithm.


\subsection{The bijection \texorpdfstring{$\bm{\thG}$}{theta}}
\label{sec:grph-cons-fin}


The vertex set $V(G)$ of $G$ contains vertices 
which correspond to the $(\Dimtop)$-simplices of $\wdtild{K}$.
The set $V(G)$ may also contain an infinite vertex $\phi$ 
if $\wdtild{K}$ contains any boundary $\Dimless$-simplex.
We define a bijection 
\[\thG:\Set{(\Dimtop)\text{-simplices of }\wdtild{K}}\to V(G)\smallsetminus \Set{\phi}\]
such that for any $(\Dimtop)$-simplex $\sG^\Dimtop$ of $\wdtild{K}$,
$\thG(\sG^\Dimtop)$ is the vertex that $\sG^\Dimtop$ is dual to.
Similarly, we define another bijection
\[\thG:\Set{\Dimless\text{-simplices of }\wdtild{K}}\to E(G)\]
using the same notation $\thG$.

Note that we
can take the image of a subset of the domain under a function. 
Therefore, if $(S,T)$ is a cut for a flow network built on $G$, 
then $\thG\inv(\xi(S,T))$ denotes the set of $\Dimless$-simplices dual to the edges across the cut. 
Also note that since simplicial chains with $\Zbb_2$ coefficients can be interpreted as sets, $\thG\inv(\xi(S,T))$ is also a $\Dimless$-chain.

\subsection{Algorithm correctness} \label{sec:corr-fin}
In this subsection, we prove the correctness of Algorithm~\ref{algr:min-cyc-fin-bar}. 
Some of the symbols we use refer to Algorithm~\ref{algr:min-cyc-fin-bar}.

\begin{proposition}
\label{prop:algr-fin-bar-nonempt}
In Algorithm~\ref{algr:min-cyc-fin-bar}, the sink $s_2$ is not an empty set.
\end{proposition}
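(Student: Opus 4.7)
The plan is to argue by contradiction. Suppose $s_2=\emptyset$. Then two things hold simultaneously: (a) $\phi\notin V(G)$, so by the construction of $G$ the complex $\wdtild{K}$ contains no boundary $\Dimless$-simplex; and (b) every $(\Dimtop)$-simplex of $\wdtild{K}$ has index at most $\death$ in $\Fcal$, i.e., $\wdtild{K}$ restricted to dimension $\Dimtop$ lies inside $K_\death$. The goal is to derive from (a) and (b), together with the weak pseudomanifold hypothesis, that $\partial\sG_\death^\Fcal$ is already a $\Dimless$-boundary in $K_{\death-1}$, contradicting the fact that $\sG_\death^\Fcal$ is the negative simplex paired with $\sG_\birth^\Fcal$ in the finite interval $[\birth,\death)\in\persdgm_\Dimless(\Fcal)$.

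The first key step is to transfer the coface count from $K$ to $\wdtild{K}$. Since $\wdtild{K}$ is the closure of the $(\Dimtop)$-connected component $C^\Dimtop$ containing $\sG_\death^\Fcal$, whenever a $\Dimless$-simplex $\tau\in\wdtild{K}$ has two $(\Dimtop)$-cofaces in $K$, both of those cofaces are automatically $(\Dimtop)$-connected via $\tau$ to $C^\Dimtop$ and hence lie in $\wdtild{K}$. Consequently, $\tau$ has the same number of $(\Dimtop)$-cofaces in $\wdtild{K}$ as in $K$. Combined with (a) (which forces the count in $\wdtild{K}$ to be at least two) and the weak pseudomanifold property (which forces it to be at most two), every $\Dimless$-simplex of $\wdtild{K}$ is a face of exactly two $(\Dimtop)$-simplices of $\wdtild{K}$.

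The second step exploits this to build an explicit $(\Dimtop)$-chain whose boundary equals $\partial\sG_\death^\Fcal$. Let $A\in\Chn_\Dimtop(\wdtild{K};\Zbb_2)$ be the sum of all $(\Dimtop)$-simplices of $\wdtild{K}$. By the preceding paragraph each $\Dimless$-simplex of $\wdtild{K}$ appears twice in $\partial A$, so $\partial A=0$. Since $\sG_\death^\Fcal\in\wdtild{K}$, it is a summand of $A$; set $A'=A+\sG_\death^\Fcal$. Then by (b) every summand of $A'$ has index strictly less than $\death$, so $A'\in\Chn_\Dimtop(K_{\death-1};\Zbb_2)$, and $\partial A'=\partial A+\partial\sG_\death^\Fcal=\partial\sG_\death^\Fcal$. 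This exhibits $\partial\sG_\death^\Fcal$ as a $\Dimless$-boundary in $K_{\death-1}$, forcing $\sG_\death^\Fcal$ to be positive, contradicting the hypothesis that it is the negative endpoint of a finite interval in $\persdgm_\Dimless(\Fcal)$.

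The main obstacle I anticipate is the subtlety in the first step: the weak pseudomanifold condition is stated for $K$, whereas (a) is a statement about $\wdtild{K}$, and without the closure-of-a-$(\Dimtop)$-connected-component construction one could only conclude that no $\Dimless$-simplex of $\wdtild{K}$ has exactly one $(\Dimtop)$-coface in $\wdtild{K}$, not that it has exactly two in $K$. The fact that $\wdtild{K}$ carries along both potential cofaces of each of its $\Dimless$-simplices is precisely what makes the choice of $\wdtild{K}$ (rather than an arbitrary subcomplex) essential to the argument.
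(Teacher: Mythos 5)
Your proof is correct and follows essentially the same route as the paper's: assume $s_2=\emptyset$, deduce that every $\Dimless$-simplex of $\wdtild{K}$ has exactly two $(\Dimtop)$-cofaces so that the sum of all $(\Dimtop)$-simplices of $\wdtild{K}$ is a $(\Dimtop)$-cycle created by $\sG_\death^\Fcal$, and conclude that $\sG_\death^\Fcal$ would then be positive, a contradiction. You merely spell out the final step (via the chain $A'=A+\sG_\death^\Fcal$ witnessing that $\partial\sG_\death^\Fcal$ bounds in $K_{\death-1}$) that the paper leaves implicit.
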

\begin{proof}
For contradiction, suppose that $s_2$ is an empty set. Then, $\phi\not\in V(G)$ 
and $\sG_\death^\Fcal$ is the $(\Dimtop)$-simplex of $\wdtild{K}$ with the greatest index in $\Fcal$.
Because $\phi\not\in V(G)$, any $\Dimless$-simplex of $\wdtild{K}$ must be a face of two $(\Dimtop)$-simplices of $\wdtild{K}$, 
so the set of $(\Dimtop)$-simplices of $\wdtild{K}$ forms a $(\Dimtop)$-cycle created by ${\sG_\death^\Fcal}$. 
Then ${\sG_\death^\Fcal}$ must be a positive simplex in $\Fcal$, which is a contradiction.
\end{proof}

The following two propositions specify the duality mentioned at the beginning of this section:

\begin{proposition}
\label{prop:algr-fin-bar-fincut-pers}
For any cut $(S,T)$ of $(G,s_1,s_2)$ with finite capacity, 
the $\Dimless$-chain $\zG=\thG\inv(\xi({S},{T}))$ is a persistent $\Dimless$-cycle of $[\birth,\death)$ 
and $w(\zG)=c(S,T)$.
\end{proposition}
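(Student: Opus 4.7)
The plan is to verify in order the defining properties of a persistent $\Dimless$-cycle for the interval $[\birth,\death)$, together with the weight equality, by identifying $\zG$ with the boundary of the $(\Dimtop)$-chain $A=\thG\inv(S)\subseteq\wdtild{K}$ dual to the source-side vertices. First I would show $\partial A=\zG$ via a simplex-by-simplex comparison exploiting the weak pseudomanifold property: every $\Dimless$-simplex $\sG$ of $\wdtild{K}$ has either two $(\Dimtop)$-cofaces in $\wdtild{K}$ (so $\thG(\sG)$ connects the two dual vertices) or exactly one (so $\thG(\sG)$ connects its dual vertex to $\phi$). Over $\Zbb_2$, $\sG\in\partial A$ iff an odd number of its cofaces lies in $A$, which in both cases is equivalent to $\thG(\sG)$ crossing $(S,T)$, using $\phi\in T$. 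Hence $\zG=\partial A$ is automatically a $\Dimless$-cycle.

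Next, since $c(S,T)<+\infty$, every edge across the cut must itself have finite capacity, meaning its dual $\Dimless$-simplex has index $\leq\birth$ with capacity equal to its weight. Summing over $\xi(S,T)$ yields both $w(\zG)=c(S,T)$ and $\zG\subseteq K_\birth$. To see that $\zG$ is null-homologous in $K_\death$, observe that $s_2$ contains every vertex dual to a $(\Dimtop)$-simplex of index $>\death$, so $A$ avoids every such simplex; moreover $\thG(\sG_\death^\Fcal)\in s_1\subseteq S$ forces $\sG_\death^\Fcal\in A$. Thus $A\subseteq K_\death$ with $\sG_\death^\Fcal\in A$, so $\zG=\partial A$ is a boundary in $K_\death$.

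The main obstacle, and only nontrivial step, is showing $\sG_\birth^\Fcal\in\zG$. I would argue by contradiction: suppose $\sG_\birth^\Fcal\notin\zG$, so $\zG\subseteq K_{\birth-1}$. Writing $A=\sG_\death^\Fcal+A''$ with $A''=A\smallsetminus\{\sG_\death^\Fcal\}\subseteq K_{\death-1}$, we obtain $\zG+\partial(\sG_\death^\Fcal)=\partial A''$, so the classes of $\zG$ and $\partial(\sG_\death^\Fcal)$ agree in $\Hm_\Dimless(K_{\death-1})$. Since $\zG\subseteq K_{\birth-1}$, this common class lies in the image of $\Hm_\Dimless(K_{\birth-1})\to\Hm_\Dimless(K_{\death-1})$. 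This contradicts the standard characterization of the persistence pairing for $[\birth,\death)\in\persdgm_\Dimless(\Fcal)$: the class $[\partial(\sG_\death^\Fcal)]\in\Hm_\Dimless(K_{\death-1})$ lies in the image from $\Hm_\Dimless(K_\birth)$ but \emph{not} in the image from $\Hm_\Dimless(K_{\birth-1})$, as otherwise the bar would have birth strictly before $\birth$. This last fact can be invoked as a standard property of the pairing or justified via the interval-module decomposition of the persistence module of $\Fcal$.
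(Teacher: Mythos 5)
Your proof is correct, and the first half (identifying $\zG$ with $\partial(\thG\inv(S))$ via the coface count of each $\Dimless$-simplex, and reading off $w(\zG)=c(S,T)$ and $\zG\subseteq K_\birth$ from the finiteness of the capacities) matches the paper's argument essentially verbatim. Where you diverge is the crucial step that $\sG_\birth^\Fcal\in\zG$. The paper argues by introducing an auxiliary persistent cycle $\zG'=\partial(A')$ of $[\birth,\death)$ and observing that $\zG+\zG'=\partial(A+A')$ with $A+A'$ created strictly before $\sG_\death^\Fcal$, so that if $\zG$ missed $\sG_\birth^\Fcal$ then $\zG+\zG'$ would be a cycle created by $\sG_\birth^\Fcal$ that bounds before index $\death$, forcing $\sG_\birth^\Fcal$ to be paired too early. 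You instead split $A=\sG_\death^\Fcal+A''$ and compare $[\zG]$ directly with $[\partial(\sG_\death^\Fcal)]$ in $\Hm_\Dimless(K_{\death-1})$, invoking the image characterization of the pairing (the class lies in the image from $\Hm_\Dimless(K_\birth)$ but not from $\Hm_\Dimless(K_{\birth-1})$). Both routes ultimately lean on the same standard property of the persistence pairing, but yours avoids the auxiliary cycle $\zG'$ (whose existence the paper takes for granted) and handles the degenerate case $\zG=\emptyset$ without a separate appeal to the non-emptiness of the sink, at the cost of invoking the interval-decomposition characterization rather than the more elementary ``already paired'' phrasing. One loose end: the paper also verifies that $\zG$ is \emph{not} a boundary in $K_{\death-1}$, which you do not state explicitly; your identity $\zG+\partial(\sG_\death^\Fcal)=\partial(A'')$ with $A''\subseteq K_{\death-1}$ gives this immediately, since $\sG_\death^\Fcal$ is negative and hence $[\partial(\sG_\death^\Fcal)]\neq 0$ in $\Hm_\Dimless(K_{\death-1})$, but you should say so to fully match the definition of a persistent cycle.
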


\begin{proof}
Let $A=\thG\inv(S)$, we first want to prove $\zG=\partial(A)$,
so that $\zG$ is a cycle. 
Let $\sigma^{\Dimless}$ be any $\Dimless$-simplex of $\zG$, then $\thG(\sigma^{\Dimless})$ connects a vertex $u\in S$ and a vertex $v\in T$.
If $v=\phi$, 
then $\sigma^{\Dimless}$ cannot be a face of another $(\Dimtop)$-simplex in $K$ other than $\thG\inv(u)$.
So, $\sigma^{\Dimless}$ is a face of exactly one $(\Dimtop)$-simplex of $A$.
If $v\neq \phi$, then $\sigma^{\Dimless}$ is also a face of exactly one $(\Dimtop)$-simplex of $A$.  
Therefore, $\sigma^{\Dimless}\in\partial(A)$. 
On the other hand, let $\sigma^{\Dimless}$ be any $\Dimless$-simplex of $\partial(A)$, 
then $\sigma^{\Dimless}$ is a face of exactly one $(\Dimtop)$-simplex $\sG_0^\Dimtop$ of $A$.
If $\sigma^{\Dimless}$ is a face of another $(\Dimtop)$-simplex $\sG_1^\Dimtop$ in $K$, 
then $\sG^\Dimtop_1\in \wdtild{K}$ and $\sG^\Dimtop_1\not\in A$. 
So, $\thG(\sG^{\Dimless})$ connects the vertex $\thG(\sG^\Dimtop_0)\in S$ and the vertex $\thG(\sG^\Dimtop_1)\in T$ in the graph $G$.
If $\sigma^{\Dimless}$ is a face of exactly one $(\Dimtop)$-simplex in $K$, 
$\thG(\sG^{\Dimless})$ must connect $\thG(\sG^\Dimtop_0)\in S$ and $\phi\in T$ in $G$.
So we have $\thG(\sG^{\Dimless})\in\xi(S,T)$, i.e., $\sG^{\Dimless}\in\thG\inv(\xi(S,T))$.

We then show that $\zG$ is created by $\sG_\birth^\Fcal$. 
By Proposition~\ref{prop:algr-fin-bar-nonempt}, $\zG$ cannot be empty.
Therefore, for contradiction, we can
suppose that $\zG$ is created by a $\Dimless$-simplex $\sG^{\Dimless}\neq \sG_\birth^\Fcal$.
Because $c(S,T)$ has finite capacity,
we have that $\ind(\sG^{\Dimless})<\birth$.
We can let $\zG'$ be a persistent cycle of $[\birth,\death)$ and $\zG'=\partial(A')$ where $A'$ is a $(\Dimtop)$-chain of $K_\death$. Then we have $\zG+\zG'=\partial(A+A')$. 
Since $A$ and $A'$ are both created by $\sG_\death^\Fcal$, then $A+A'$ is created by a $(\Dimtop)$-simplex with an index less than $\death$ in $\Fcal$. 
So $\zG+\zG'$ is a $\Dimless$-cycle created by $\sG_\birth^\Fcal$ 
which becomes a boundary before $\sG_\death^\Fcal$ is added. 
This means that 
$\sG_\birth^\Fcal$ is already paired when $\sG_\death^\Fcal$ is added, 
contradicting the fact that $\sG_\birth^\Fcal$ is paired with $\sG_\death^\Fcal$.
Similarly, we can prove that $\zG$ is not a boundary until $\sG_\death^\Fcal$ is added, so $\zG$ is a persistent cycle of $[\birth,\death)$. Since $(S,T)$ has finite capacity, we must have
\[
c(S,T)=\sum_{e\in \thG(\zG)}c(e)=\sum_{\thG\inv(e)\in \zG}w(\thG\inv(e))=w(\zG)
\qedhere
\]
\end{proof}

\begin{proposition}
\label{prop:algr-fin-bar-pers-fincut}
For any persistent $\Dimless$-cycle $\zG$ of $[\birth,\death)$, 
there exists a cut $(S,T)$ of $(G,s_1,s_2)$ such that $c(S,T)\leq w(\zG)$.
\end{proposition}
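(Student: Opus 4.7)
The plan is to start from a persistent cycle $\zG$ and its witnessing $(d{+}1)$-chain, restrict that chain to $\wdtild{K}$, and take the dual image as the source side of the cut. Concretely, let $A$ be a $(d{+}1)$-chain in $K_\death$ containing $\sG_\death^\Fcal$ with $\partial A=\zG$ (such $A$ exists by the definition of being killed at index $\death$). Set $A'=A\cap\wdtild{K}$, $S=\thG(A')$, and $T=V(G)\setminus S$. Since $\sG_\death^\Fcal\in\wdtild{K}$, we have $\sG_\death^\Fcal\in A'$, so $s_1\subseteq S$; since every simplex of $A$ has index $\le\death$, no vertex dual to a $(d{+}1)$-simplex of index $>\death$ lies in $S$, and $\phi\notin\thG(A')$, so $s_2\subseteq T$. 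Thus $(S,T)$ is a legitimate cut.

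Next I identify $\xi(S,T)$. By the same parity argument used in Proposition~\ref{prop:algr-fin-bar-fincut-pers}, an edge $\thG(\tau)$ lies in $\xi(S,T)$ exactly when the number of cofaces of $\tau$ in $\wdtild{K}$ that belong to $A'$ is odd, i.e.\ when $\tau\in\partial A'$. Hence $\thG\inv(\xi(S,T))=\partial A'$ and
\[
c(S,T)\;=\;\sum_{\tau\in\partial A'} c(\thG(\tau)).
\]

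The crucial step is the inclusion $\partial A'\subseteq\zG$. Split $A=A'+A''$ with $A''=A\setminus\wdtild{K}$, so $\partial A'=\zG+\partial A''$. The key observation from the weak $(d{+}1)$-pseudomanifold structure is that any $d$-simplex $\tau\in\wdtild{K}$ has \emph{all} of its $K$-cofaces inside $\wdtild{K}$: the at-most-two cofaces of $\tau$ share the $d$-face $\tau$, so they are $(d{+}1)$-connected and therefore lie in the same $(d{+}1)$-connected component as the coface of $\tau$ that already witnesses $\tau\in\wdtild{K}$. Consequently, no $d$-simplex of $\wdtild{K}$ is a face of any simplex of $A''\subseteq K\setminus\wdtild{K}$, so $\supp(\partial A'')\cap\wdtild{K}=\emptyset$. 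On the other hand $\supp(\partial A')\subseteq\wdtild{K}$. Looking at the equality $\partial A'=\zG+\partial A''$ inside and outside $\wdtild{K}$ separately forces $\partial A'=\zG\cap\wdtild{K}$, and in particular $\partial A'\subseteq\zG$.

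Finally, every $\tau\in\partial A'$ lies in $\zG\subseteq K_\birth$, so $\ind(\tau)\le\birth$ and the capacity rule gives $c(\thG(\tau))=w(\tau)$. Therefore
\[
c(S,T)\;=\;\sum_{\tau\in\partial A'} w(\tau)\;=\;w(\partial A')\;\le\;w(\zG),
\]
which is the desired bound. The only genuinely delicate step is the third paragraph: one has to exploit both the weak pseudomanifold hypothesis and the definition of $\wdtild{K}$ as a full $(d{+}1)$-connected component to conclude that the ``leak'' $\partial A''$ created by truncating $A$ to $\wdtild{K}$ cannot contaminate $\wdtild{K}$, so that $\partial A'$ sits inside the original cycle $\zG$ rather than introducing new $d$-simplices.
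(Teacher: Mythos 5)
Your proposal is correct and follows essentially the same route as the paper: restrict the witnessing $(d{+}1)$-chain $A$ to $A'=A\cap\wdtild{K}$, take $S=\thG(A')$, identify $\thG\inv(\xi(S,T))$ with $\partial A'$, and show $\partial A'=\zG\cap\wdtild{K}\subseteq\zG$ so that the capacity is $w(\zG\cap\wdtild{K})\leq w(\zG)$. The only (cosmetic) difference is that you establish $\partial A'=\zG\cap\wdtild{K}$ via the decomposition $A=A'+A''$ and a support-separation argument, whereas the paper does an element-wise double inclusion; both rest on the same key fact that all $(d{+}1)$-cofaces of a $d$-simplex of $\wdtild{K}$ lie in $\wdtild{K}$ by $(d{+}1)$-connectivity.
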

\begin{proof}
Let $A$ be a $(\Dimtop)$-chain in $K_\death$ such that $\zG=\partial(A)$. 
Note that $A$ is created by $\sigma_\death^\Fcal$ 
and $\zG$ is the set of $\Dimless$-simplices 
which are face of exactly one $(\Dimtop)$-simplex of $A$. 
Let $\zG'=\zG\intersect\wdtild{K}$ and
$A'=A\intersect\wdtild{K}$,
we claim that $\zG'=\partial(A')$.
To prove this, 
first let $\sG^{\Dimless}$ be any $\Dimless$-simplex of $\zG'$, 
then $\sG^{\Dimless}$ is a face of exactly one $(\Dimtop)$-simplex $\sG^\Dimtop$ of $A$.
Since $\sG^{\Dimless}\in \wdtild{K}$, 
it is also true that $\sG^\Dimtop\in \wdtild{K}$, 
so $\sG^\Dimtop\in A'$.
Then $\sG^{\Dimless}$ is a face of exactly one $(\Dimtop)$-simplex of $A'$, so $\sG^{\Dimless}\in\partial(A')$.
On the other hand, 
let $\sG^{\Dimless}$ be any $\Dimless$-simplex of $\partial(A')$, 
then $\sG^{\Dimless}$ is a face of exactly one $(\Dimtop)$-simplex $\sG_0^\Dimtop$ of $A'$.
Note that $\sG_0^\Dimtop\in A$ 
and we then want to prove that $\sG^{\Dimless}$ is a face of exactly one $(\Dimtop)$-simplex $\sG_0^\Dimtop$ of $A$.
Suppose that $\sG^{\Dimless}$ is a face of another $(\Dimtop)$-simplex $\sG^\Dimtop_1$ of $A$, 
then $\sG^\Dimtop_1\in \wdtild{K}$ because $\sG_0^\Dimtop\in \wdtild{K}$. 
So we have $\sG^\Dimtop_1\in A\intersect\wdtild{K}=A'$, 
contradicting the fact that $\sG^{\Dimless}$ is a face of exactly one $(\Dimtop)$-simplex of $A'$.
Then we have $\sG^{\Dimless}\in\partial(A)$. 
Since $\sG_0^\Dimtop\in \wdtild{K}$, we have $\sG^{\Dimless}\in \wdtild{K}$, 
which means that $\sG^{\Dimless}\in\zG'$.

Let $S=\thG(A')$ and $T=V(G)\smallsetminus S$, 
then it is true that $(S,T)$ is a cut of $(G,s_1,s_2)$ 
because $A'$ is created by $\sG_\death^\Fcal$.
We claim that $\thG\inv(\xi(S,T))=\partial(A')$.
The proof of the equality is similar to the one in the proof of Proposition~\ref{prop:algr-fin-bar-fincut-pers}.
It follows that $\xi(S,T)=\thG(\zG')$.
We then have that
\[c(S,T)=\sum_{e\in \thG(\zG')}c(e)=\sum_{\thG\inv(e)\in \zG'}w(\thG\inv(e))=w(\zG')\]
because each $\Dimless$-simplex of $\zG'$ has an index less than or equal to $\birth$ in $\Fcal$.

Finally, because $\zG'$ is a subchain of $\zG$, we must have $c(S,T)=w(\zG')\leq w(\zG)$.
\end{proof}

Combining the above facts, we can conclude:

\begin{theorem}
Algorithm~\ref{algr:min-cyc-fin-bar} computes a minimal persistent $\Dimless$-cycle 
for the given interval $[\birth,\death)$.
\end{theorem}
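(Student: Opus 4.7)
The plan is to derive the theorem as a direct consequence of Propositions~\ref{prop:algr-fin-bar-nonempt}, \ref{prop:algr-fin-bar-fincut-pers}, and~\ref{prop:algr-fin-bar-pers-fincut}. The overall strategy is a standard duality argument: use Proposition~\ref{prop:algr-fin-bar-fincut-pers} to certify that the chain returned in line~\ref{alg-line:return-fin} is indeed a persistent cycle of $[\birth,\death)$ whose weight equals the value of the minimum cut, and use Proposition~\ref{prop:algr-fin-bar-pers-fincut} to certify that no persistent cycle can have smaller weight.

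First I would verify that the minimum cut $(S^*, T^*)$ produced in line~\ref{alg-line:min-cut-fin} has \emph{finite} capacity, so that Proposition~\ref{prop:algr-fin-bar-fincut-pers} applies. Since $[\birth,\death)\in\persdgm_\Dimless(\Fcal)$, there exists at least one persistent $\Dimless$-cycle $\zG_0$ of $[\birth,\death)$, and since $K$ is finite, $w(\zG_0)<+\infty$. Proposition~\ref{prop:algr-fin-bar-pers-fincut} then provides a cut $(S_0,T_0)$ of $(G,s_1,s_2)$ with $c(S_0,T_0)\leq w(\zG_0)<+\infty$, so the minimum capacity is finite and hence $c(S^*,T^*)<+\infty$. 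Note that Proposition~\ref{prop:algr-fin-bar-nonempt} ensures $s_2\neq\emptyset$, so the flow network and its min-cut are well-defined.

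Next, applying Proposition~\ref{prop:algr-fin-bar-fincut-pers} to $(S^*,T^*)$ shows that $\zG^*:=\thG\inv(\xi(S^*,T^*))$, which is exactly the chain returned in line~\ref{alg-line:return-fin}, is a persistent $\Dimless$-cycle of $[\birth,\death)$ with $w(\zG^*)=c(S^*,T^*)$. To finish, let $\zG$ be any persistent $\Dimless$-cycle of $[\birth,\death)$. By Proposition~\ref{prop:algr-fin-bar-pers-fincut} there is a cut $(S,T)$ with $c(S,T)\leq w(\zG)$, and by minimality of $(S^*,T^*)$ we obtain
\[
w(\zG^*)\;=\;c(S^*,T^*)\;\leq\;c(S,T)\;\leq\;w(\zG),
\]
proving that $\zG^*$ has minimum weight among all persistent $\Dimless$-cycles of $[\birth,\death)$.

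There is essentially no obstacle: the heavy lifting was done in Propositions~\ref{prop:algr-fin-bar-fincut-pers} and~\ref{prop:algr-fin-bar-pers-fincut}, where the chain/cut correspondence was established and the creation/killing indices were matched to the endpoints $\birth$ and $\death$. The only subtlety worth flagging explicitly in the write-up is the finiteness of $c(S^*,T^*)$, since Proposition~\ref{prop:algr-fin-bar-fincut-pers} is stated only for cuts of finite capacity; this is why I invoke Proposition~\ref{prop:algr-fin-bar-pers-fincut} at the very beginning as an existence witness.
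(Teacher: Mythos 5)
Your proposal is correct and follows essentially the same route as the paper's proof: establish finiteness of the min-cut capacity via the existence of a persistent cycle together with Proposition~\ref{prop:algr-fin-bar-pers-fincut}, apply Proposition~\ref{prop:algr-fin-bar-fincut-pers} to certify the output is a persistent cycle, and conclude optimality by combining the two propositions (you argue directly where the paper argues by contradiction, but the content is identical). Your explicit flagging of the finiteness subtlety matches the paper's reasoning exactly.
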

\begin{proof}
First, the flow network $(G,s_1,s_2)$ constructed 
by Algorithm~\ref{algr:min-cyc-fin-bar} must be valid by Proposition~\ref{prop:algr-fin-bar-nonempt}.
Next, because the interval $[\birth,\death)$ must have a persistent cycle, 
by Proposition~\ref{prop:algr-fin-bar-pers-fincut}, 
the flow network $(G,s_1,s_2)$ has a cut with finite capacity.
This means that $c(S^*,T^*)$ is finite.
By Proposition~\ref{prop:algr-fin-bar-fincut-pers},
the chain $\zG^*=\thG\inv(\xi(S^*,T^*))$ is a persistent cycle of $[\birth,\death)$.
Assume that $\zG^*$ is not a minimal persistent cycle of $[\birth,\death)$ and 
instead let $\zG'$ be a minimal persistent cycle of $[\birth,\death)$.
Then there exists a cut $(S',T')$ such that $c(S',T')\leq w(\zG')<w(\zG^*)=c(S^*,T^*)$ 
by Proposition~\ref{prop:algr-fin-bar-fincut-pers} and~\ref{prop:algr-fin-bar-pers-fincut}, 
contradicting the fact that $(S^*,T^*)$ is a minimal cut.
\end{proof}

\section{Minimal persistent $\Dimless$-cycles of infinite intervals for weak $(\Dimtop)$-pseudomanifolds embedded in $\real^\Dimtop$}
\label{sec:inf-alg}
We already mentioned that computing minimal persistent $\Dimless$-cycles ($\Dimless\geq 2$) 
for infinite intervals is NP-hard even if we restrict to weak $(d+1)$-pseudomanifolds
(see Section~\ref{sec:inf-hard} for a proof). 
However, when the complex is embedded in $\real^\Dimtop$, the problem becomes polynomially tractable.
In this section, we present an algorithm for this problem
in $\Dimless\geq 1$\footnote{
  As mentioned earlier, when $\Dimless=1$, this problem is polynomially tractable for arbitrary complexes.}.
The algorithm uses a similar duality described in Section~\ref{sec:fin-alg}.
However, a direct use of the approach in Section~\ref{sec:fin-alg} does not work.
For example, in Figure~\ref{fig:dual-graph},
1-simplices that
do not have any 2-cofaces cannot reside in any $2$-connected component of the given complex.
Hence, no cut in the flow network may correspond to a persistent cycle of the infinite interval 
created by such a $1$-simplex. Furthermore, unlike the finite interval case, we do not have
a negative simplex whose dual can act as a source in the flow network.

\begin{figure}
\centering
  \subfloat[]{\includegraphics[width=0.35\linewidth]{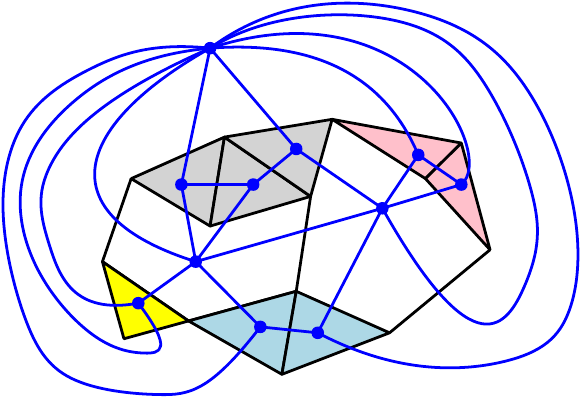}\label{fig:dual-graph}}
  \hspace{3em}
  \subfloat[]{\includegraphics[width=0.16\linewidth]{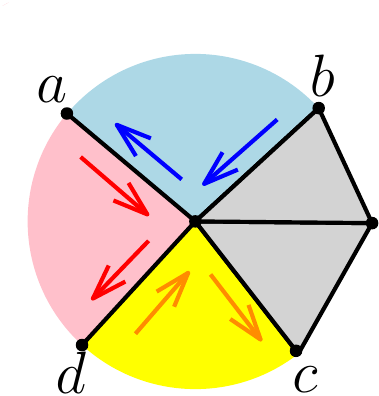}\label{fig:edge-pair}}

   \caption{
   (a) A weak 2-pseudomanifold $\wdtild{K}$ embedded in $\real^2$ with three voids. 
   Its dual graph is drawn in blue. 
   The complex has one 1-connected component and four 2-connected components 
   with the 2-simplices in different 2-connected components colored differently.
   (b) An example illustrating the pairing of boundary $\Dimless$-simplices
   in the neighborhood of a $(\Dimlsls)$-simplex for $\Dimless=1$.
   The four boundary 1-simplices produce
   six oriented boundary 1-simplices
   and the paired oriented 1-simplices are colored the same.}
\end{figure}

Let $(K,\Fcal,[\birth,+\infty))$ be an input to the problem
where $K$ is a weak $(\Dimtop)$-pseudomanifold embedded in $\real^\Dimtop$,
$\Fcal:K_0\subseteq K_1\subseteq\ldots\subseteq K_n$ is a filtration of $K$,
and $[\birth,+\infty)$ is an infinite interval of $\persdgm_\Dimless(\Fcal)$.
By the definition of the problem,
the task boils down to computing a minimal $\Dimless$-cycle containing $\sG_\birth^\Fcal$ in $K_\birth$.
Note that $K_\birth$ is also a weak $(\Dimtop)$-pseudomanifold embedded in $\real^\Dimtop$.

Generically, assume $\wdtild{K}$ is an arbitrary weak $(\Dimtop)$-pseudomanifold embedded in $\real^\Dimtop$
and we want to compute a minimal $\Dimless$-cycle containing 
a $\Dimless$-simplex $\wdtild{\sG}$ for $\wdtild{K}$.
By the embedding assumption,
the connected components of $\real^\Dimtop\smallsetminus|\wdtild{K}|$ are well defined
and we call them the {\it voids} of $\real^\Dimtop\smallsetminus|\wdtild{K}|$. 
The complex $\wdtild{K}$ has a natural (undirected) dual graph structure 
as exemplified by Figure~\ref{fig:dual-graph} for $\Dimless=1$,
where the graph vertices are dual to the $(\Dimtop)$-simplices as well as the voids
and the graph edges are dual to the $\Dimless$-simplices. 
The duality between cycles and cuts is as follows:
Since the ambient space $\real^\Dimtop$ is contractible (homotopy equivalent to a point), 
every $\Dimless$-cycle in $\wdtild{K}$ is the boundary 
of a $(\Dimtop)$-dimensional region obtained by point-wise union of
certain $(\Dimtop)$-simplices and/or voids.
We can derive a cut\footnote{
  The cut here is defined on a graph without sources and sinks, 
  so the cut is simply a partition of the vertex set into two sets.
} of the dual graph 
by putting all vertices contained in the $(\Dimtop)$-dimensional region into one vertex set
and putting the rest into the other vertex set.
On the other hand, 
for every cut of the graph,
we can take the point-wise union of all the $(\Dimtop)$-simplices and voids
dual to the graph vertices in one set of the cut
and derive a $(\Dimtop)$-dimensional region.
The boundary of the derived $(\Dimtop)$-dimensional region  
is then a $\Dimless$-cycle in $\wdtild{K}$. 
We observe that by making the source and sink dual to 
the two $(\Dimtop)$-simplices or voids that $\wdtild{\sG}$ adjoins,
we can build a flow network where a minimal cut produces 
a minimal $\Dimless$-cycle in $\wdtild{K}$ containing $\wdtild{\sG}$.

The efficiency of the above algorithm is in part determined by 
the efficiency of the dual graph construction.
This step requires identifying the voids
that the boundary $\Dimless$-simplices are incident on.
A straightforward approach would be to first group the boundary $\Dimless$-simplices 
into $\Dimless$-cycles by local geometry, 
and then build the nesting structure of these $\Dimless$-cycles 
to correctly reconstruct the boundaries of the voids.
This approach has a quadratic worst-case complexity.
To make the void boundary reconstruction faster,
we assume that the simplicial complex being worked on is $\Dimless$-connected
so that building the nesting structure is not needed.
Our reconstruction then runs in almost linear time.
To satisfy the $\Dimless$-connected assumption,
we begin our algorithm by
taking $\wdtild{K}$ as a $\Dimless$-connected subcomplex of $K_\birth$ containing $\sG_\birth^\Fcal$
and continue only with this $\wdtild{K}$.
The computed output is still correct because
the minimal cycle in $\wdtild{K}$ is again a minimal cycle in $K_\birth$
as shown in Section~\ref{sec:corr-inf}.

\begin{algorithm}
\caption{Computing minimal persistent $\Dimless$-cycles of infinite intervals for weak $(\Dimtop)$-pseudomanifolds embedded in $\real^\Dimtop$}
\vspace{1ex}
\textbf{Input:} 

\hspace{7pt}$K$: finite $\Dimless$-weighted weak $(\Dimtop)$-pseudomanifold embedded in $\real^\Dimtop$

\hspace{7pt}{$\Dimless$: integer $\geq 1$}

\hspace{7pt}$\Fcal$: filtration $K_0\subseteq K_1\subseteq\ldots\subseteq K_n$ of $K$

\hspace{7pt}$[\birth,+\infty)$: infinite interval of $\persdgm_{\Dimless}(\Fcal)$

\textbf{Output:} 

\hspace{7pt}minimal persistent $\Dimless$-cycle of $[\birth,+\infty)$

\vspace{1ex}

\begin{algorithmic}[1]
\Procedure{MinPersCycInf}{$K,\Dimless,\Fcal,[\birth,+\infty)$}
\LineComment{\textsf{set up the complex $\wdtild{K}$ being worked on}}
\State $K'_\birth\leftarrow\textsc{Prune}(K_\birth,\Dimless)$ \label{alg-line:prune}
\State $C_\birth\leftarrow$ $\Dimless$-connected component of $K'_\birth$ containing $\sG_\birth^\Fcal$
\label{alg-line:conn-inf}
\State $\SG^\Dimtop\leftarrow \{\sG\in K'_\birth\,|\, \sG\text{ is a }(\Dimtop)\text{-simplex}$
$\text{and all }\Dimless\text{-faces of }\sG\text{ are in }C_\birth\}$

\label{alg-line:add-top-simp}
\State $\wdtild{K}\leftarrow($closure of the simplicial set $C_\birth)\union\SG^\Dimtop$ \label{alg-line:closure-inf}
\LineComment{\textsf{construct dual graph}}
\State $(\vec{\zG}_1,\ldots,\vec{\zG}_k)\leftarrow\textsc{VoidBoundary}(\wdtild{K},\Dimless)$ 
 \label{alg-line:void-bound}
\State $(G,\thG)\leftarrow\textsc{DualGraphInf}(\wdtild{K},\Dimless,\vec{\zG}_1,\ldots,\vec{\zG}_k)$\label{alg-line:dual-graph-inf}

\LineComment{\textsf{assign capacity to $G$}}
\ForEach{$e\in E(G)$} \label{alg-line:assgn-cap-for}
\State $c(e)\leftarrow w(\thG\inv(e))$
\EndFor

\State $(v_1,v_2)\leftarrow$ end vertices of edge $\thG(\sG_\birth^\Fcal)$ in $G$
\LineComment{\textsf{set the source}}
\State $s_1\leftarrow\Set{v_1}$
\LineComment{\textsf{set the sink}}
\State $s_2\leftarrow\Set{v_2}$ \label{alg-line:sink-inf}
\State $(S^*,T^*)\leftarrow$ min-cut of $(G,s_1,s_2)$ \label{alg-line:min-cut-inf}
\State \Return $\thG\inv(\xi(S^*,T^*))$\label{alg-line:return-inf}
\EndProcedure
\end{algorithmic}
\label{algr:min-cyc-inf-bar}
\end{algorithm}

We list the pseudo-code in Algorithm~\ref{algr:min-cyc-inf-bar}
and it works as follows:
Line~\ref{alg-line:prune}$-$\ref{alg-line:closure-inf} set up the complex $\wdtild{K}$ that the algorithm works on.
Line~\ref{alg-line:prune} prunes $K_\birth$ to produce a complex $K_\birth'$.
Given $(K_\birth,\Dimless)$, 
the \textsc{Prune} subroutine iteratively deletes a $\Dimless$-simplex $\sG^{\Dimless}$ of $K_\birth$
such that there is a $(\Dimless-1)$-face of $\sG^{\Dimless}$ having $\sG^{\Dimless}$ as the only  $\Dimless$-coface
(i.e., $\sG^{\Dimless}$ is a dangled $\Dimless$-simplex), 
until no such $\Dimless$-simplex can be found.
It is not hard to verify that \textsc{Prune} only deletes
$\Dimless$-simplices not residing in any $\Dimless$-cycles,
so a minimal $d$-cycle containing $\sG_\birth^\Fcal$ is never deleted.
We perform the pruning because 
it can reduce the graph size for the minimal cut computation 
which is more time consuming.
In line~\ref{alg-line:conn-inf}$-$\ref{alg-line:closure-inf},
we take the $\Dimless$-connected component $C_\birth$ of $K_\birth'$
containing $\sG_\birth^\Fcal$ and add a set $\SG^\Dimtop$ of 
$(\Dimtop)$-simplices to the closure of $C_\birth$ to form $\wdtild{K}$.
The set $\SG^\Dimtop$ contains all $(\Dimtop)$-simplices of $K_\birth'$ 
whose $\Dimless$-faces reside in~$C_\birth$.
The reason of adding the set $\SG^\Dimtop$ is
to reduce the number of voids for the complex $\wdtild{K}$
and in turn reduce the running time of the subsequent void boundary reconstruction.
For example, in Figure~\ref{fig:void_bd_recon2}, 
we could treat the entire complex as $K_\birth'$,
all 1-simplices as $C_\birth$,
and all 2-simplices as $\SG^\Dimtop$.
If we do not add $\SG^\Dimtop$ to the closure of $C_\birth$, 
there will be seven more voids corresponding to the seven 2-simplices.
Line~\ref{alg-line:void-bound} reconstructs the void boundaries for $\wdtild{K}$.
Each returned $\vec{\zG}_j$ denotes 
a set of $\Dimless$-simplices forming the boundary of a void.
As indicated in Section~\ref{sec:void-bound-recon},
the $\Dimless$-simplices in a void boundary are oriented.
Line~\ref{alg-line:dual-graph-inf} constructs the dual graph $G$ based on the reconstructed void boundaries.
Similar to Algorithm~\ref{algr:min-cyc-fin-bar},
the function $\thG$ returned by \textsc{DualGraphInf}
denotes the bijection from $\Dimless$-simplices of $\wdtild{K}$ to $E(G)$.
Line~\ref{alg-line:assgn-cap-for}$-$\ref{alg-line:sink-inf} 
build the flow network on top of $G$.
The capacity of each edge is equal to the weight of its dual $\Dimless$-simplex
and the source and sink are selected as previously described.
Line~\ref{alg-line:min-cut-inf} 
computes a minimal cut for the flow network
and line~\ref{alg-line:return-inf} returns the $\Dimless$-chain 
dual to the edges across the minimal cut.

\paragraph{Complexity.}
We make the same assumptions as in the complexity analysis for Algorithm~\ref{algr:min-cyc-fin-bar}.
Since the void boundary reconstruction needs to sort the $\Dimless$-cofaces of certain $(\Dimlsls)$-simplices,
its worst-case time complexity is $O(n\log n)$.
Then, all operations other than the minimal cut computation take $O(n\log n)$ time.
Therefore, similar to Algorithm~\ref{algr:min-cyc-fin-bar},
Algorithm~\ref{algr:min-cyc-inf-bar} achieves a complexity of $O(n^2)$
by using Orlin's max-flow algorithm~\cite{orlin2013max}.

\paragraph{}
In the rest of this section, 
we first describe the subroutine \textsc{VoidBoundary} 
invoked by Algorithm~\ref{algr:min-cyc-inf-bar} and then
prove the correctness of the algorithm.

\subsection{Void boundary reconstruction}\label{sec:void-bound-recon}

\begin{figure}
\centering
  \subfloat[]{\includegraphics[width=0.25\linewidth]{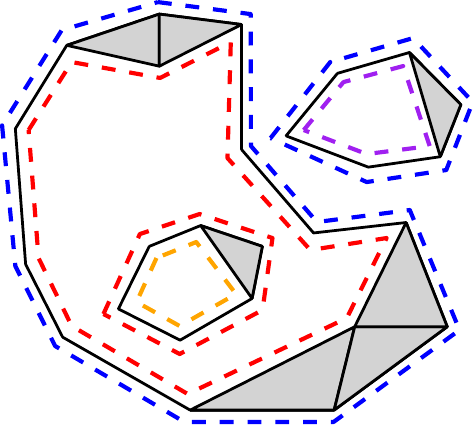}\label{fig:void_bd_recon1}}
  \hspace{6em}
  \subfloat[]{\includegraphics[width=0.25\linewidth]{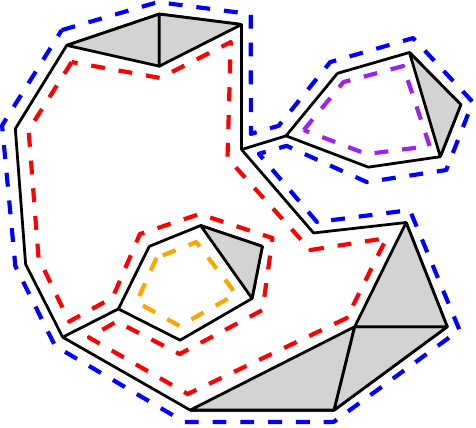}\label{fig:void_bd_recon2}}

   \caption{
   Examples showing how the void boundaries are reconstructed for $\Dimless=1$.
   (a) Oriented boundary $1$-simplices (drawn as dashed edges)
      of a simplicial complex
   are grouped into six 1-cycles
   and these six 1-cycles are further grouped into four void boundaries
   with each void boundary identically colored.
   (b) With the complex being 1-connected,
   the four grouped 1-cycles are exactly the boundaries of the four voids.
   }
   \label{fig:void_bd_recon}
\end{figure}

As previously stated, the object of the reconstruction is to identify which voids 
a boundary $\Dimless$-simplex of $\wdtild{K}$ is incident on. 
The task becomes complicated because a void may have disconnected boundaries 
and a $\Dimless$-simplex may bound more than one void. 
This is exemplified in Figure~\ref{fig:void_bd_recon1}.
To address this issue,
we orient the boundary $\Dimless$-simplices and determine the orientations
consistently from the voids they bound. 
This is possible
because an orientation of a $\Dimless$-simplex in $\real^\Dimtop$
associates exactly one of its two sides to the $\Dimless$-simplex.
To reconstruct the boundaries,
we first inspect the neighborhood of each $(\Dimlsls)$-simplex 
being a face of a boundary $\Dimless$-simplex
and pair the oriented boundary $\Dimless$-simplices in the neighborhood 
which locally bound the same void.
Figure~\ref{fig:edge-pair} gives an example of the oriented boundary $\Dimless$-simplices pairing
for $\Dimless=1$.
In Figure~\ref{fig:edge-pair},
there are three local voids each colored differently.
The oriented 1-simplices with the same color bound the same void and are paired.

After pairing the oriented boundary $\Dimless$-simplices, 
we group them by putting paired ones into the same group.
Each group then forms a $\Dimless$-cycle (with $\Zbb$ coefficients).
This is exemplified by Figure~\ref{fig:void_bd_recon} for $\Dimless=1$.
Note that in general,
the above grouping does not fully reconstruct the void boundaries.
This can be seen from Figure~\ref{fig:void_bd_recon1}
where the complex has four voids
but the grouping produces six 1-cycles.
In order to fully reconstruct the boundaries,
one has to retrieve the nesting structure of these $\Dimless$-cycles,
which may take $\OG(n^2)$ time in the worst-case. 
However, as we work on
a complex $\wdtild{K}$ that is $\Dimless$-connected, 
we cannot have voids with disconnected boundaries. 
Therefore, the 
grouping of oriented $\Dimless$-simplices
can fully recover the void boundaries.
Figure~\ref{fig:void_bd_recon2} gives an example for this when $\Dimless=1$,
where we add two 1-simplices to make the complex 1-connected.
The four 1-cycles produced by the grouping 
are exactly the boundaries of the four voids.

In the rest of this subsection,
we formalize the above ideas for reconstructing void boundaries
and provide a proof for the correctness.
Throughout this subsection, 
$\wdtild{K}$ and $\Dimless$ are as defined in Algorithm~\ref{algr:min-cyc-inf-bar}.
We first introduce the definition of the natural orientation of a $\diml$-simplex in $\real^\diml$.
We use its induced orientation to canonically orient the boundary simplices.

\begin{Definition}[Natural orientation~\cite{lee2010introduction}]
Let $\diml>1$ and ${\sG}=\Set{v_0,\ldots,v_\diml}$ 
be a $\diml$-simplex in $\real^\diml$,
an oriented simplex $\vec{\sG}=[v'_0,\ldots,v_\diml']$ of ${\sG}$ 
is {\it naturally oriented} if $\det(v'_1-v'_0,\ldots,v'_\diml-v'_0)>0$.
For each face $\sG'$ of $\sG$, 
the natural orientation of $\sG$ induces an orientation of $\sG'$ 
which we term as the {\it induced orientation}.
\end{Definition}

We now formally define the boundary of a void as follows:

\begin{Definition}[Boundary of void]\label{dfn:void}
Let $K$ be a simplicial complex embedded in $\real^\diml$ where $\diml\geq 2$,
an oriented $(\diml-1)$-simplex $\vec{\sG}^{\diml-1}=[v_0,\ldots,v_{\diml-1}]$ 
of ${K}$ is said to {\it bound} a void $\Vcal$ of $\real^\diml\smallsetminus|{K}|$ 
if the following conditions are satisfied:
\begin{itemize}
    \item The simplex ${\sG}^{\diml-1}=\Set{v_0,\ldots,v_{\diml-1}}$ is contained in the closure of $\Vcal$.
    \item Let $u$ be an interior point of ${\sG}^{\diml-1}=\Set{v_0,\ldots,v_{\diml-1}}$, 
    $v$ be a point in $\Vcal$ such that the line segment $\bar{uv}$ is contained in $\Vcal$ 
    and $\bar{uv}$ is orthogonal to the hyperplane spanned by ${\sG}^{\diml-1}$.
    Furthermore, let $\vec{\sG}^{\diml}$ be the naturally oriented simplex of $\Set{v,v_0,\ldots,v_{\diml-1}}$. 
    Then, $\vec{\sG}^{\diml-1}$ has the induced orientation from $\vec{\sG}^{\diml}$.
\end{itemize}
The {\it boundary} of a void $\Vcal$ is then defined as 
the set of oriented $(\diml-1)$-simplices of ${K}$ bounding $\Vcal$.
\end{Definition}

\begin{Remark}
We can also interpret the boundary of a void as a sum of oriented $(\diml-1)$-simplices, 
then the boundary defines a $(\diml-1)$-cycle (with $\Zbb$ coefficients).
\end{Remark}

We now describe the pairing algorithm of the oriented boundary $\Dimless$-simplices for $\wdtild{K}$.
From now on, we denote the set of boundary $\Dimless$-simplices of $\wdtild{K}$ as $\bd(\wdtild{K})$.
Let $\sG^{\Dimlsls}$ be a $(\Dimlsls)$-simplex 
which is a face of a $\Dimless$-simplex in $\bd(\wdtild{K})$, 
we first take a 2D plane $\DG$ which contains an interior point of $\sG^{\Dimless-1}$ 
and is orthogonal to the hyperplane spanned by $\sG^{\Dimlsls}$.
We then take the intersection of the plane $\DG$ with each boundary $\Dimless$-simplex 
in the neighborhood of $\sG^\Dimlsls$ to get a set of line segments that we order circularly
starting from an arbitrary one.
For each two consecutive line segments in this order which enclose a void,
we pick a point $p$ on the plane $\DG$ which resides in the void.
Suppose that one of the two line segments is derived from a boundary $\Dimless$-simplex
$\sG^\Dimless_0=\Set{v_0,\ldots,v_\Dimless}$.
We take the $(\Dimtop)$-simplex $\sG^\Dimtop=\Set{p,v_0,\ldots,v_\Dimless}$
and the induced oriented simplex $\vec{\sG}^\Dimless_0$ of $\sG^\Dimless_0$ derived
from the naturally oriented simplex of $\sG^\Dimtop$.
For the other line segment, we similarly derive an induced oriented simplex $\vec{\sG}^\Dimless_1$
and pair the two oriented $\Dimless$-simplices $\vec{\sG}^\Dimless_0$ and $\vec{\sG}^\Dimless_1$.
Figure~\ref{fig:edge-pair} can be reused to exemplify the pairing.
The union of the shaded regions in the figure is the plane $\DG$ and
$a$, $b$, $c$, and $d$ are the line segments derived from 
intersecting the plane with four boundary $\Dimless$-simplices.
Taking the circular order $a,b,c,d$, we see that
the consecutive ones which enclose a void are $(a,b)$, $(c,d)$, and $(d,a)$.
For $(a,b)$,
we can pick $p$ as an interior point in the blue region 
and the two oriented $\Dimless$-simplices corresponding to $a$ and $b$
can be induced and paired.

In summary, the steps of the \textsc{VoidBoundary} subroutine are the following:
\begin{enumerate}
\item For each $(\Dimlsls)$-simplex $\sG^{\Dimlsls}$ being a face of a $\Dimless$-simplex in $\bd(\wdtild{K})$,
pair all oriented boundary $\Dimless$-simplices in the neighborhood.
\item After gathering all the pairing,
group the oriented boundary $\Dimless$-simplices
by putting all paired ones into a group.
\item Return  $(\vec{\zG}_1,\ldots,\vec{\zG}_k)$,
each of which is a group of the oriented boundary $\Dimless$-simplices.
\end{enumerate}


The following theorem concludes the correctness of the reconstruction:

\begin{theorem}
\label{thm:walk-correct}
Any $\vec{\zG}_j$ returned by \textsc{VoidBoundary}
is the boundary of a void of $\real^\Dimtop\smallsetminus|\wdtild{K}|$.
\end{theorem}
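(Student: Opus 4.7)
The plan is to establish that each group $\vec{\zG}_j$ returned by \textsc{VoidBoundary} equals the boundary of some void of $\real^{\Dimtop}\setminus|\wdtild{K}|$, by proving two inclusions: (i) all oriented simplices in $\vec{\zG}_j$ bound a common void, and (ii) no bounding oriented simplex of that void is missing from $\vec{\zG}_j$.

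For the first inclusion, I would verify that whenever the algorithm pairs two oriented boundary $\Dimless$-simplices $\vec{\sG}^\Dimless_0$ and $\vec{\sG}^\Dimless_1$ at a common $(\Dimlsls)$-face $\sG^{\Dimlsls}$, both bound the same void. Intersecting the transverse 2D plane $\DG$ with the star of $\sG^{\Dimlsls}$ decomposes a small disk of $\DG$ into circular sectors, each of which lies entirely in one void because $|\wdtild{K}|$ does not meet the interior of any sector. When two consecutive segments enclose such a sector, the interior point $p$ chosen by the pairing lies in the corresponding local void $\Vcal$, and the naturally oriented $(\Dimtop)$-simplex on $\Set{p,v_0,\ldots,v_\Dimless}$ satisfies the orthogonality and position requirements of Definition~\ref{dfn:void}. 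Hence the induced orientation on $\sG^\Dimless_0$ is exactly the one that makes $\vec{\sG}^\Dimless_0$ bound $\Vcal$, and likewise for $\vec{\sG}^\Dimless_1$. Propagating this relation transitively through the equivalence classes, every oriented simplex in a given $\vec{\zG}_j$ bounds the same void.

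For the second inclusion, I need to show that the oriented boundary of a single void $\Vcal$ cannot be split across distinct $\vec{\zG}_j$'s. By the local analysis above, any two oriented boundary simplices of $\Vcal$ sharing a $(\Dimlsls)$-face are in fact paired by the algorithm. It therefore suffices to show that the set of $\Dimless$-simplices of $\bd(\wdtild{K})$ bounding $\Vcal$ is $\Dimless$-connected (in the sense of Definition~\ref{dfn:q-conn}, restricted to $(\Dimlsls)$-faces that locally separate $\Vcal$ from $\wdtild{K}$). The crucial ingredient is that $\wdtild{K}$ is itself $\Dimless$-connected, as arranged in lines~\ref{alg-line:conn-inf}--\ref{alg-line:closure-inf} of Algorithm~\ref{algr:min-cyc-inf-bar}. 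I would argue that a disconnection of $\partial\Vcal$ would induce a partition of the $\Dimless$-simplices of $\wdtild{K}$ incident to $\Vcal$ that cannot be bridged by shared $(\Dimlsls)$-faces, contradicting the $\Dimless$-connectedness of $\wdtild{K}$; alternatively, one can invoke Alexander duality in $\real^\Dimtop$ to relate the components of $\partial\Vcal$ to homological features of $\wdtild{K}$ that are suppressed by $\Dimless$-connectedness.

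The main obstacle is this second step: the local pairing analysis is essentially a direct unfolding of Definition~\ref{dfn:void} plus a planar combinatorial argument on $\DG$, but the global completeness rests on the topological fact that $\Dimless$-connectedness of an embedded weak $(\Dimtop)$-pseudomanifold forces each complementary void to have connected boundary. Once that is in hand, the theorem follows by combining the two inclusions.
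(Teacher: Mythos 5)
Your two-inclusion decomposition is sensible, and the local sector analysis you give for inclusion~(i) is essentially sound (the paper in fact asserts without further argument that all oriented simplices of a group $\vec{\zG}_j$ bound a common void). The genuine gap is in inclusion~(ii), and it is twofold. First, the claim that ``any two oriented boundary simplices of $\Vcal$ sharing a $(\Dimlsls)$-face are in fact paired by the algorithm'' is false: the pairing at a $(\Dimlsls)$-face $\tau$ links two oriented simplices only when they enclose the \emph{same local sector} of the transverse disk at $\tau$. A single global void $\Vcal$ can meet the neighborhood of $\tau$ in two or more distinct sectors that are joined into one void only far from $\tau$ (already for $\Dimless=1$, four edges around a vertex with two opposite sectors belonging to the same outer void); then four or more oriented simplices at $\tau$ bound $\Vcal$ but are paired as separate pairs. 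Consequently, even if the set of simplices bounding $\Vcal$ is $\Dimless$-connected in the sense of Definition~\ref{dfn:q-conn}, that does not place them all in one pairing-equivalence class; you would need connectivity under the strictly finer relation ``adjacent to $\Vcal$ through a common sector.'' Second, the topological input you defer to --- that $\Dimless$-connectedness of the embedded complex forces each void to have connected boundary (in the strong sense just described) --- is precisely the hard content of the theorem, and neither of your suggested routes is carried out. The partition argument in particular does not obviously close, because a $\Dimless$-connecting path in $\wdtild{K}$ between two simplices bounding $\Vcal$ may pass through simplices not incident to $\Vcal$ at all, so a disconnection of the boundary of $\Vcal$ need not disconnect $\wdtild{K}$.

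The paper resolves both difficulties by a different mechanism: it ``de-contracts'' simplices whose two orientations both appear in $\vec{\zG}_j$ and ``de-pinches'' the closure of the corresponding unoriented cycle into a $\Dimless$-connected pseudomanifold $\Mcal_h$, invokes the extended Jordan--Brouwer separation theorem to conclude that $\vec{\zG}_j$ is the boundary of one of the two complementary components $\Acal$, shows that $\Acal$ deforms back to a void with boundary $\vec{\zG}_j$, and finally rules out any extra oriented simplex bounding $\Vcal$ by following a $\Dimless$-connected path of simplices from that simplex into $\Mcal$ and deriving a contradiction at the first simplex whose interior leaves $\Acal$. To salvage your approach you would need an actual proof of the strong sector-level connectivity of the void boundary; as written the proposal does not supply one.
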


\begin{proof}
See Appendix~\ref{apx:walk-correct-pf}.
\end{proof}

\subsection{Algorithm correctness}
\label{sec:corr-inf}

To prove the correctness of Algorithm~\ref{algr:min-cyc-inf-bar}, we need two conclusions about cycles with $\Zbb_2$ coefficients. 
Specifically, Proposition~\ref{lem:z2-cycle-simplex-bound-diff} says that an embedded $(q-1)$-cycle in $\mathbb{R}^q$ separates the space 
and hence the two oriented simplices of a $(q-1)$-simplex in the cycle 
bound different voids. 
Proposition~\ref{prop:exist-subcyc-d-conn} says that a $q$-simplex in a $q$-cycle belongs to a $q$-connected sub-cycle of the $q$-cycle. 

\begin{proposition} \label{lem:z2-cycle-simplex-bound-diff}
Let $\diml\geq 2$, 
$\zG$ be a $(\diml-1)$-cycle (with $\Zbb_2$ coefficients) of a simplicial complex embedded in $\real^q$,
and $\Zcal$ be the closure of the simplicial set $\zG$.
Then for any $(\diml-1)$-simplex $\sG$ of $\zG$, 
the two oriented simplices of $\sG$ must bound different voids of $\real^\diml\smallsetminus|\Zcal|$.
\end{proposition}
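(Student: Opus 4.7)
My plan is to realize $\zG$ as the boundary of a $\diml$-chain in an ambient triangulation of a ball and use this chain to geometrically separate the two sides of any $(\diml-1)$-simplex of $\zG$. Concretely, I would extend the underlying simplicial complex containing $\Zcal$ to a triangulation $\Tcal$ of a closed Euclidean ball $B \subset \real^\diml$ whose interior contains $|\Zcal|$. Since $B$ is contractible, $H_{\diml-1}(B;\Zbb_2) = 0$, so $\zG$ bounds a $\diml$-chain $\SG$ in $\Tcal$ with $\partial\SG = \zG$. For the given $(\diml-1)$-simplex $\sG \in \zG$, the simplex has exactly two $\diml$-cofaces in $\Tcal$ (as it lies in the interior of $B$), and since the coefficient of $\sG$ in $\partial\SG$ is $1$, exactly one of these cofaces belongs to $\SG$. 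Let $\sG^+$ denote the open interior of this ``inside'' coface and $\sG^-$ that of the other; these are open neighborhoods lying immediately on the two opposite sides of $\sG$.

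Next, I would show that the topological boundary of $|\SG|$ in $B$ equals $|\zG|$: interior points of a $\diml$-simplex of $\SG$, or of a $(\diml-1)$-simplex both of whose $\diml$-cofaces are in $\SG$, have neighborhoods contained in $|\SG|$; points in the interior of a $(\diml-1)$-simplex with no $\diml$-coface in $\SG$ have neighborhoods disjoint from $|\SG|$; and only interior points of simplices of $\zG$, together with their faces, have every neighborhood meeting both $|\SG|$ and its complement. Consequently, $|\SG|\smallsetminus|\zG|$ and $B\smallsetminus|\SG|$ are disjoint open subsets of $B\smallsetminus|\zG|$ whose union is all of $B\smallsetminus|\zG|$. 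Because $|\Zcal| = |\zG|$ lies in the interior of $B$, this decomposition extends to $\real^\diml\smallsetminus|\Zcal|$ (with the unbounded component contained in the $B\smallsetminus|\SG|$ side), so $\sG^+$ and $\sG^-$ lie in two distinct voids.

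Finally, I would match these two voids with the two oriented versions of $\sG$ via Definition~\ref{dfn:void}. The perpendicular point $v$ used in the definition sits either on the side of $\sG^+$ or on the side of $\sG^-$; reflecting $v$ across the hyperplane spanned by $\sG$ negates the determinant that characterizes the natural orientation of $\Set{v, v_0,\ldots, v_{\diml-1}}$ and therefore flips the induced orientation of $\sG$. Thus the two sides force the two opposite orientations, and each oriented version of $\sG$ bounds precisely the void containing $\sG^+$ or $\sG^-$, which are different by the previous paragraph. The main technical obstacle I anticipate is the clean verification that the topological boundary of $|\SG|$ in $B$ is exactly $|\zG|$ (in particular handling the lower-dimensional faces at the frontier of $|\SG|$), together with the sign check that aligns the natural-orientation convention of Definition~\ref{dfn:void} with the geometric notion of side; both reduce to routine PL-topology and linear-algebra arguments once set up carefully.
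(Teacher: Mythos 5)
Your proposal is correct, but it takes a genuinely different route from the paper. The paper's proof is homological: it places $\sG$ last in a filtration of $\Zcal$, observes that $\sG$ is a positive simplex (it completes the cycle $\zG$), so $\dim\Hm_{\diml-1}(|\Zcal|)$ jumps by one, and then invokes Alexander duality to conclude that $\real^\diml\smallsetminus|\Zcal|$ gains exactly one new connected component at that moment --- which forces the two half-balls flanking $\sG$ into different voids. You instead construct an explicit separating region: extend to a triangulation $\Tcal$ of an ambient ball $B$, solve $\partial\SG=\zG$ there (possible since $\Hm_{\diml-1}(B;\Zbb_2)=0$), and show that $|\SG|\smallsetminus|\Zcal|$ and its complement give an open partition of $\real^\diml\smallsetminus|\Zcal|$ with the two sides of $\sG$ landing in different pieces, the point being that exactly one of the two $\diml$-cofaces of $\sG$ carries coefficient $1$ in $\SG$. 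The paper's argument is shorter but leans on Alexander duality and leaves slightly implicit why the newly created component is the one separating $\bll_1$ from $\bll_2$; yours is elementary and constructive (a mod-$2$ parity argument) and identifies the two voids explicitly, at the cost of the triangulation-extension step (which may require subdividing $\Zcal$ --- harmless, since $|\Zcal|$ and the two sides of $\sG$ are unchanged) and the frontier verification you flag, whose key point is that at a lower-dimensional face $\tau$ the dual graph of the link is connected, so a mixed in/out pattern among the $\diml$-cofaces of $\tau$ forces some $(\diml-1)$-coface of $\tau$ into $\zG$. Both arguments are sound.
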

\begin{proof}
Consider a closed topological $\diml$-ball $\bll$ such that
$\sG\subseteq\bll$
and $\bll\intersect|\Zcal\smallsetminus{\sG}|$ equals the boundary of $\sG$.
Let $\bll_1$ and $\bll_2$ be the two open half balls of $\bll$ separated by $\sG$.
Then it is true that 
the two oriented simplices of $\sG$ bound different voids of $\real^\diml\smallsetminus|\Zcal|$
if and only if $\bll_1$ and $\bll_2$ are not connected in $\real^\diml\smallsetminus|\Zcal|$.
So we only need to show that $\bll_1$ and $\bll_2$ are not connected in $\real^\diml\smallsetminus|\Zcal|$.
Consider a filtration of $\Zcal$ where $\sG$ is the last simplex added.
Because $\sG$ is a positive simplex in the filtration,
by adding $\sG$,
the dimension of $\Hm_{\diml-1}$ must increase by 1.
By Alexander duality, 
the dimension of $\Hm_0$ of the complement space also increases by 1.
Then $\bll_1$ and $\bll_2$ cannot be connected in $\real^\diml\smallsetminus|\Zcal|$.
\end{proof}

\begin{proposition}
\label{prop:exist-subcyc-d-conn}
Let $\zG$ be a $\diml$-cycle (with $\Zbb_2$ coefficients) of a simplicial complex where $\diml>0$,
then for any $\diml$-simplex $\sG$ of $\zG$, 
there must be a $\diml$-cycle $\zeta'$ (with $\Zbb_2$ coefficients) containing $\sG$ 
such that $\zeta'\subseteq\zeta$ and $\zG'$ is $\diml$-connected.
\end{proposition}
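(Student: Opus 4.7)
The plan is to construct $\zeta'$ as the $q$-connected component of $\zeta$ containing $\sigma$, viewed as a set of $q$-simplices, and then verify that this subset is itself a $\Zbb_2$-cycle.

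More precisely, consider the equivalence relation on the $q$-simplices of $\zeta$ in which $\tau \sim \tau'$ whenever there is a sequence $\tau = \tau_0,\tau_1,\dots,\tau_\ell = \tau'$ of $q$-simplices in $\zeta$ with consecutive pairs sharing a $(q-1)$-face. Let $\zeta'$ be the equivalence class containing $\sigma$. By construction $\sigma \in \zeta'$, $\zeta' \subseteq \zeta$, and the closure of the simplicial set $\zeta'$ is $q$-connected (by definition of the equivalence relation), so $\zeta'$ is $q$-connected in the sense required.

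It remains to check that $\zeta'$ is a $q$-cycle with $\Zbb_2$ coefficients, i.e.\ that every $(q-1)$-face of simplices in $\zeta'$ is incident to an even number of $q$-simplices of $\zeta'$. The key observation is the following: if $\tau$ is any $(q-1)$-simplex which is a face of some $q$-simplex of $\zeta'$, then every $q$-simplex of $\zeta$ that has $\tau$ as a face is automatically $q$-connected to all the others sharing $\tau$ (they all share the $(q-1)$-face $\tau$), hence lies in the same equivalence class $\zeta'$. Therefore the set of $q$-simplices of $\zeta'$ containing $\tau$ equals the set of $q$-simplices of $\zeta$ containing $\tau$. Since $\zeta$ is a $\Zbb_2$-cycle, the latter cardinality is even, so $\zeta'$ is also a $\Zbb_2$-cycle.

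I do not anticipate a real obstacle here; the only subtle point to articulate carefully is the above observation that two $q$-simplices sharing a common $(q-1)$-face are automatically $q$-connected, which forces the component $\zeta'$ to absorb all such cofaces and guarantees the parity condition inherited from $\zeta$.
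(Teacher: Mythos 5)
Your proof is correct and takes essentially the same route as the paper: both extract a connected piece of $\zeta$ containing $\sigma$ and verify the $\Zbb_2$ parity condition at each $(q-1)$-face. The only (immaterial) difference is that the paper pairs the cofaces of each $(q-1)$-face arbitrarily and takes a connected component of the resulting graph, which may yield a smaller subcycle, whereas you take the full $q$-connected component and note that its coface counts at each $(q-1)$-face coincide with those of $\zeta$ and are therefore even; both verifications are valid.
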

\begin{proof}
We can construct an undirected graph $L$ for $\zeta$, with vertices of $L$ corresponding to the $\diml$-simplices in $\zeta$. 
For each $(\diml-1)$-simplex $\sG^{\diml-1}$ which is a face of a $\diml$-simplex of $\zeta$, 
let $\Ncal$ be the set of $\diml$-simplices in $\zeta$ having $\sG^{\diml-1}$ as a face, 
then $|\Ncal|$ must be even.
We can pair $\diml$-simplices of $\Ncal$ arbitrarily, 
and make each pair of $\diml$-simplices form an edge in $L$.
Let $C$ be the connected component of $L$ containing the corresponding vertex of $\sG$ 
and $\zeta'$ be the $\diml$-chain corresponding to $C$, then $\zeta'$ must be a cycle.
This is because 
we can pair the $(\diml-1)$-faces of all $\diml$-simplices in $\zeta'$ according to the edges in $L$,
so $\partial(\zeta')=0$.
Furthermore, $\zeta'$ contains $\sG$, $\zeta'\subseteq\zeta$, and $\zeta'$ is $\diml$-connected.
\end{proof}

Throughout the rest of this subsection,
some of the symbols we use refer to Algorithm~\ref{algr:min-cyc-inf-bar}.
We endow the ambient space $\real^\Dimtop$ with a ``cellular complex'' structure 
by treating voids of $\real^\Dimtop\smallsetminus|\wdtild{K}|$ as $(\Dimtop)$-dimensional ``cells''. 
This cellular complex of $\real^\Dimtop$ is denoted as $\Rcal^\Dimtop$ 
and $\Rcal^\Dimtop=\wdtild{K}\union \Set{\text{voids of }\real^\Dimtop\smallsetminus|\wdtild{K}|}$.
For $\Rcal^\Dimtop$, most terminologies from algebraic topology for simplicial complexes are inherited
with the exception that $(\Dimtop)$-dimensional elements of $\Rcal^\Dimtop$ are called {\it $(\Dimtop)$-cells}.
Then, we can also let $\thG$ denote the bijection 
from $(\Dimtop)$-cells of $\Rcal^\Dimtop$ to $V(G)$.
To derive $\partial(\Vcal)$ for a void $\Vcal$ of $\real^\Dimtop\smallsetminus|\wdtild{K}|$,
we map oriented $\Dimless$-simplices in the boundary of $\Vcal$  (Definition~\ref{dfn:void})
to their corresponding unoriented $\Dimless$-simplices.
Then $\partial(\Vcal)$ is defined as the sum (with $\Zbb_2$ coefficients) of these unoriented $\Dimless$-simplices.
It is not hard to see that $\partial(\Vcal)$ is a $\Dimless$-cycle (with $\Zbb_2$ coefficients)
because each void boundary is a $\Dimless$-cycle (with $\Zbb$ coefficients).

\begin{proposition}
\label{prop:algr-inf-bar-cut-pers}
For any cut $(S,T)$ of $(G,s_1,s_2)$, 
the $\Dimless$-chain $\zG=\thG\inv(\xi({S},{T}))$ 
is a persistent $\Dimless$-cycle of $[\birth,+\infty)$ and $w(\zG)=c(S,T)$. 
\end{proposition}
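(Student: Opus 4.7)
The plan is to split the statement into three parts: (i)~$\zG$ is a $\Dimless$-cycle, (ii)~$\zG\subseteq K_\birth$ and contains $\sG_\birth^\Fcal$, making $\zG$ a persistent $\Dimless$-cycle of $[\birth,+\infty)$, and (iii)~$w(\zG)=c(S,T)$. The main content is (i); parts (ii) and (iii) are short bookkeeping.

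For (i), I would work inside the cellular complex $\Rcal^\Dimtop$ defined immediately before the proposition, treating voids as additional $(\Dimtop)$-cells. The crucial structural observation is that every $\Dimless$-simplex $\sG^\Dimless$ of $\wdtild{K}$ has exactly two local sides in $\real^\Dimtop$, each of which is either filled by a $(\Dimtop)$-simplex of $\wdtild{K}$ or opens into a void; this is what makes $\thG$ a bijection from $\Dimless$-simplices of $\wdtild{K}$ onto edges of $G$ (possibly including self-loops when both sides open into the same void). I would then form the $(\Dimtop)$-chain $A=\thG\inv(S)\in\Chn_\Dimtop(\Rcal^\Dimtop;\Zbb_2)$ and count the multiplicity of each $\sG^\Dimless$ in $\partial A$ by cases on whether its two incident $(\Dimtop)$-cells lie in $S$, lie in $T$, or split across the cut. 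A direct check shows that $\sG^\Dimless$ appears in $\partial A$ precisely when $\thG(\sG^\Dimless)\in\xi(S,T)$, so $\partial A=\zG$. Since $\partial\partial=0$ in the cellular chain complex of $\Rcal^\Dimtop$, $\zG$ is a cycle.

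For (ii), by construction the edge $\thG(\sG_\birth^\Fcal)$ has endpoints $s_1\subseteq S$ and $s_2\subseteq T$ and therefore lies in $\xi(S,T)$, giving $\sG_\birth^\Fcal\in\zG$. Because $\zG\subseteq\wdtild{K}\subseteq K_\birth$ and contains this simplex of index $\birth$, it is born in $K_\birth$. For (iii), $c(S,T)=\sum_{e\in\xi(S,T)}c(e)=\sum_{e\in\xi(S,T)}w(\thG\inv(e))=w(\zG)$ by the capacity assignment of Algorithm~\ref{algr:min-cyc-inf-bar} and the bijection between $\xi(S,T)$ and the $\Dimless$-simplices of $\zG$.

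The main subtlety I expect is making the identity $\partial A=\zG$ fully rigorous inside $\Rcal^\Dimtop$, because the boundary of a void is defined via oriented $\Dimless$-simplices in Definition~\ref{dfn:void} whereas chains are taken modulo~$2$. I would dispatch this by invoking Theorem~\ref{thm:walk-correct}: each void boundary is a $\Zbb$-cycle whose $\Zbb_2$ reduction records each $\Dimless$-simplex with multiplicity equal to the number of its local sides touching the void, taken mod~$2$. With this, the case analysis above applies uniformly to $(\Dimtop)$-simplex cells and void cells, and the identity $\partial\partial=0$ in $\Rcal^\Dimtop$ follows by the usual argument.
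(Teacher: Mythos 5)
Your proposal is correct and follows essentially the same route as the paper: the paper also splits the claim into the same three parts and establishes the cycle property by showing $\zG=\sum_{\aG\in\thG\inv(S)}\partial(\aG)$ in the cellular complex $\Rcal^\Dimtop$, with the local two-sides case analysis deferred to the analogous argument in Proposition~\ref{prop:algr-fin-bar-fincut-pers}. The only cosmetic difference is that the paper concludes that $\zG$ is a cycle because it is a sum of cycles (each cell boundary, including each void boundary, having already been observed to be a $\Zbb_2$-cycle), rather than by invoking $\partial\partial=0$ for $\Rcal^\Dimtop$.
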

\begin{proof}
We have three things to show:
(i) $\zG$ contains $\sG_\birth^\Fcal$; 
(ii) $w(\zG)=c(S,T)$;
(iii) $\zG$ is a cycle.
Claim~(i) and~(ii) are not hard to verify and
we prove claim~(iii) by showing that $\zG=\sum_{\aG\in\thG\inv(S)}\partial(\aG)$, 
so that as a sum of cycles, $\zG$ is a cycle.
The detail for the equality of the two chains 
is omitted as it is similar to the one in the proof of Proposition~\ref{prop:algr-fin-bar-fincut-pers}.
\end{proof}

\begin{proposition}
\label{prop:algr-inf-bar-pers-cut}
For any persistent $\Dimless$-cycle $\zG$ of $[\birth,+\infty)$, 
there exists a cut $(S,T)$ of $(G,s_1,s_2)$ such that $c(S,T)\leq w(\zG)$.
\end{proposition}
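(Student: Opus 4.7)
The plan is to imitate the approach of Proposition~\ref{prop:algr-fin-bar-pers-fincut}: use the embedding $\wdtild{K}\hookrightarrow\real^\Dimtop$ to carve the ambient space into the two ``sides'' of $\zG$ and let the resulting partition of $(\Dimtop)$-cells define the cut. What plays the role of the killing $(\Dimtop)$-chain from the finite-interval case is the existence, in $\real^\Dimtop$, of well-defined voids on the two sides of any embedded $\Dimless$-cycle.

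First I would replace $\zG$ by a cleaner subcycle. No $\Dimless$-simplex of $\zG$ can be a dangled $\Dimless$-simplex in $K_\birth$, for otherwise the offending $(\Dimlsls)$-face would have odd incidence with $\zG$; an inductive repetition of this observation through the iterations of \textsc{Prune} shows $\zG\subseteq K'_\birth$. Proposition~\ref{prop:exist-subcyc-d-conn} then yields a $\Dimless$-connected subcycle $\zG'\subseteq\zG$ containing $\sG_\birth^\Fcal$, and because $\zG'$ is $\Dimless$-connected and meets $C_\birth$ at $\sG_\birth^\Fcal$, it lies entirely in $C_\birth$ and hence in $\wdtild{K}$ by line~\ref{alg-line:closure-inf}. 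Write $\Zcal'$ for the closure of the simplicial set $\zG'$ and note that $w(\zG')\leq w(\zG)$.

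Next I would build the cut from the voids of $\real^\Dimtop\smallsetminus|\Zcal'|$. Every $(\Dimtop)$-cell $\aG$ of $\Rcal^\Dimtop$---whether a $(\Dimtop)$-simplex of $\wdtild{K}$ or a void of $\real^\Dimtop\smallsetminus|\wdtild{K}|$---has a connected interior disjoint from the $\Dimless$-dimensional set $|\Zcal'|$, so $\interior(\aG)$ lies in a unique void $\Vcal(\aG)$ of $\real^\Dimtop\smallsetminus|\Zcal'|$. Define
\[
S=\bigl\{v\in V(G)\bigm| \Vcal(\thG\inv(v))=\Vcal(\thG\inv(v_1))\bigr\},\qquad T=V(G)\smallsetminus S.
\]
To see $(S,T)$ is a valid cut of $(G,s_1,s_2)$, I would apply Proposition~\ref{lem:z2-cycle-simplex-bound-diff} to $\zG'$ and $\sG_\birth^\Fcal\in\zG'$: the two oriented copies of $\sG_\birth^\Fcal$ bound distinct voids of $\real^\Dimtop\smallsetminus|\Zcal'|$. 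Since $\thG\inv(v_1)$ and $\thG\inv(v_2)$ are the two $(\Dimtop)$-cells of $\Rcal^\Dimtop$ locally sitting on opposite sides of $\sG_\birth^\Fcal$, their interiors fall in these two different voids, so $v_2\in T$.

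Finally I would bound the capacity. For any $\Dimless$-simplex $\sG^\Dimless\in\wdtild{K}$ with $\sG^\Dimless\notin\zG'$, the interior of $\sG^\Dimless$ is disjoint from $|\Zcal'|$, so a sufficiently small open metric ball $B\subseteq\real^\Dimtop$ about an interior point of $\sG^\Dimless$ misses $|\Zcal'|$ entirely. This $B$ is a connected open subset of $\real^\Dimtop\smallsetminus|\Zcal'|$ meeting the interiors of both $(\Dimtop)$-cells adjoining $\sG^\Dimless$, so those two cells share a void of $\real^\Dimtop\smallsetminus|\Zcal'|$ and $\thG(\sG^\Dimless)\notin\xi(S,T)$. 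Hence $\thG\inv(\xi(S,T))\subseteq\zG'$, whence
\[
c(S,T)=\sum_{e\in\xi(S,T)}w\bigl(\thG\inv(e)\bigr)\leq w(\zG')\leq w(\zG).
\]
The main technical obstacle is this last small-ball step: one must choose the radius of $B$ so that it avoids every simplex of $\Zcal'$ (using finiteness of $\wdtild{K}$) while still straddling $\sG^\Dimless$ within its closed star, which is precisely where the embedding of $\wdtild{K}$ in $\real^\Dimtop$ is essential.
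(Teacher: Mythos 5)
Your proof is correct, and while it follows the paper's overall skeleton in its first half, the second half takes a genuinely different and somewhat more elementary route. Both arguments begin identically: use the pruning to place $\zG$ in $K'_\birth$, invoke Proposition~\ref{prop:exist-subcyc-d-conn} to extract a $\Dimless$-connected subcycle $\zG'$ containing $\sG_\birth^\Fcal$ (hence lying in $\wdtild{K}$), and pass to the closure $\Zcal'$. From there the paper runs the \textsc{VoidBoundary} reconstruction on $\Zcal'$, selects the void boundary $\vec{\zG}$ containing an oriented copy of $\sG_\birth^\Fcal$, and defines the cut as $(\thG(\Acal),\thG(\Bcal))$ where $\Acal$ collects the $(\Dimtop)$-cells inside the corresponding void $\Vcal$; it then identifies $\thG\inv(\xi(S,T))$ \emph{exactly} with the underlying $\Zbb_2$-cycle $\zG_0\subseteq\zG'$ of that void boundary. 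You instead partition $V(G)$ directly by which connected component of $\real^\Dimtop\smallsetminus|\Zcal'|$ contains the interior of each dual $(\Dimtop)$-cell, verify $v_1$ and $v_2$ are separated via Proposition~\ref{lem:z2-cycle-simplex-bound-diff} (the same lemma the paper uses for this purpose), and establish only the one-sided containment $\thG\inv(\xi(S,T))\subseteq\zG'$ by the small-ball argument --- which is all the inequality requires. What your approach buys is the avoidance of the orientation bookkeeping and of any appeal to the void-boundary reconstruction machinery inside this proof; what the paper's approach buys is the sharper identification of the cut edges with a specific subcycle $\zG_0$, which makes the structural correspondence between persistent cycles and cuts more explicit (though that exact correspondence is not needed for this direction of the duality). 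Your cut may even differ from the paper's when $\thG\inv(v_1)$ does not lie in the distinguished void $\Vcal$, but both are valid cuts satisfying the stated bound.
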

\begin{proof}
Because of the nature of the pruning,
$\zG$ must reside in $K_\birth'$.
By Proposition~\ref{prop:exist-subcyc-d-conn}, 
there must be a $\Dimless$-cycle $\zG'\subseteq\zG$
such that $\zG'$ is $\Dimless$-connected and contains $\sG_\birth^\Fcal$.
Hence, $\zG'$ resides in $\wdtild{K}$.
Let $\Zcal'$ be the closure of the simplicial set $\zG'$,
we can run the void boundary reconstruction algorithm 
of Section~\ref{sec:void-bound-recon} on $\Zcal'$
and take a void boundary $\vec{\zG}$ containing an oriented simplex $\vec{\sG}_\birth^\Fcal$ of $\sG_\birth^\Fcal$.
We can map each oriented simplex of $\vec{\zG}$ to its unoriented simplex
and let $\zG_0$ be the sum of these unoriented simplices,
then $\zG_0$ is a $\Dimless$-cycle (with $\Zbb_2$ coefficients) and $\zG_0\subseteq\zG'$.
By Proposition~\ref{lem:z2-cycle-simplex-bound-diff}, 
the oppositely oriented simplex of $\vec{\sG}_\birth^\Fcal$ must not be in $\vec{\zG}$,
so $\zG_0$ contains $\sG_\birth^\Fcal$.
Let $\vec{\zG}$ bound a void $\Vcal$ of $\real^\Dimtop\smallsetminus|\Zcal'|$,
we can let $\Acal$ be the $(\Dimtop)$-chain of $\Rcal^\Dimtop$ consisting of all the $(\Dimtop)$-cells residing in $\Vcal$
and let $\Bcal$ be the $(\Dimtop)$-chain consisting of all the other $(\Dimtop)$-cells,
then $\partial(\Acal)=\partial(\Bcal)=\zG_0$.
Let $v_1,v_2$ be the two end vertices of $\thG(\sG_\birth^\Fcal)$.
Because the oppositely oriented simplex of $\vec{\sG}_\birth^\Fcal$ does not bound $\Vcal$ in $\Zcal'$,
it must be true that one of $v_1,v_2$ is in $\thG(\Acal)$ and the other is in $\thG(\Bcal)$.
We can let $(S,T)=(\thG(\Acal),\thG(\Bcal))$ or $(\thG(\Bcal),\thG(\Acal))$ 
based on which set contains the source of the flow network,
then $(S,T)$ is a cut of the flow network constructed in Algorithm~\ref{algr:min-cyc-inf-bar}.
Furthermore, we have $\zG_0=\thG\inv(\xi(S,T))$ and $c(S,T)=w(\zG_0)\leq w(\zG)$.
\end{proof}

The following theorem concludes the correctness of Algorithm~\ref{algr:min-cyc-inf-bar}: 

\begin{theorem}
Algorithm~\ref{algr:min-cyc-inf-bar} computes a minimal persistent $\Dimless$-cycle for the given interval $[\birth,+\infty)$.
\end{theorem}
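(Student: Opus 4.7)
The plan is to mimic the correctness argument used at the end of Section 3 for Algorithm 1, combining the two duality Propositions~\ref{prop:algr-inf-bar-cut-pers} and~\ref{prop:algr-inf-bar-pers-cut} that immediately precede the theorem. The overall strategy is: (i) show that the flow network constructed by the algorithm is well defined and has at least one cut with finite capacity, so the minimal cut $(S^*,T^*)$ exists; (ii) invoke Proposition~\ref{prop:algr-inf-bar-cut-pers} to conclude that the returned chain $\zG^*=\thG\inv(\xi(S^*,T^*))$ is a persistent $\Dimless$-cycle of $[\birth,+\infty)$ with weight $c(S^*,T^*)$; (iii) use Proposition~\ref{prop:algr-inf-bar-pers-cut} in a contradiction argument to rule out the existence of a strictly lighter persistent cycle.

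For step~(i) I first need the source and sink to be distinct. The edge $\thG(\sG_\birth^\Fcal)$ of $G$ arises from the dual-graph construction on $\wdtild{K}$ with the reconstructed void boundaries; by the very definition of the duality each $\Dimless$-simplex is dual to a graph edge whose two endpoints are the two distinct $(\Dimtop)$-cells of $\Rcal^\Dimtop$ (either genuine $(\Dimtop)$-simplices of $\wdtild{K}$ or voids of $\real^\Dimtop\smallsetminus|\wdtild{K}|$) incident on it. Hence $v_1\neq v_2$ and the triple $(G,s_1,s_2)$ is a valid undirected flow network. To see that a finite-capacity cut exists, note that the interval $[\birth,+\infty)$ is in $\persdgm_\Dimless(\Fcal)$, so it admits some persistent $\Dimless$-cycle $\zG$; Proposition~\ref{prop:algr-inf-bar-pers-cut} then yields a cut $(S,T)$ with $c(S,T)\leq w(\zG)<+\infty$. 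Therefore the min-cut $(S^*,T^*)$ has finite capacity.

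Step~(ii) is a direct application of Proposition~\ref{prop:algr-inf-bar-cut-pers}: since $(S^*,T^*)$ is a cut, $\zG^*=\thG\inv(\xi(S^*,T^*))$ is a persistent cycle of $[\birth,+\infty)$ with $w(\zG^*)=c(S^*,T^*)$, and this is precisely what line~\ref{alg-line:return-inf} returns. For step~(iii), suppose for contradiction that there is some persistent $\Dimless$-cycle $\zG'$ of $[\birth,+\infty)$ with $w(\zG')<w(\zG^*)$. Proposition~\ref{prop:algr-inf-bar-pers-cut} supplies a cut $(S',T')$ with $c(S',T')\leq w(\zG')<w(\zG^*)=c(S^*,T^*)$, contradicting the choice of $(S^*,T^*)$ as a minimal cut.

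The argument itself is therefore short; the substantive content has already been packaged into the two preceding propositions. The only mildly subtle point, and the place where I would be most careful, is the verification that the pruning in line~\ref{alg-line:prune}, the extraction of the $\Dimless$-connected component $C_\birth$ in line~\ref{alg-line:conn-inf}, and the addition of $\SG^\Dimtop$ in line~\ref{alg-line:add-top-simp} do not exclude any candidate minimal persistent cycle from $\wdtild{K}$. This is exactly what is used inside Proposition~\ref{prop:algr-inf-bar-pers-cut} (the pruning only removes dangling $\Dimless$-simplices, which cannot belong to any cycle; and by Proposition~\ref{prop:exist-subcyc-d-conn} any persistent cycle containing $\sG_\birth^\Fcal$ contains a $\Dimless$-connected sub-cycle through $\sG_\birth^\Fcal$, which must lie in $C_\birth$ and hence in $\wdtild{K}$), so no additional work is needed here.
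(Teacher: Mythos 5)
Your overall structure matches the paper's: validity of the network, then the two duality propositions applied in the standard min-cut sandwich argument. Steps (ii) and (iii) are fine, and your closing remark about pruning and the $\Dimless$-connected component is correctly deferred to Proposition~\ref{prop:algr-inf-bar-pers-cut}.

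The one genuine gap is in step (i), where you assert that $v_1\neq v_2$ holds ``by the very definition of the duality'' because the two endpoints of $\thG(\sG_\birth^\Fcal)$ are ``the two distinct $(\Dimtop)$-cells'' incident on $\sG_\birth^\Fcal$. Distinctness is exactly what needs proof, not a consequence of the construction. If $\sG_\birth^\Fcal$ is a boundary $\Dimless$-simplex of $\wdtild{K}$, both of its sides are voids of $\real^\Dimtop\smallsetminus|\wdtild{K}|$, and a priori they could be the \emph{same} void (picture a flap of $\Dimless$-simplices protruding into a single void: every such simplex has the same void on both sides, and its dual ``edge'' would be a self-loop, making the choice $s_1=\Set{v_1}$, $s_2=\Set{v_2}$ ill-posed). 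The paper closes this by invoking Proposition~\ref{lem:z2-cycle-simplex-bound-diff}: since $\sG_\birth^\Fcal$ is positive it lies on a $\Dimless$-cycle $\zG$ surviving in $\wdtild{K}$ (via Proposition~\ref{prop:exist-subcyc-d-conn}), and the two oriented copies of $\sG_\birth^\Fcal$ bound different components of $\real^\Dimtop\smallsetminus|\Zcal|$ by the Alexander-duality argument; as $\real^\Dimtop\smallsetminus|\wdtild{K}|\subseteq\real^\Dimtop\smallsetminus|\Zcal|$, they bound different cells of $\Rcal^\Dimtop$, so $\thG(\sG_\birth^\Fcal)$ is a genuine edge with distinct endpoints. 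Without this separation argument your step (i) is unsupported. (Minor remark: the appeal to Proposition~\ref{prop:algr-inf-bar-pers-cut} to get a finite-capacity cut is harmless but unnecessary here, since every capacity in Algorithm~\ref{algr:min-cyc-inf-bar} is already finite by Definition~\ref{dfn:q-weighted}.)
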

\begin{proof}
First, the flow network $(G,s_1,s_2)$ constructed by Algorithm~\ref{algr:min-cyc-inf-bar} is valid.
The reason is that, by Proposition~\ref{lem:z2-cycle-simplex-bound-diff},
it cannot happen that the two oriented simplices of $\sG_\birth^\Fcal$ 
bound the same void of $\real^\Dimtop\smallsetminus|\wdtild{K}|$.
So $\sG_\birth^\Fcal$ must correspond to an edge of $G$.
Then by Proposition~\ref{prop:algr-inf-bar-cut-pers} and~\ref{prop:algr-inf-bar-pers-cut},
we can reach the conclusion.
\end{proof}


\section{Hardness for general complexes} \label{sec:hardness}

Similar to the work~\cite{chen2011hardness}, 
the NP-hardness proofs in this section accomplish the reduction with the help of a suspension operator. While Hatcher~\cite{hatcher2002algebraic} defines this operator for general topological spaces, we need a definition of the operator for simplicial complexes and observe some of its
properties that are useful for the proofs.

\subsection{Suspension operator}
\begin{Definition}[Suspension~\cite{ferrario2010simplicial}]
The {\it suspension} $\Sspn K$ of a simplicial complex $K$ 
is defined as a simplicial complex
\[
\begin{array}{l}
\Sspn K=\big\{\Set{\oG_1},\Set{\oG_2}\big\}\cup K\cup
\Big(\bigcup_{\sG\in K}\big\{\sG\cup\Set{\oG_1},\sG\cup\Set{\oG_2}\big\}\Big)
\end{array}
\]
where $\oG_1$, $\oG_2$ are two extra vertices.
\end{Definition}

\begin{Remark}
In the above definition, we denote a simplex by its set of vertices.
\end{Remark}


In the rest of this subsection,
we let $K$ be an arbitrary simplicial complex.
Any simplex of the form $\sG\cup\Set{\oG_i}$ in $\Sspn K$ is called a {\it suspended simplex}.
The symbol $\Sspn$ is also used to denote a linear map $\Sspn:\Chn_q(K)\to\Chn_{q+1}(\Sspn K)$, 
where $\Sspn \sG=\sG\cup\Set{\oG_1}+\sG\cup\Set{\oG_2}$ for any $q$-simplex $\sG$ of $K$. 
Note that since $\Sspn$ is injective, the map $\Sspn$ defines an isomorphism from $\Chn_q(K)$ to 
the image $\Sspn(\Chn_q(K))$. 
For any chain $A\in \Sspn(\Chn _q(K))$, we abuse the notation slightly by 
letting $\Sspn\inv A$ denote the chain in $\Chn_q(K)$ mapped to $A$ under $\Sspn$.

\begin{proposition}
\label{prop:isomorph-sus}
For any $q\geq 1$, the following diagram commutes:

\centerline{\xymatrix{
\Chn_q(K) \ar[d]_{\Sspn}^\approx \ar[r]^(.47){\partial} & \Chn_{q-1}(K) \ar[d]^{\Sspn}_\approx\\ 
\Sspn(\Chn_q(K)) \ar[r]^(.47){\partial} & \Sspn(\Chn_{q-1}(K)) \\ 
}}
\end{proposition}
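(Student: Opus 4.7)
The plan is to reduce the claim to a direct boundary computation on a single simplex, using linearity. Both $\partial$ and $\Sspn$ are linear maps over $\Zbb_2$, and every $q$-chain is a sum of $q$-simplices of $K$, so it suffices to prove that $\partial(\Sspn\sG)=\Sspn(\partial\sG)$ for an arbitrary $q$-simplex $\sG=\Set{v_0,\ldots,v_q}$ of $K$. As a byproduct, this computation will confirm that $\partial$ carries $\Sspn(\Chn_q(K))$ into $\Sspn(\Chn_{q-1}(K))$, so that the lower horizontal arrow of the diagram is well defined as written.

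First, I would expand $\Sspn\sG=\sG\cup\Set{\oG_1}+\sG\cup\Set{\oG_2}$ and take the boundary of each suspended $(q+1)$-simplex. For $\sG\cup\Set{\oG_i}$, the $q$-faces split into two groups: the single face $\sG$, obtained by deleting $\oG_i$, and the $q$ faces of the form $\Set{v_0,\ldots,\wdhat{v}_j,\ldots,v_q}\cup\Set{\oG_i}$, obtained by deleting each $v_j$. Summing the two expansions, the two copies of $\sG$ cancel over $\Zbb_2$, leaving
\[
\partial(\Sspn\sG)=\sum_{j=0}^{q}\Big(\Set{v_0,\ldots,\wdhat{v}_j,\ldots,v_q}\cup\Set{\oG_1}+\Set{v_0,\ldots,\wdhat{v}_j,\ldots,v_q}\cup\Set{\oG_2}\Big).
\]
By the definition of $\Sspn$, each summand is $\Sspn\big(\Set{v_0,\ldots,\wdhat{v}_j,\ldots,v_q}\big)$, so the right-hand side equals $\Sspn\big(\sum_j\Set{v_0,\ldots,\wdhat{v}_j,\ldots,v_q}\big)=\Sspn(\partial\sG)$, as desired.

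There is no real obstacle here: the proof is a one-line cancellation over $\Zbb_2$. The only point worth emphasizing is the cancellation of the two copies of $\sG$ in $\partial(\sG\cup\Set{\oG_1})+\partial(\sG\cup\Set{\oG_2})$; this is exactly what makes $\partial\circ\Sspn=\Sspn\circ\partial$ hold on the nose rather than up to a term supported in $K$. Since the already-established injectivity of $\Sspn$ gives the isomorphism $\Sspn:\Chn_q(K)\to\Sspn(\Chn_q(K))$ indicated in the diagram, combining it with the commutativity just proved yields the statement.
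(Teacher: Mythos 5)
Your proposal is correct and follows essentially the same route as the paper: both reduce to a single $q$-simplex by linearity, expand $\partial(\sG\cup\Set{\oG_1})+\partial(\sG\cup\Set{\oG_2})$, cancel the two copies of $\sG$ over $\Zbb_2$, and recognize the remaining terms as $\Sspn(\partial\sG)$. The remark that this computation also shows $\partial$ maps $\Sspn(\Chn_q(K))$ into $\Sspn(\Chn_{q-1}(K))$ is a nice touch that the paper leaves implicit.
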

\begin{proof}
For any $q$-simplex $\sG=\Set{v_0,\ldots,v_q}$ of $K$, we have
\begin{align*}
\partial(\Sspn\sG)
& =\partial\big(\Set{v_0,\ldots,v_q,\oG_1}+\Set{v_0,\ldots,v_q,\oG_2}\big) 
\\
& =\sum_{i=0}^q\Set{v_0,\ldots,\wdhat{v_i},\ldots,v_q,\oG_1}+\Set{v_0,\ldots,v_q}
+\sum_{i=0}^q\Set{v_0,\ldots,\wdhat{v_i},\ldots,v_q,\oG_2}+\Set{v_0,\ldots,v_q} 
\\
& =\sum_{i=0}^q\big(\Set{v_0,\ldots,\wdhat{v_i},\ldots,v_q,\oG_1}
+\Set{v_0,\ldots,\wdhat{v_i},\ldots,v_q,\oG_2}\big) 
\\
& =\sum_{i=0}^q\Sspn\big(\Set{v_0,\ldots,\wdhat{v_i},\ldots,v_q}\big)
=\Sspn\Bigg(\sum_{i=0}^q\Set{v_0,\ldots,\wdhat{v_i},\ldots,v_q}\Bigg)
  =\Sspn\partial(\sG)
\end{align*}
In the above equations, the notation $\wdhat{v_i}$ means that $v_i$ is deleted from the simplex.
\end{proof}

\begin{proposition}
\label{prop:sus-cyc-img}
For $q\geq 1$ and any $q$-cycle $\zG$ of $\Sspn K$ containing only suspended simplices, 
one has $\zG\in \Sspn(\Chn_{q-1}(K))$.
\end{proposition}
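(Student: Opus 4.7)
The plan is to decompose $\zeta$ according to which of the two suspension vertices $\omega_1, \omega_2$ each simplex contains, compute the boundary of each piece, and then exploit the fact that terms involving $\omega_1$, terms involving $\omega_2$, and terms involving neither must each vanish independently in $\partial \zeta = 0$.

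First I would introduce, for $i=1,2$, the injective linear map $\iota_i \colon \Chn_{q-1}(K) \to \Chn_q(\Sspn K)$ defined on a $(q-1)$-simplex $\sG$ by $\iota_i(\sG) = \sG \cup \Set{\oG_i}$, so that the suspension operator satisfies $\Sspn = \iota_1 + \iota_2$. Since every simplex of $\zeta$ is suspended, I can uniquely write
\[
\zeta = \iota_1(B_1) + \iota_2(B_2)
\]
for two $(q-1)$-chains $B_1, B_2$ of $K$, simply by collecting the simplices of $\zeta$ according to whether they contain $\oG_1$ or $\oG_2$.

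Next I would compute $\partial \iota_i(B_i)$. For a $(q-1)$-simplex $\sG = \Set{v_0,\ldots,v_{q-1}}$ of $K$, the definition of the boundary operator (over $\Zbb_2$) gives
\[
\partial(\sG \cup \Set{\oG_i}) = \sG + \sum_{j=0}^{q-1} \Set{v_0,\ldots,\wdhat{v_j},\ldots,v_{q-1}} \cup \Set{\oG_i} = \sG + \iota_i(\partial \sG).
\]
Extending linearly yields $\partial \iota_i(B_i) = B_i + \iota_i(\partial B_i)$, and therefore
\[
\partial \zeta \;=\; (B_1 + B_2) \;+\; \iota_1(\partial B_1) \;+\; \iota_2(\partial B_2).
\]

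The key observation is now that the three summands on the right live in pairwise disjoint sets of simplices of $\Sspn K$: $B_1 + B_2$ uses only simplices of $K$ (no suspension vertex), $\iota_1(\partial B_1)$ uses only simplices containing $\oG_1$, and $\iota_2(\partial B_2)$ uses only simplices containing $\oG_2$. Since $\partial \zeta = 0$ as chains, each of these three summands must vanish. In particular, $B_1 + B_2 = 0$, so with $\Zbb_2$ coefficients $B_1 = B_2$. Letting $B := B_1 = B_2$, we conclude
\[
\zeta = \iota_1(B) + \iota_2(B) = \Sspn(B) \in \Sspn(\Chn_{q-1}(K)),
\]
as required. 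The only delicate point is justifying the independent vanishing of the three summands, which is essentially a basis argument: the natural basis of $\Chn_\bullet(\Sspn K)$ partitions into simplices of $K$, simplices of the form $\tau \cup \Set{\oG_1}$, and simplices of the form $\tau \cup \Set{\oG_2}$, and the three summands are supported on these three disjoint parts.
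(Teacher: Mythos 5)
Your proof is correct and is essentially the same argument as the paper's: both hinge on the observation that the only suspended $q$-simplices of $\Sspn K$ having a given $(q-1)$-simplex $\sG$ of $K$ in their boundary are $\sG\cup\Set{\oG_1}$ and $\sG\cup\Set{\oG_2}$, so cancellation in $\partial\zG=0$ forces these to occur in pairs. Your version merely packages this simplexwise cancellation into a global decomposition $\zG=\iota_1(B_1)+\iota_2(B_2)$ with a support argument, which is a somewhat more explicit writeup of the paper's one-line proof.
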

\begin{proof}
For any suspended $q$-simplex $\sG\union\Set{\oG_i}$ of $\zG$, 
if $\oG_i=\oG_1$, 
then $\sG\union\Set{\oG_2}$ must also belong to $\zG$ 
because no other suspended $q$-simplices of $\Sspn K$ have $\sG$ in the boundary. 
If $\oG_i=\oG_2$, the same argument follows.
\end{proof}

\begin{proposition}
\label{prop:sus-bound-img}
If $q$ is the top dimension of $K$ and $q\geq 1$, 
then for any $A\in \Chn_{q+1}(\Sspn K)$ 
such that $\partial(A)$ contains only suspended simplices,
one has $A\in\Sspn(\Chn_q(K))$.
\end{proposition}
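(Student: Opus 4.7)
The plan is to exploit the dimension assumption heavily: since $q$ is the top dimension of $K$, every $(q+1)$-simplex of $\Sspn K$ must be a suspended simplex of the form $\sG \cup \Set{\oG_i}$ with $\sG$ a $q$-simplex of $K$ (there are no $(q+1)$-simplices of $K$ itself, and suspended simplices over lower-dimensional $\sG$ would have the wrong dimension). So $A$ decomposes uniquely as $A = A_1 + A_2$, where
\[
A_i = \sum_{\sG \in B_i}\sG \cup \Set{\oG_i}, \qquad i = 1,2,
\]
for some sets $B_1,B_2$ of $q$-simplices of $K$. The target conclusion $A \in \Sspn(\Chn_q(K))$ is equivalent to $B_1 = B_2$, since in that case $A = \sum_{\sG \in B_1}\Sspn(\sG) = \Sspn\!\left(\sum_{\sG \in B_1}\sG\right)$.

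First I would compute $\partial$ on a single suspended top simplex. Writing $\sG = \Set{v_0,\ldots,v_q}$,
\[
\partial\bigl(\sG \cup \Set{\oG_i}\bigr) = \sG + \sum_{j=0}^q \Set{v_0,\ldots,\wdhat{v_j},\ldots,v_q,\oG_i},
\]
where the first summand $\sG$ is a $q$-simplex of $K$ (\emph{not} suspended) and the remaining summands are suspended $q$-simplices. Summing over the simplices in $A_1$ and $A_2$, the non-suspended part of $\partial(A)$ is exactly
\[
\sum_{\sG \in B_1}\sG \;+\; \sum_{\sG \in B_2}\sG,
\]
viewed as a $\Zbb_2$-chain in $\Chn_q(K)$, while all other contributions to $\partial(A)$ are suspended $q$-simplices.

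Next, I would invoke the hypothesis that $\partial(A)$ contains only suspended simplices. This forces the non-suspended part above to vanish, i.e.\ $\sum_{\sG \in B_1}\sG + \sum_{\sG \in B_2}\sG = 0$ in $\Chn_q(K)$. Since each $q$-simplex of $K$ has $\Zbb_2$-coefficient either $0$ or $1$ in the sums over $B_1$ and $B_2$ separately, the only way their sum can be zero is that the two sets coincide: $B_1 = B_2$. Plugging this back yields $A = \Sspn\!\left(\sum_{\sG \in B_1}\sG\right) \in \Sspn(\Chn_q(K))$.

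I expect no real obstacle; the only point requiring care is the initial claim that every $(q+1)$-simplex of $\Sspn K$ is suspended (using $q = \dim K$), and then the bookkeeping that separates $\partial(A)$ cleanly into a ``$K$-part'' and a ``suspended part,'' which follows immediately from the explicit boundary formula above.
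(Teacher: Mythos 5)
Your proof is correct and follows essentially the same route as the paper's: both arguments first use the top-dimension hypothesis to conclude that $A$ consists only of suspended simplices, and then observe that the non-suspended face $\sG$ of $\sG\cup\Set{\oG_i}$ can only be cancelled in $\partial(A)$ by the presence of $\sG\cup\Set{\oG_j}$ for the other apex, since no other $(q+1)$-simplex of $\Sspn K$ has $\sG$ in its boundary. The only difference is presentational — you organize the cancellation globally via the decomposition $A=A_1+A_2$ and the equality $B_1=B_2$, whereas the paper argues simplex by simplex.
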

\begin{proof}
Because $q$ is the top dimension of $K$, $A$ contains only suspended simplices.
For any $\sG\union\Set{\oG_i} \in A$, we have $\sG\in\partial\big(\sG\union\Set{\oG_i}\big)$.
If $\oG_i=\oG_1$, 
to make $\sG$ cancelled in $\partial(A)$,
$\sG\union\Set{\oG_2}$ must also belong to $A$ 
because no other $(q+1)$-simplices in $\Sspn K$ have $\sG$ in the boundary. 
If $\oG_i=\oG_2$, the same argument follows.
\end{proof}

\subsection{Hardness for finite intervals}
The following proposition helps to prove our conclusion of the hardness:

\begin{proposition}
\label{lem:nph-fin}
{PCYC-FIN}$_{\Dim-1}$ reduces to {PCYC-FIN}$_\Dim$ for $\Dim\geq 2$.
\end{proposition}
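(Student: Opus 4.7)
The plan is to reduce PCYC-FIN$_{\Dim-1}$ to PCYC-FIN$_\Dim$ via the suspension operator $\Sspn$. Given an instance $(K,\Fcal,[\birth,\death))$ of PCYC-FIN$_{\Dim-1}$, I assume without loss of generality that $K$ has top dimension $\Dim$, since simplices of higher dimension do not affect $(\Dim-1)$-dimensional persistence. I construct an instance of PCYC-FIN$_\Dim$ over $\Sspn K$ in which every suspended $\Dim$-simplex $\tau\cup\Set{\oG_i}$ receives weight $w(\tau)$ (where $\tau$ is the underlying $(\Dim-1)$-simplex of $K$), and every non-suspended $\Dim$-simplex (i.e., every $\Dim$-simplex already in $K$) receives a weight $M>2W$ with $W:=\sum_{\tau}w(\tau)$ the total weight of the $(\Dim-1)$-simplices of $K$. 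The filtration $\Fcal'$ first inserts $\oG_1$ and $\oG_2$ and then, for each $i=1,\ldots,n$ in the order of $\Fcal$, inserts the three simplices $\sG_i^\Fcal$, $\sG_i^\Fcal\cup\Set{\oG_1}$, $\sG_i^\Fcal\cup\Set{\oG_2}$ consecutively. Every simplex has all of its faces present at its insertion, so $\Fcal'$ is a valid filtration, and the partial complex at position $3i+2$ of $\Fcal'$ is exactly $\Sspn K_i$.

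The next step is to verify that $[3\birth+2,3\death+2)\in\persdgm_\Dim(\Fcal')$. Using Proposition~\ref{prop:isomorph-sus} and the $\Zbb_2$ identity
\[
\partial(\sG_\birth^\Fcal\cup\Set{\oG_2})+\partial(\sG_\birth^\Fcal\cup\Set{\oG_1})=\Sspn(\partial\sG_\birth^\Fcal),
\]
a direct calculation shows that $\sG_\birth^\Fcal\cup\Set{\oG_2}$ is a positive $\Dim$-simplex in $\Fcal'$ and creates the cycle $\Sspn\zG$, where $\zG=\sG_\birth^\Fcal+B$ is the $(\Dim-1)$-cycle created by $\sG_\birth^\Fcal$ in $\Fcal$, with $\partial B=\partial\sG_\birth^\Fcal$ in $K_{\birth-1}$. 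Via the suspension isomorphism $\Hm_\Dim(\Sspn K_i)\cong\Hm_{\Dim-1}(K_i)$ applied at the sub-sequence of positions of the form $3i+2$, plus a routine check at the intermediate positions, $[\Sspn\zG]$ persists until $\Sspn K_\death$, where it is killed by $\sG_\death^\Fcal\cup\Set{\oG_2}$.

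To close the reduction, I establish a weight-doubling correspondence between persistent cycles of the two instances. Any persistent $(\Dim-1)$-cycle $\alpha$ of $[\birth,\death)$ in $\Fcal$ (a $(\Dim-1)$-cycle of $K_\birth$ containing $\sG_\birth^\Fcal$, with $\alpha=\partial D$ for some $D\in\Chn_\Dim(K_\death)$) yields, by Proposition~\ref{prop:isomorph-sus}, a persistent $\Dim$-cycle $\Sspn\alpha=\partial\Sspn D$ of $[3\birth+2,3\death+2)$ in $\Fcal'$ of weight $2w(\alpha)$. Conversely, because $\Sspn\zG$ is a candidate of weight at most $2W$, any minimum-weight persistent $\Dim$-cycle $\zG'$ of $[3\birth+2,3\death+2)$ has weight strictly below $M$ and therefore contains only suspended simplices, so Proposition~\ref{prop:sus-cyc-img} gives $\zG'=\Sspn\alpha$ for a $(\Dim-1)$-cycle $\alpha$ of $K$. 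Applying Proposition~\ref{prop:sus-bound-img} with $q=\Dim$ (the top dimension of $K_\death$) to the $(\Dim+1)$-chain in $\Sspn K_\death$ whose boundary is $\Sspn\alpha$ yields $D\in\Chn_\Dim(K_\death)$ with $\partial D=\alpha$, confirming that $\alpha$ is a persistent $(\Dim-1)$-cycle of $[\birth,\death)$ with $w(\alpha)=\tfrac{1}{2}w(\zG')$. The entire construction is clearly polynomial-time; the main technical obstacle is the pairing analysis in $\Fcal'$, since the interleaved insertions spawn many auxiliary short intervals (for example, the pair $(\sG_i^\Fcal,\sG_i^\Fcal\cup\Set{\oG_1})$ when $\sG_i^\Fcal$ is a positive $(\Dim-1)$-simplex in $\Fcal$) and one must carefully isolate the specific interval $[3\birth+2,3\death+2)$ whose persistent cycles correspond under $\Sspn$ to those of $[\birth,\death)$.
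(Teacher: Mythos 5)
Your proposal is correct and follows essentially the same route as the paper: the suspension $\Sspn K$ with an interleaved filtration, a prohibitively large weight on non-suspended $\Dim$-simplices forcing minimal persistent cycles to consist of suspended simplices only, and Propositions~\ref{prop:isomorph-sus}, \ref{prop:sus-cyc-img} and~\ref{prop:sus-bound-img} to transport cycles and bounding chains back to $K$ (your choice of weight $w(\tau)$ rather than $w(\tau)/2$ merely doubles all weights and changes nothing). The one step you defer as ``routine''---showing that $\sG_\birth^\Fcal\cup\Set{\oG_2}$ stays unpaired until $\sG_\death^\Fcal\cup\Set{\oG_2}$ arrives, i.e.\ that no $\Dim$-cycle it creates (which may well contain non-suspended $\Dim$-simplices) bounds earlier---is precisely where the paper's proof does its real work, cancelling each non-suspended $\Dim$-simplex $\sG$ of such a cycle by adding $\partial\big(\sG\cup\Set{\oG_1}\big)$ so that Propositions~\ref{prop:sus-cyc-img} and~\ref{prop:sus-bound-img} become applicable.
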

\begin{proof}
Given an instance $(K,\Fcal,[\birth,\death))$ of PCYC-FIN$_{\Dim-1}$,
where the $i_\text{th}$ complex of $\Fcal$ is denoted as $K_i$,
we can assume the top dimension of $K$ to be $\Dim$. 
The reason is that
if it were not, we can restrict $\Fcal$ to the $\Dim$-skeleton of $K$ without affecting
$\persdgm_{\Dim-1}(\Fcal)$ 
and the persistent $(\Dim-1)$-cycles. 
Then, we let $\Sspn K$ be the simplicial complex for the instance of PCYC-FIN$_\Dim$ we are going to construct.
For any suspended $\Dim$-simplex $\sG\cup\Set{\oG_i}$ of $\Sspn K$,
let the weight of $\sG\cup\Set{\oG_i}$ be
half of the weight of $\sG$ in $K$.
Furthermore,
let the weight of any non-suspended $\Dim$-simplex of $\Sspn K$ 
be the sum of all the weights of $(\Dim-1)$-simplices in $K$ plus $1$.
We endow $\Sspn K$ with a filtration
$\Sspn\Fcal: \emptyset=\wdhat{K}_0\subseteq \wdhat{K}_1\subseteq\ldots\subseteq \wdhat{K}_{3n+2}=\Sspn K$, 
where $n$ is the number of simplices of $K$. 
Denoting the ${i}_\text{th}$ simplex added in $\Fcal$ as $\sG_i$ 
and the ${i}_\text{th}$ simplex added in $\Sspn\Fcal$ 
as $\wdhat{\sG}_i$, 
we let  $\wdhat{\sG}_1=\Set{\oG_1}$, $\wdhat{\sG}_2=\Set{\oG_2}$, 
and for any $1\leq i\leq n$, 
$\wdhat{\sG}_{3i}=\sG_i$, $\wdhat{\sG}_{3i+1}=\sG_i\cup\Set{\oG_1}$,
$\wdhat{\sG}_{3i+2}=\sG_i\cup\Set{\oG_2}$. 


We observe the following facts: 
\begin{enumerate}[label=(\roman*)]
    \item For any $i$, $\wdhat{\sG}_{3i}$ is positive and pairs with $\wdhat{\sG}_{3i+1}$ in $\Sspn\Fcal$.
    \label{item:trivial-pair}
    \item For any $i$ and $j$, if there is a $(\Dim-1)$-cycle created by $\sG_i$ which is a boundary in $K_j$,
    then there is a $\Dim$-cycle created by $\wdhat{\sG}_{3i+2}$ which is a boundary in $\wdhat{K}_{3j+2}$.
    \label{item:norm-pcyc-to-sspn}
    \item For any $i$ and $j$, 
    if there is a $\Dim$-cycle created by $\wdhat{\sG}_{3i+2}$ which is a boundary in $\wdhat{K}_{3j+2}$,
    then there is a $(\Dim-1)$-cycle created by $\sG_i$ which is a boundary in $K_j$.
    \label{item:sspn-pcyc-to-norm}
\end{enumerate}
The correctness of (i) is not hard to verify.
To verify (ii),
we can suspend the $(\Dim-1)$-cycle and use Proposition~\ref{prop:isomorph-sus}
to reach the claim.
The argument for (iii) is as follows:
Consider a $\Dim$-cycle $\wdhat{\zG}_0$ created by $\wdhat{\sG}_{3i+2}$ which is a boundary in $\wdhat{K}_{3j+2}$.
For any non-suspended $\Dim$-simplex $\sG$ of $\wdhat{\zG}_0$, 
we add $\partial\big(\sG\cup\Set{\oG_1}\big)$ to the cycle $\wdhat{\zG}_0$ 
so that $\sG$ is canceled and only suspended simplices are added. 
Note that the adding process only adds $\Dim$-simplices in $\wdhat{K}_{3i+2}$ 
and never cancels $\wdhat{\sG}_{3i+2}$.
After all non-suspended simplices of $\wdhat{\zG}_0$ are canceled,
we derive a $\Dim$-cycle $\wdhat{\zG}$
which is created by $\wdhat{\sG}_{3i+2}$
and contains only suspended simplices.
By Proposition~\ref{prop:sus-cyc-img}, 
$\Sspn\inv\wdhat{\zG}$ is well defined.
Since $\wdhat{\zG}$ is homologous to $\wdhat{\zG}_0$ in $\wdhat{K}_{3i+2}$, 
$\wdhat{\zG}$ is also a boundary in $\wdhat{K}_{3j+2}$. 
Let $\wdhat{\zG}$ be the boundary of a $(\Dim+1)$-chain $\wdhat{A}$ in $\wdhat{K}_{3j+2}$.
Because $\Sspn K_j=\wdhat{K}_{3j+2}$,
by Proposition~\ref{prop:sus-bound-img},
$\wdhat{A}\in\Sspn(\Chn_{\Dim}(K_j))$.
Furthermore, by Proposition~\ref{prop:isomorph-sus},
we have $\Sspn\inv\wdhat{\zG}=\Sspn\inv\partial(\wdhat{A})=\partial(\Sspn\inv\wdhat{A})$.
So $\Sspn\inv\wdhat{\zG}$ is a $(\Dim-1)$-cycle created by $\sG_i$ which is a boundary in $K_j$. 

From the above facts, it is immediate that $\wdhat{\sG}_{3\birth+2}$ is a positive simplex in $\Sspn\Fcal$ and pairs with $\wdhat{\sG}_{3\delta+2}$ so that $[3\birth+2,3\death+2)$ 
is an interval in $\persdgm_{\Dim}(\Sspn\Fcal)$.
It is also true that there is a bijection from the persistent $(\Dim-1)$-cycles of $[\birth,\death)$
to the persistent $\Dim$-cycles of $[3\birth+2,3\death+2)$ containing only suspended simplices.
Furthermore, the bijection preserves the weights of the cycles.
From the weight assigning policy,
the minimal persistent $\Dim$-cycle of $[3\birth+2,3\death+2)$
must contain only suspended simplices,
so this minimal persistent $\Dim$-cycle of $[3\birth+2,3\death+2)$
induces a minimal persistent $(\Dim-1)$-cycle of $[\birth,\death)$.
Now we have reduced PCYC-FIN$_{\Dim-1}$ to PCYC-FIN$_\Dim$. 
Furthermore, the reduction is in polynomial time and the size of $(\Sspn K,\Sspn\Fcal,[3\birth+2,3\death+2))$ 
is a polynomial function of the size of $(K,\Fcal,[\birth,\death))$.
\end{proof}

We have the following result from~\cite{dey19pers1cyc}:
\begin{proposition}
\label{prop:nph-fin-dim1}
{PCYC-FIN}$_1$ is NP-hard.
\end{proposition}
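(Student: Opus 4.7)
The plan is to prove NP-hardness by a polynomial-time reduction from a known NP-hard problem about cycles in simplicial complexes. A natural source is the minimum-weight $1$-cycle in a given $\Zbb_2$-homology class (\textsc{Min-Hom-Cyc}$_1$), which is known to be NP-hard on $2$-complexes (e.g.\ Chambers--Erickson--Nayyeri for surfaces, Chen--Freedman for general 2-complexes). The goal is to encode an instance $(K, z, w)$ of \textsc{Min-Hom-Cyc}$_1$---asking for the minimum $w$-weight $1$-cycle $z'$ with $[z'] = [z] \in \Hm_1(K)$---as a filtration of an augmented $2$-complex so that every minimum persistent $1$-cycle of a designated finite interval recovers such a $z'$.

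The construction I have in mind attaches a cone on $z$ to $K$: introduce a new vertex $v$, a $1$-simplex $\Set{v,u}$ for every vertex $u$ appearing in $z$, and a $2$-simplex $\Set{v}\cup e$ for every $1$-simplex $e\in z$. This renders $z$ null-homologous in the augmented complex. Assign weights so that a distinguished ``tail'' of cone $1$-simplices is cheap while all other cone $1$-simplices carry a prohibitively large weight, and the $1$-simplices inherited from $K$ keep their original weights. Then build a filtration $\Fcal'$ that first introduces all of $K$, next the cone vertex together with its incident cone $1$-simplices in a carefully chosen order with the last one designated as $\sG_\birth^{\Fcal'}$, and finally the cone $2$-simplices with a specific one designated as $\sG_\death^{\Fcal'}$. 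With the order chosen correctly, $\sG_\birth^{\Fcal'}$ creates a new $1$-cycle whose restriction to $K$ is homologous to $z$, and $\sG_\death^{\Fcal'}$ kills precisely this class, producing the interval $[\birth,\death)\in\persdgm_1(\Fcal')$.

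With this setup, every persistent $1$-cycle of $[\birth,\death)$ decomposes as a fixed, cheap cone tail passing through $\sG_\birth^{\Fcal'}$ plus a $1$-cycle in $K$ homologous to $z$; conversely every $z$-homologous $1$-cycle in $K$ extends to such a persistent cycle. The weight assignment forces the cone tail to be identical for every feasible persistent cycle, so minimizing the persistent $1$-cycle is exactly minimizing over $z$-homologous cycles in $K$. A polynomial-time algorithm for \textsc{PCYC-FIN}$_1$ would therefore decide \textsc{Min-Hom-Cyc}$_1$ in polynomial time, establishing the desired NP-hardness.

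The hard part will be engineering the filtration and the cone weights simultaneously so that (i) $\sG_\birth^{\Fcal'}$ is positive and pairs precisely with $\sG_\death^{\Fcal'}$---no spurious $1$-cycle should be born or killed in between---and (ii) the mandatory cone portion of every persistent $1$-cycle is forced to be the same fixed chain, decoupling the minimization from the cone structure and isolating the hard subproblem in $K$. A secondary concern is ensuring that the class of complexes produced by the reduction still falls within the instances for which \textsc{Min-Hom-Cyc}$_1$ is known NP-hard; starting from an instance on a high-genus orientable surface and verifying that the cone attachment does not inadvertently trivialize $[z]$ in $\Hm_1$ should suffice.
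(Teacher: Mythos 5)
First, note that the paper does not actually prove this proposition: it is imported verbatim from~\cite{dey19pers1cyc}, so there is no in-paper argument to compare against, and your reduction must stand on its own.

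As it stands it does not. The central structural claim --- that every persistent $1$-cycle of $[\birth,\death)$ ``decomposes as a fixed, cheap cone tail passing through $\sG_\birth^{\Fcal'}$ plus a $1$-cycle in $K$ homologous to $z$'' --- cannot be correct, because the cone edges $\Set{v,u}$ form a tree (a star at $v$) and therefore contain no nonempty cycle. Consequently, if $\zeta$ is any cycle meeting the cone, its cone portion $T$ has $\partial T\neq 0$, and its portion $\kappa=\zeta\smallsetminus T$ inside $K$ is a chain with $\partial\kappa=\partial T\neq 0$, not a cycle homologous to $z$. Concretely, if the available cone triangles at time $\death$ are $t_1,\dots,t_{m-1}$ (cone over $z$ minus one triangle), the bounding chains force the $K$-part of the persistent cycle to lie in $P+\Bnd_1(K)$ where $P$ is a fixed \emph{path} (a sub-arc of $z$), so the optimization you isolate is ``minimum-weight chain in a coset of $\Bnd_1(K)$ with prescribed boundary,'' not the homology localization problem you are reducing from; you would need a further argument that this coset problem is NP-hard, which you do not supply. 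The pairing also does not come out as you assume: with the cone added after $K$, the triangle that finally renders $[z]$ null-homologous pairs with the positive edge of $K$ that created $[z]$ in $K$'s own filtration, while the last cone edge pairs with one of the \emph{earlier} cone triangles. So you must either use the interval born at a $K$-edge --- whose persistent cycles are additionally constrained to contain that specific edge and to live in a partial complex, and the minimal $z$-homologous cycle need not satisfy either constraint --- or use the interval born at a cone edge, which leads back to the coset problem above. These are exactly the issues you defer to ``the hard part,'' but they are the crux of the reduction, not engineering details; the high-level idea (cone off $z$ to manufacture a finite interval, reduce from localization \`a la~\cite{chambers2009minimum,chen2011hardness}) is sound and is in the same spirit as the suspension reductions of Section~5, but the proof is not complete without resolving them. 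A minor further confusion: the complexes \emph{produced} by the reduction need only be valid instances of {PCYC-FIN}$_1$ (arbitrary weighted complexes), not hard instances of localization, and trivializing $[z]$ in the augmented complex is the intended effect of the cone rather than something to be avoided.
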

Combining Proposition~\ref{lem:nph-fin} and~\ref{prop:nph-fin-dim1}, 
we obtain the following theorem:

\begin{theorem}
{PCYC-FIN}$_\Dim$ is NP-hard for $\Dim\geq 1$.
\end{theorem}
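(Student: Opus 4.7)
The plan is to prove the theorem by induction on $\Dim$, combining the two preceding propositions. For the base case $\Dim = 1$, Proposition~\ref{prop:nph-fin-dim1} (quoted from~\cite{dey19pers1cyc}) directly gives that PCYC-FIN$_1$ is NP-hard. For the inductive step, suppose PCYC-FIN$_{\Dim-1}$ is NP-hard for some $\Dim \geq 2$. Proposition~\ref{lem:nph-fin} supplies a polynomial-time reduction from PCYC-FIN$_{\Dim-1}$ to PCYC-FIN$_\Dim$: given any instance $(K,\Fcal,[\birth,\death))$ of the former, one constructs $(\Sspn K, \Sspn\Fcal, [3\birth+2, 3\death+2))$ in polynomial time, and an optimal solution to the latter corresponds to an optimal solution of the former via $\Sspn\inv$. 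Since polynomial-time many-one reductions compose with NP-hardness, PCYC-FIN$_\Dim$ is NP-hard.

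The reason this works cleanly is that both the reduction mapping and the inverse translation of the witness are polynomial, and the weight-scaling used in Proposition~\ref{lem:nph-fin} (assigning weight $w(\sG)/2$ to suspended $\Dim$-simplices and a prohibitively large weight to non-suspended ones) forces minimal persistent $\Dim$-cycles of the constructed interval to lie in $\Sspn(\Chn_{\Dim-1}(K))$, so that $\Sspn\inv$ recovers a minimal persistent $(\Dim-1)$-cycle of the original instance. Thus an oracle for PCYC-FIN$_\Dim$ would solve PCYC-FIN$_{\Dim-1}$ in polynomial time, contradicting the inductive hypothesis.

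There is essentially no obstacle remaining: the hard work has already been done in Propositions~\ref{lem:nph-fin} and~\ref{prop:nph-fin-dim1}. The only thing to mention explicitly is that the statement quantifies over all $\Dim \geq 1$, so the induction must be initiated at $\Dim = 1$ (supplied by the known one-dimensional hardness) and propagated upward one dimension at a time via the suspension reduction. The write-up therefore amounts to just a sentence or two invoking induction and citing the two propositions.
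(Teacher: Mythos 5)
Your proposal is correct and matches the paper's argument exactly: the paper also obtains the theorem by combining Proposition~\ref{prop:nph-fin-dim1} (the $\Dim=1$ base case from~\cite{dey19pers1cyc}) with the polynomial-time suspension reduction of Proposition~\ref{lem:nph-fin}, propagating hardness up one dimension at a time. No gaps; the inductive framing you give is just a slightly more explicit phrasing of the paper's one-line ``combining the two propositions.''
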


\subsection{Hardness for infinite intervals}\label{sec:inf-hard}

In this subsection, we prove that it is NP-hard to approximate WPCYC-INF$_\Dim$ with any fixed ratio.
Let PROB be a minimization problem with solutions having positive costs.
Given an instance $\Ical$ of PROB,
let $C^*$ be the cost of the minimal solution of $\Ical$.
For $r\geq 1$,
a solution of $\Ical$ with cost $C$ is said to have an {\it approximation ratio $r$}
if $C/C^*\leq r$~\cite{CLRS-approx}.
We let PROB$[r]$ denote the problem that asks for an approximate solution
with ratio $r$ given an instance of PROB.
Moreover, in order to make approximation ratios well-defined for WPCYC-INF$_\Dim$,
we let WPCYC-INF$_\Dim^+$
denote a subproblem of WPCYC-INF$_\Dim$ 
where all $\Dim$-simplices are positively weighted.

Before proving the hardness result, we first recall the definition of the nearest codeword problem, 
which is NP-hard to approximate with any fixed ratio~\cite{chen2011hardness}:

\begin{problem}[NR-CODE]
Given an $l\by k$ full-rank matrix $\Acal$ over $\Zbb_2$ for $k<l$ and a vector $y_0\in (\Zbb_2)^l \smallsetminus \img(\Acal)$, 
find a vector in $y_0+\img(\Acal)$ with the minimal Hamming weight.
\end{problem}

\begin{Remark}
The Hamming weight of a vector $y$, denoted as $\|y\|_H$,
is the number of non-zero components in $y$.
\end{Remark}

\begin{theorem}\label{thm:WPCYC-INF-2-hard}
{WPCYC-INF}$^{\, +}_2$ is NP-hard to approximate with any fixed ratio.
\end{theorem}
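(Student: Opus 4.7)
The plan is an approximation-preserving polynomial reduction from NR-CODE to WPCYC-INF$^{+}_2$. Given an instance $(\Acal, y_0)$ of NR-CODE, I will build a weak $3$-pseudomanifold $\wdhat{K}$ with positive $2$-simplex weights, a filtration $\wdhat{\Fcal}$, and an infinite interval $[\wdhat{\beta}, +\infty) \in \persdgm_2(\wdhat{\Fcal})$ such that the minimum persistent $2$-cycle weight equals the minimum Hamming weight in $y_0 + \img(\Acal)$ up to an arbitrarily small additive slack. Since NR-CODE is NP-hard to approximate within any fixed ratio, the theorem will follow. The strategy parallels the Chen--Freedman hardness argument~\cite{chen2011hardness}, replacing homology classes with persistent cycle classes and using the suspension operator from Section~\ref{sec:hardness}.

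First, I build a base weak $2$-pseudomanifold $L$ that geometrically encodes $(\Acal, y_0)$. The complex $L$ will contain $l$ distinguished $1$-simplices $e_1,\ldots,e_l$ of weight $1$ (one per row of $\Acal$); for each column $j$, a triangulated gadget $B_j \subseteq L$ with $\partial B_j = \sum_{i : \Acal_{ij}=1} e_i$; and a distinguished $1$-cycle $z_0$ containing exactly the edges $\{e_i : (y_0)_i = 1\}$ together with auxiliary routing edges. I design $L$ so that (i)~every $1$-simplex of $L$ is a face of at most two $2$-simplices, (ii)~$\dim \Hm_1(L) = 1$ generated by $[z_0]$, (iii)~$\img(\partial_2^L)$, written in the $e_i$-coordinates, coincides with $\img(\Acal)$, and (iv)~there is a planted edge $e^* \in z_0$ that lies in every $1$-cycle of the class $[z_0]$, for instance by inserting $e^*$ as a bridge between two otherwise $[z_0]$-trivial halves of $L$. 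Auxiliary simplices carry a small positive weight $\varepsilon > 0$ chosen so that the total auxiliary weight is $<1/(r+1)$ for any fixed target ratio $r$. Then the minimum weight of a $1$-cycle homologous to $z_0$ in $L$ equals $\min_{y \in y_0 + \img(\Acal)}\|y\|_H$ plus an $O(\varepsilon)$ slack.

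Next, I lift by suspension. Set $\wdhat{K} = \Sspn L$. Since $L$ is a weak $2$-pseudomanifold, each suspended $2$-simplex $e \cup \{\omega_i\}$ of $\wdhat{K}$ has as many $3$-cofaces as $e$ has $2$-cofaces in $L$, so $\wdhat{K}$ is a weak $3$-pseudomanifold. I assign weight $w(e)/2$ to each suspended $2$-simplex $e \cup \{\omega_i\}$, and a prohibitively large positive weight $W$ to each non-suspended $2$-simplex of $\wdhat{K}$, so that no minimum-weight persistent $2$-cycle uses any non-suspended $2$-simplex. I order $\wdhat{\Fcal}$ by adding all vertices and $1$-simplices first, then all non-suspended $2$-simplices, then all suspended $2$-simplices except $\wdhat{\sigma}^* := e^* \cup \{\omega_1\}$, then $\wdhat{\sigma}^*$ itself, and finally the $3$-simplices. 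Using Proposition~\ref{prop:isomorph-sus} together with the isomorphism $\Hm_2(\wdhat{K}) \cong \Hm_1(L)$ (which is one-dimensional and generated by $[\Sspn z_0]$), the simplex $\wdhat{\sigma}^*$ is the unique positive $2$-simplex paired with $+\infty$, yielding the infinite interval $[\wdhat{\beta},+\infty)$ with $\wdhat{\beta} = \ind(\wdhat{\sigma}^*)$.

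For correctness, let $\wdhat{\zeta}$ be any minimum-weight persistent $2$-cycle of $[\wdhat{\beta},+\infty)$. By the weight choice $\wdhat{\zeta}$ consists only of suspended $2$-simplices, so Proposition~\ref{prop:sus-cyc-img} gives $\wdhat{\zeta} = \Sspn \zeta$ for some $\zeta \in \Chn_1(L)$; Proposition~\ref{prop:isomorph-sus} shows $\zeta$ is a $1$-cycle and $w(\wdhat{\zeta}) = w(\zeta)$; since $\wdhat{\sigma}^* \in \wdhat{\zeta}$ we have $e^* \in \zeta$; and Propositions~\ref{prop:isomorph-sus} and~\ref{prop:sus-bound-img} together yield $[\zeta] = [z_0]$ in $\Hm_1(L)$. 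Conversely, by condition~(iv) every $1$-cycle in $[z_0]$ contains $e^*$ and lifts under $\Sspn$ to a persistent $2$-cycle of equal weight. Hence the minimum persistent $2$-cycle weight equals $\min_{y \in y_0 + \img(\Acal)}\|y\|_H + O(\varepsilon)$, and because Hamming weights are integral while $\varepsilon < 1/(r+1)$, an $r$-approximation of WPCYC-INF$^{+}_2$ yields an $r$-approximation of NR-CODE. The main obstacle will be implementing conditions~(i)--(iv) simultaneously in the construction of $L$, particularly encoding $\img(\Acal)$ as $\img(\partial_2^L)$ via triangulated pipe/pair-of-pants gadgets while preserving the weak $2$-pseudomanifold property, forcing $\dim \Hm_1(L) = 1$, and planting a cycle-traversed bridge $e^*$; the combinatorial bookkeeping for these simultaneous requirements is where most of the detailed proof lies.
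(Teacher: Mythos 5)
Your overall strategy (reduce from NR-CODE, plant a distinguished simplex whose infinite interval pins down the relevant coset, and control the approximation ratio with a small slack) is the same as the paper's, but the concrete encoding in your base complex $L$ has a gap that I believe is fatal as stated. Condition~(iii) asks that $\img(\partial_2^L)$, projected to the $e_i$-coordinates, coincide with $\img(\Acal)$. But $\img(\partial_2^L)$ is generated by the boundaries of \emph{individual triangles}, and every designated edge $e_i$ must be a face of at least one triangle of some gadget $B_j$ in order to appear in $\partial B_j$. A triangle whose only designated edge is $e_i$ contributes the unit vector $\mathbf{e}_i$ to the projection of $\img(\partial_2^L)$; since a triangle has only three edges, the projection will generically contain vectors of Hamming weight at most $3$ in every coordinate, which for a code of larger minimum distance forces the projection to be strictly larger than $\img(\Acal)$. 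Concretely, if $t$ is a triangle with faces $e_i,a,b$ where $a,b$ are auxiliary edges of weight about $\varepsilon$, then $z_0+\partial t$ is homologous to $z_0$, drops $e_i$ (weight $1$) and gains weight $2\varepsilon$; iterating, the minimal cycle in $[z_0]$ has weight $O(\varepsilon\cdot\mathrm{poly})$ and its $e_i$-coordinates leave the coset $y_0+\img(\Acal)$ entirely. So with designated coordinates represented by single edges, cheap local pushes across triangles destroy the correspondence with NR-CODE, and making the auxiliary edges expensive instead would make them dominate the objective. This is not mere "combinatorial bookkeeping" to be deferred; it is the crux.

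The paper sidesteps this by working one dimension up from the start and never suspending for the base case $d=2$. It builds a two-dimensional \emph{tube complex}: $(l-k)$ one-cells (circles) and $l$ two-cells (spheres with holes) glued so that the cellular boundary matrix $\partial_2$ equals the parity-check matrix $\Acal^\perp$, whence $\Zyc_2(T_1)=\ker(\Acal^\perp)=\phi(\img(\Acal))$; each coordinate is an entire closed surface-with-holes of weight $1$, and after triangulation a $2$-cycle must use each cell entirely or not at all, so no partial shortcutting is possible. The coset is then realized by attaching one extra $2$-cell $\wdhat{t}$ with boundary $\partial_2(\wdtild{y}_0)$, so that the persistent $2$-cycles of the infinite interval created by the last triangle of $\wdhat{t}$ are exactly $\wdhat{t}+\wdtild{y}_0+\Zyc_2(T_1)$ (this plays the role of your planted $e^*$, and the extra weight of $\wdhat{t}$ accounts for the factor $2$ in reducing NR-CODE$[2r]$ to WPCYC-INF$^{+}_2[r]$). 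Note also that a $2$-dimensional complex is \emph{vacuously} a weak $3$-pseudomanifold, so the paper gets the pseudomanifold condition for free, whereas your route forces $L$ itself to be a weak $2$-pseudomanifold (so that $\Sspn L$ is a weak $3$-pseudomanifold), which further constrains the already problematic gadget design. To repair your argument, replace the single edges $e_i$ by whole $2$-cells and drop the suspension; the suspension idea is the right tool only for lifting the hardness from dimension $d-1$ to $d$ for $d\geq 3$, as in the paper's final theorem.
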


Similar to the NP-hardness proof of homology localization in~\cite{chen2011hardness}, 
our proof of Theorem~\ref{thm:WPCYC-INF-2-hard} conducts the reduction from the NR-CODE problem.
One may think that a direct reduction from homology localization may be more straightforward.
However, such a reduction is not immediately evident.
The two problems appear to be of different nature:
While the homology localization problem asks for a minimal cycle in a given homology class,
{WPCYC-INF}$^{+}_2$ asks for a minimal cycle in a complex containing a given simplex 
without referring to any particular homology class.

\begin{proof}
For any $r> 1$, we reduce the NP-hard problem NR-CODE$[2r]$ to WPCYC-INF$^+_2[r]$.
Given an instance $(\Acal,y_0)$ of NR-CODE$[2r]$, 
we first compute the $(l-k)\by l$ parity check matrix $\Acal^\perp$~\cite{chen2011hardness}, 
which is a matrix such that $\ker(\Acal^\perp)=\img(\Acal)$.
Similar to the proof of Lemma 4.3.1 in~\cite{chen2011hardness}, 
we then build a ``tube complex'' $T_1$ with $(l-k)$ 1-cells each of which is a 1-sphere 
and $l$ 2-cells each of which is a 2-sphere with holes. 
The 2-cells of $T_1$ are attached to the 1-cells along the holes such that the boundary matrix $\partial_2$ of this tube complex equals $\Acal^\perp$.
The ``$\diml$-chains'' and ``$\diml$-cycles'' for a tube complex 
are analogously defined as for a simplicial complex.
We also assign a weight of 1 to each 2-cell of $T_1$.
By this construction, there is a straightforward bijection $\phi:(\Zbb_2)^l\to\Chn_2(T_1)$, 
such that the Hamming weight of a vector equals the weight of the corresponding 2-chain.
Note that $\Zyc_2(T_1)=\ker(\partial_2)=\phi(\ker(\Acal^\perp))=\phi(\img(\Acal))$.
Let $\wdtild{y}_0=\phi(y_0)$,
we then add a 2-cell $\wdhat{t}$ whose boundary equals $\partial_2(\wdtild{y}_0)$
to $T_1$ and get a new tube complex $T_2$.
We call the 2-cycles in $T_2$ which are not in $T_1$ as the new 2-cycles in $T_2$.
Then $\wdhat{t}+\wdtild{y}_0$ is a new 2-cycle in $T_2$
and the set of new 2-cycles in $T_2$ is $\wdhat{t}+\wdtild{y}_0+\Zyc_2(T_1)$.
We let the weight of $\wdhat{t}$ also be 1.
Note that there is a bijection $\psi:y_0+\img(\Acal)\to\wdhat{t}+\wdtild{y}_0+\Zyc_2(T_1)$,
where $\psi(y_0+z)=\wdhat{t}+\wdtild{y}_0+\phi(z)$
for any $z\in \img(\Acal)$,
such that $w(\psi(y_0+z))=\|y_0+z\|_H+w(\wdhat{t})$.


We then construct an instance of WPCYC-INF$^+_2[r]$ 
by first triangulating $T_2$ to get a simplicial complex $K$.
We make $K$ 2-weighted such that 
the sum of the weights of all triangles in any 2-cell of $T_2$ equals 
the weight of the 2-cell.
It is not hard to make the size of $K$ a polynomial function of the number of cells of $T_2$.
Let $\sG$ be a 2-simplex in the triangulation of the 2-cell $\wdhat{t}$.
We build a filtration $\Fcal$ of $K$ with $\sG$ being the last simplex added. 
Let the index of $\sG$ in $\Fcal$ be $\birth$.
Then, $[\birth,+\infty)$ is an infinite interval of $\persdgm_2(\Fcal)$. 
Note that there is a bijection between the new 2-cycles in $T_2$ 
and the persistent 2-cycles of $[\birth,+\infty)$,
where the weights of the cycles are preserved.
Therefore, from the solution of WPCYC-INF$^+_2[r]$ with the input $(K,\Fcal,[\birth,+\infty))$,
we can derive a new 2-cycle $\wdhat{t}+\wdtild{y}_0+\zG$ of $T_2$,
where $\zG\in\Zyc_2(T_1)$
and $\wdhat{t}+\wdtild{y}_0+\zG$ is an $r$-approximation of the minimal new 2-cycle.
Let $\wdhat{t}+\wdtild{y}_0+\zG^*$ be a minimal new 2-cycle of $T_2$,
we have
\[\frac{w(\wdhat{t}+\wdtild{y}_0+\zG)}{w(\wdhat{t}+\wdtild{y}_0+\zG^*)}\leq r
\implies
\frac{w(\wdhat{t})+w(\wdtild{y}_0+\zG)}{w(\wdhat{t})+w(\wdtild{y}_0+\zG^*)}\leq r
\implies
w(\wdtild{y}_0+\zG)\leq r-1+rw(\wdtild{y}_0+\zG^*)\]
We also have
\[1\leq \frac{r}{r-1}w(\wdtild{y}_0+\zG^*)\implies r-1\leq rw(\wdtild{y}_0+\zG^*) \]
Therefore
\[w(\wdtild{y}_0+\zG)\leq 2rw(\wdtild{y}_0+\zG^*)\implies \|y_0+\phi\inv(\zG)\|_H\leq 2r\|y_0+\phi\inv(\zG^*)\|_H \]
Since $y_0+\phi\inv(\zG^*)$ is a minimal solution of $(\Acal,y_0)$,
then $y_0+\phi\inv(\zG)$ is a $2r$-approximation of the minimal solution of $(\Acal,y_0)$.
Hence, we have reduced NR-CODE$[2r]$ to WPCYC-INF$^+_2[r]$.
Furthermore, the reduction is in polynomial time and the sizes of the instances are related by a polynomial function, 
so WPCYC-INF$^+_2[r]$ is NP-hard.
\end{proof}

\begin{theorem}
{WPCYC-INF}$^{\, +}_\Dim$ is NP-hard to approximate with any fixed ratio for $\Dim\geq 2$.
\end{theorem}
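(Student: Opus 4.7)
The plan is to argue by induction on $\Dim$, taking Theorem~\ref{thm:WPCYC-INF-2-hard} as the base case. For the inductive step I would reduce WPCYC-INF$^{+}_{\Dim-1}[r]$ to WPCYC-INF$^{+}_{\Dim}[r]$ for each fixed $r>1$, using the suspension construction in a manner parallel to Proposition~\ref{lem:nph-fin} but adapted to infinite intervals and to the weak-pseudomanifold constraint.

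Given an instance $(K,\Fcal,[\birth,+\infty))$ of WPCYC-INF$^{+}_{\Dim-1}$ with $K$ a weak $\Dim$-pseudomanifold, I would first restrict $K$ and $\Fcal$ to the $\Dim$-skeleton of $K$. Since simplices above dimension $\Dim$ affect neither $\persdgm_{\Dim-1}$ nor its persistent cycles, this restriction preserves the input data, and it also preserves the weak $\Dim$-pseudomanifold property. I would then form $\Sspn K$ and verify that it is a weak $(\Dim+1)$-pseudomanifold: an original $\Dim$-simplex $\sG$ of $K$ has exactly two $(\Dim+1)$-cofaces in $\Sspn K$, namely its suspended copies $\sG\cup\Set{\oG_1}$ and $\sG\cup\Set{\oG_2}$ (no other cofaces exist because $K$ has no original $(\Dim+1)$-simplex after restriction), while a suspended $\Dim$-simplex $\tau\cup\Set{\oG_i}$ has $(\Dim+1)$-cofaces of the form $\sG\cup\Set{\oG_i}$ for $\sG$ a $\Dim$-coface of $\tau$ in $K$, of which there are at most two by the weak $\Dim$-pseudomanifold property of $K$.

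Next, the filtration $\Sspn\Fcal$ is defined exactly as in Proposition~\ref{lem:nph-fin}, and facts (i)--(iii) there carry over. In particular, the ``never a boundary'' direction of fact (iii) implies that if $\sG_\birth^\Fcal$ creates a $(\Dim-1)$-cycle which remains a non-boundary throughout $\Fcal$, then $\sG_\birth^\Fcal\cup\Set{\oG_2}$ is positive in $\Sspn\Fcal$ at index $3\birth+2$ and remains unpaired, so $[3\birth+2,+\infty)\in\persdgm_{\Dim}(\Sspn\Fcal)$. I would assign weights $w(\tau\cup\Set{\oG_i}):=w(\tau)/2$ to each suspended $\Dim$-simplex (positive since the input is from WPCYC-INF$^{+}$) and $W:=r\sum_\tau w(\tau)+1$ to each original $\Dim$-simplex, yielding an instance of WPCYC-INF$^{+}_{\Dim}[r]$ of size polynomial in the original input for fixed $r$.

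By Propositions~\ref{prop:isomorph-sus} and~\ref{prop:sus-cyc-img}, $\Sspn$ is a weight-preserving bijection between persistent $(\Dim-1)$-cycles of $[\birth,+\infty)$ in $\Fcal$ and those persistent $\Dim$-cycles of $[3\birth+2,+\infty)$ in $\Sspn\Fcal$ containing only suspended simplices; hence both optima coincide and are at most $\sum_\tau w(\tau)<W/r$. Any $r$-approximate persistent $\Dim$-cycle $\wdhat{\zG}$ therefore has weight strictly less than $W$, forbidding it from containing any original $\Dim$-simplex, so $\Sspn\inv\wdhat{\zG}$ is a well-defined persistent $(\Dim-1)$-cycle of $[\birth,+\infty)$ of the same weight and is $r$-approximate in the original instance. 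The main obstacle I anticipate is carefully transferring facts (ii)--(iii) of Proposition~\ref{lem:nph-fin} to the infinite-interval setting (certifying that ``never paired'' and ``never a boundary'' pass cleanly between $\sG_\birth^\Fcal$ and $\sG_\birth^\Fcal\cup\Set{\oG_2}$) together with bookkeeping that the $r$-dependent threshold $W$ remains polynomial-size. Once those checks are in place the inductive step concludes and the theorem follows.
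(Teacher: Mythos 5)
Your reduction is correct, and it reaches the same conclusion as the paper's proof via the same suspension idea, but the two arguments are set up quite differently. The paper exploits the fact that for an infinite interval only the partial complex $K_\birth$ and its $(\Dim-1)$-cycles are relevant: it takes $K'=\Sspn K_\birth^{\Dim-1}$, the suspension of the $(\Dim-1)$-\emph{skeleton} of $K_\birth$ rather than of the $\Dim$-skeleton of all of $K$. Because $K'$ then has no $(\Dim+1)$-simplices it is vacuously a weak $(\Dim+1)$-pseudomanifold, and because it has no non-suspended $\Dim$-simplices, every $\Dim$-cycle of $K'$ already lies in $\Sspn(\Chn_{\Dim-1}(K_\birth))$ by Proposition~\ref{prop:sus-cyc-img}; hence $\Sspn$ is a weight-preserving bijection on \emph{all} persistent cycles and no penalty weight $W$ is needed. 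Moreover the filtration of $K'$ is arbitrary subject only to $\sG_\birth^\Fcal\cup\Set{\oG_2}$ being added last, which sidesteps the interleaved filtration $\Sspn\Fcal$ and the transfer of facts (i)--(iii) of Proposition~\ref{lem:nph-fin} that you correctly single out as the delicate part of your route. What your version buys is uniformity with the finite-interval reduction and the retention of the whole input filtration; what the paper's version buys is that the two points you flag as needing care (the pairing analysis for unpairedness and the $r$-dependent weight threshold) disappear entirely. Your verification that $\Sspn K$ is a weak $(\Dim+1)$-pseudomanifold, your observation that the optimum of the new instance is attained by a suspended cycle so that $r$-approximations transfer, and your bookkeeping of $W$ are all sound, so your argument does go through once the flagged transfer of facts (ii)--(iii) is written out.
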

\begin{proof}
For any $\Dim\geq 3$ and $r\geq 1$,
we reduce WPCYC-INF$^+_{\Dim-1}[r]$ to WPCYC-INF$^+_\Dim[r]$.
Given an instance $(K,\Fcal,[\birth,+\infty))$ of WPCYC-INF$^+_{\Dim-1}[r]$,
where the $i_\text{th}$ complex of $\Fcal$ is denoted as $K_i$,
let $K'=\Sspn K_\birth^{\Dim-1}$ where
$K_\birth^{\Dim-1}$ is the $(\Dim-1)$-skeleton of $K_\birth$.
We make $K'$ $\Dim$-weighted such that 
any $\Dim$-simplex $\sG\cup\Set{\oG_i}$ of $K'$ has half of the weight of $\sG$ in $K$.
The complex $K'$ is endowed with a filtration $\Fcal'$
such that $\sG_\birth^\Fcal\cup\Set{\oG_2}$ is the last simplex added to $\Fcal'$.
Let $\birth'$ be the index of $\sG_\birth^\Fcal\cup\Set{\oG_2}$ in $\Fcal'$, 
then $[\birth',+\infty)\in\persdgm_\Dim(\Fcal')$.
It is true that $\Sspn$ restricts to a bijection from $\Zyc_{\Dim-1}(K_\birth)$ to $\Zyc_\Dim(K')$
preserving the weights of the cycles.
Furthermore, for any $\zG\in\Zyc_{\Dim-1}(K_\birth)$,
$\zG$ is a persistent $(\Dim-1)$-cycle of $[\birth,+\infty)\in\persdgm_{\Dim-1}(\Fcal)$
if and only if $\Sspn\zG$ is a persistent $\Dim$-cycle of $[\birth',+\infty)\in\persdgm_\Dim(\Fcal')$.
Suppose that $\zG'$ is a solution for the instance $(K',\Fcal',[\birth',+\infty))$ of WPCYC-INF$^+_\Dim[r]$,
i.e., $\zG'$ is an $r$-approximation of the minimal solution.
Then, $\Sspn\inv\zG'$ is an $r$-approximation 
for the instance $(K,\Fcal,[\birth,+\infty))$ of WPCYC-INF$^+_{\Dim-1}[r]$.
Therefore, the reduction is done.
\end{proof}

\section{Experimental results}\label{sec:exp}

We experiment with our algorithms for WPCYC-FIN$_2$ and WEPCYC-INF$_2$ on several volume datasets. 
Since volume data have a natural cubical complex structure, we adapt 
our implementation slightly in order to work on cubical complexes.
The cubical complex for volume data consists of cells in dimensions from 0 to 3 
with the underlying space homeomorphic to a 3-dimensional ball.
Note that a filtration built from a volume dataset 
does not produce any infinite intervals.
Hence, in order to test our algorithm for WEPCYC-INF$_2$,
we take a finite interval and compute the minimal 2-cycle born at the birth time,
which is exactly what WEPCYC-INF$_2$ computes.
We use the \texttt{Gudhi}~\cite{gudhi:urm} library to build the filtrations 
and compute the persistence intervals.
From the experiments, 
we can see that the minimal persistent 2-cycles computed by our algorithms capture various features of
the data which originate from different fields.
Note that the combustion, hurricane, and medical datasets 
are time-varying 
and we chose a single time frame to compute the persistent intervals and cycles.

\begin{figure}
\centering
  \subfloat[]{\includegraphics[width=0.25\linewidth]{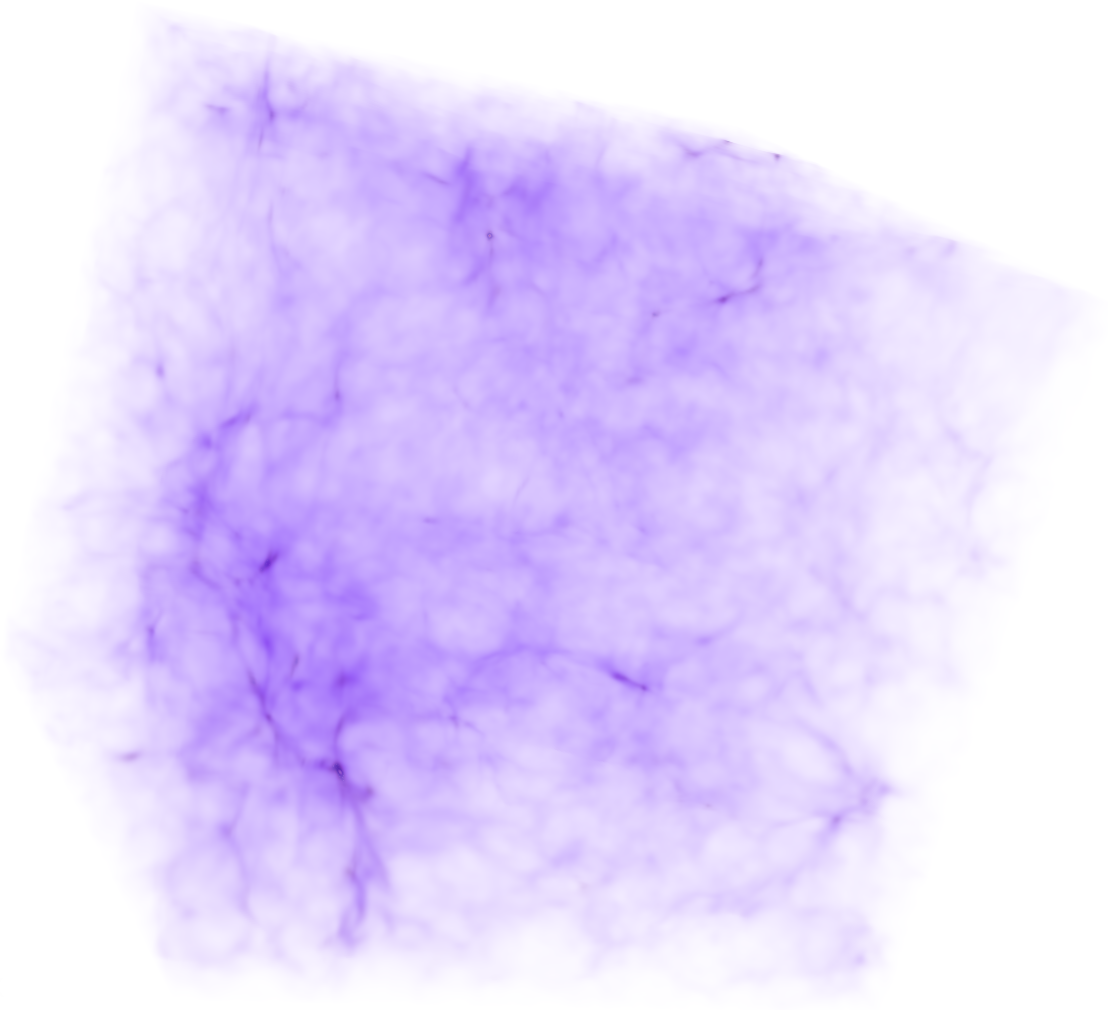}\label{fig:cos1}}
  \subfloat[]{\includegraphics[width=0.25\linewidth]{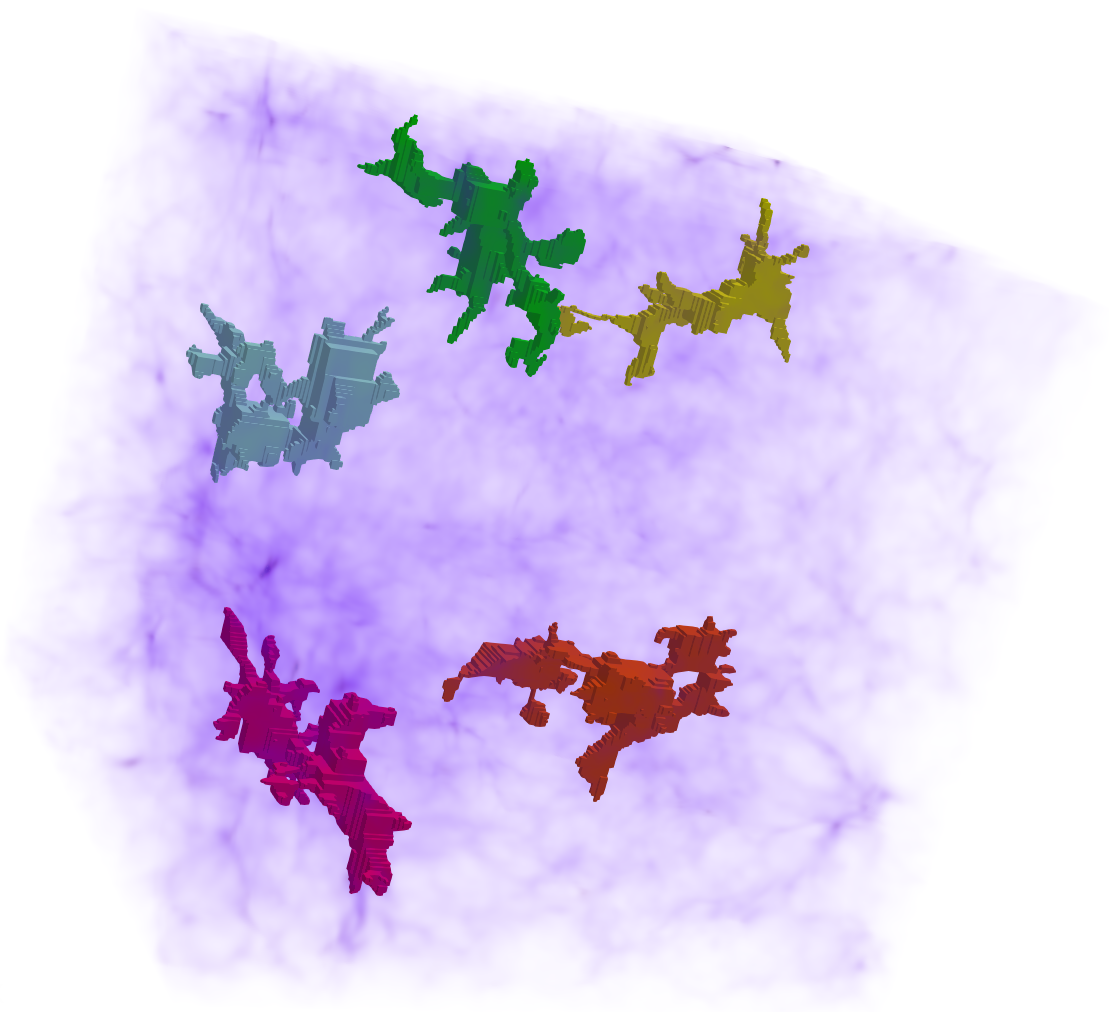}\label{fig:cos2}}
  \subfloat[]{\includegraphics[width=0.25\linewidth]{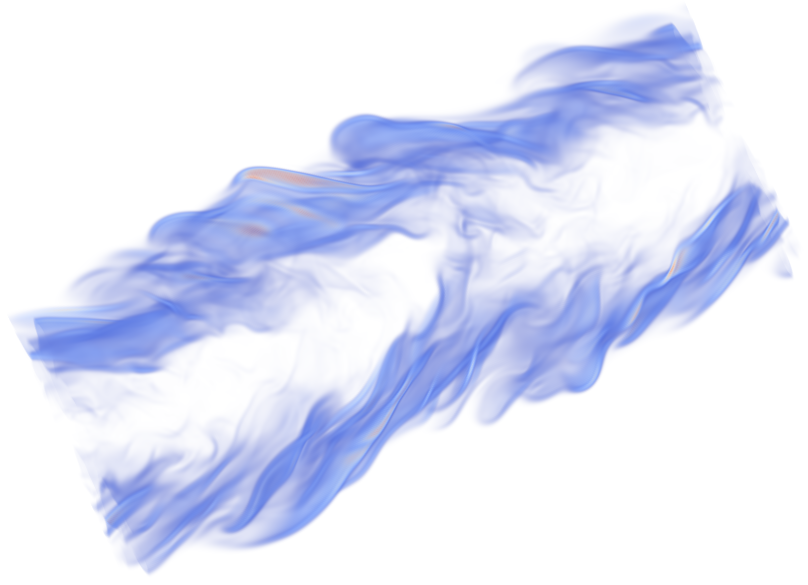}\label{fig:comb1}}
  \hspace{0.2em}
  \subfloat[]{\includegraphics[width=0.20\linewidth]{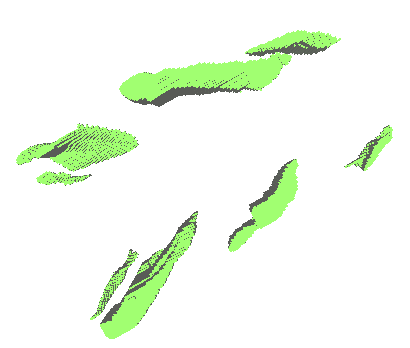}\label{fig:comb2}}
   \caption{
   (a,b) Cosmology dataset and the minimal persistent 2-cycles of the top five longest intervals.
   (c,d) Turbulent combustion dataset and its corresponding minimal persistent 2-cycles.
   }
    \label{fig:results1}
\end{figure}

\paragraph{Cosmology.} 
The simulation data shown in Figure~\ref{fig:cos1} from computational cosmology~\cite{Almgren_2013}
consist of dark matter represented as particles.
The thread-like structures in deep purple shown in Figure~\ref{fig:cos1} 
correspond to sites of large scale structure formation. 
Galaxy clusters/superclusters are contained in such large scale structures. 
Figure~\ref{fig:cos2} shows the minimal persistent 2-cycles of the top five longest intervals computed by our algorithms 
and these cycles precisely represent the top five galaxy clusters/superclusters in volume.

\paragraph{Combustion.} 
The data shown in Figure~\ref{fig:comb1}
correspond to the physical variable\footnote{
  A physical variable defines a scalar value of a certain kind on each point.} 
$\chi$ from a model of a turbulent combustion process.
The variable $\chi$ represents scalar dissipation rate and
provides a measure of the maximum possible chemical reaction rate. 
The minimal persistent 2-cycles 
shown in Figure~\ref{fig:comb2}
represent areas with high value of $\chi$.

\begin{figure}
    \centering
    \subfloat[]{\includegraphics[width=0.14\linewidth]{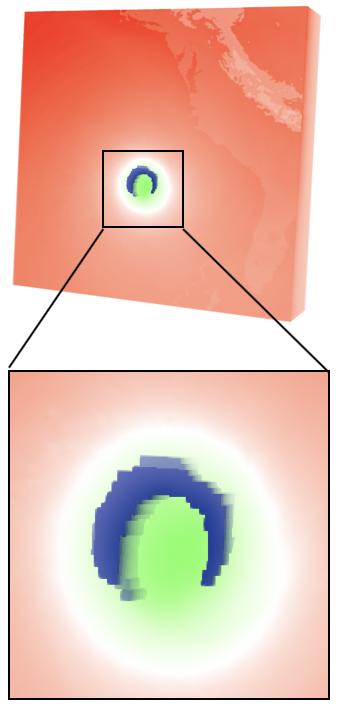}\label{fig:isa2}}
     \hspace{3em}
    \subfloat[]{\includegraphics[width=0.24\linewidth]{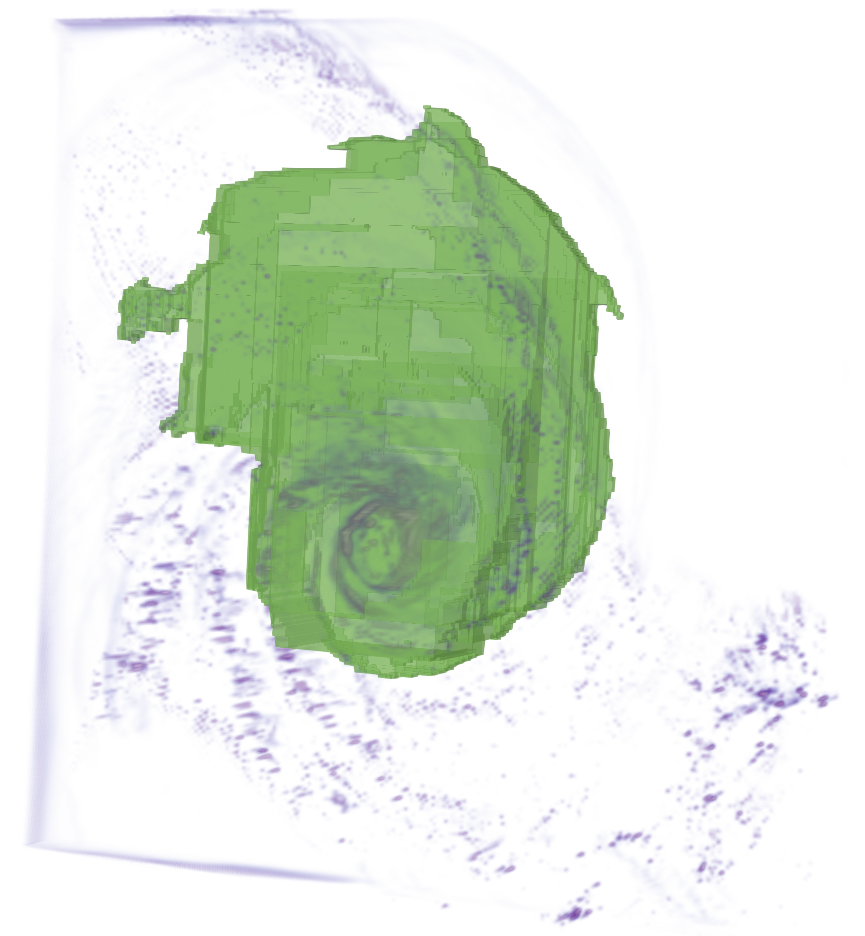}\label{fig:isa1}}
     \hspace{1em}
    \subfloat[]{\includegraphics[width=0.40\linewidth]{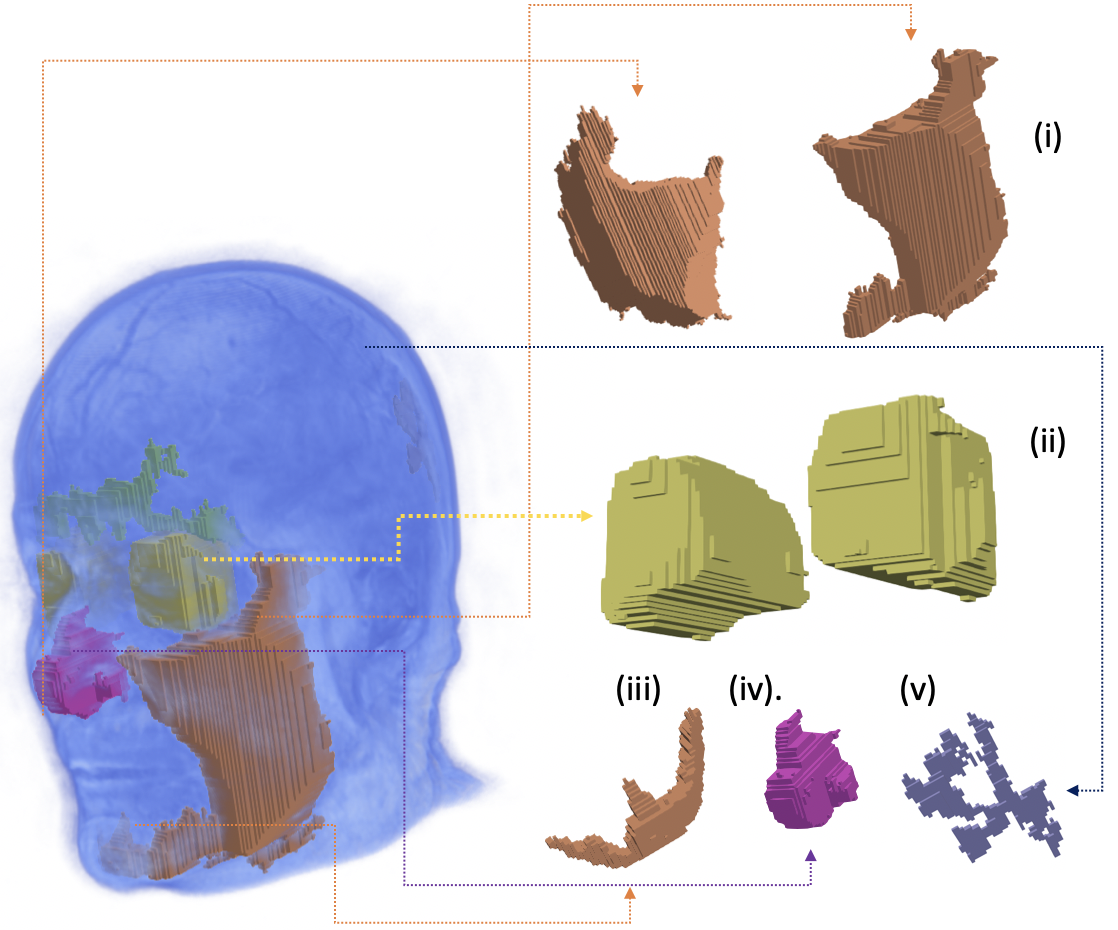}\label{fig:mri1}}
    \caption{
    (a,b) Minimal persistent 2-cycles for the hurricane model. 
    (c) Minimal persistent 2-cycles of the larger intervals for the human skull. 
    \textsf{i}: Right and left cheek muscles with the right one rotated for better visibility.
    \textsf{ii}: Right and left eyes.  
    \textsf{iii}: Jawbone. \textsf{iv}: Nose cartilage. 
    \textsf{v}: Nerves in the parietal lobe.
    }
    
    \label{fig:results2}
\end{figure}

\paragraph{Hurricane.} 
This dataset\footnote{
  The Hurricane Isabel data is produced by the Weather Research and Forecast (WRF) model, courtesy of NCAR, and the U.S. National Science Foundation (NSF).}
with $11$ physical variables
corresponds to the devastating hurricane named Isabel. 
We down-sampled the data into a resolution of $250\times 250 \times 50$ 
and worked with two physical variables.
The minimal persistent 2-cycle colored blue in Figure~\ref{fig:isa2} 
is computed on the cloud-volume variable 
and extracts the eye of the hurricane.
The minimal persistent 2-cycle colored green in Figure~\ref{fig:isa1}
is computed on the pressure variable
and captures the jagged shape of the pressure variation around the hurricane.

\paragraph{Medical imaging.} 
This dataset from the ADNI~\cite{petersen2010alzheimer} project contains the MRI scan of a healthy human skull. 
The minimal persistent 2-cycles corresponding to the larger intervals as shown in Figure~\ref{fig:mri1}
are computed from two time frames.
They extract significant features such as eyes, cartilages, nerves, and muscles.

\begin{figure}
    \centering
    \subfloat[]{\includegraphics[width=0.3\linewidth]{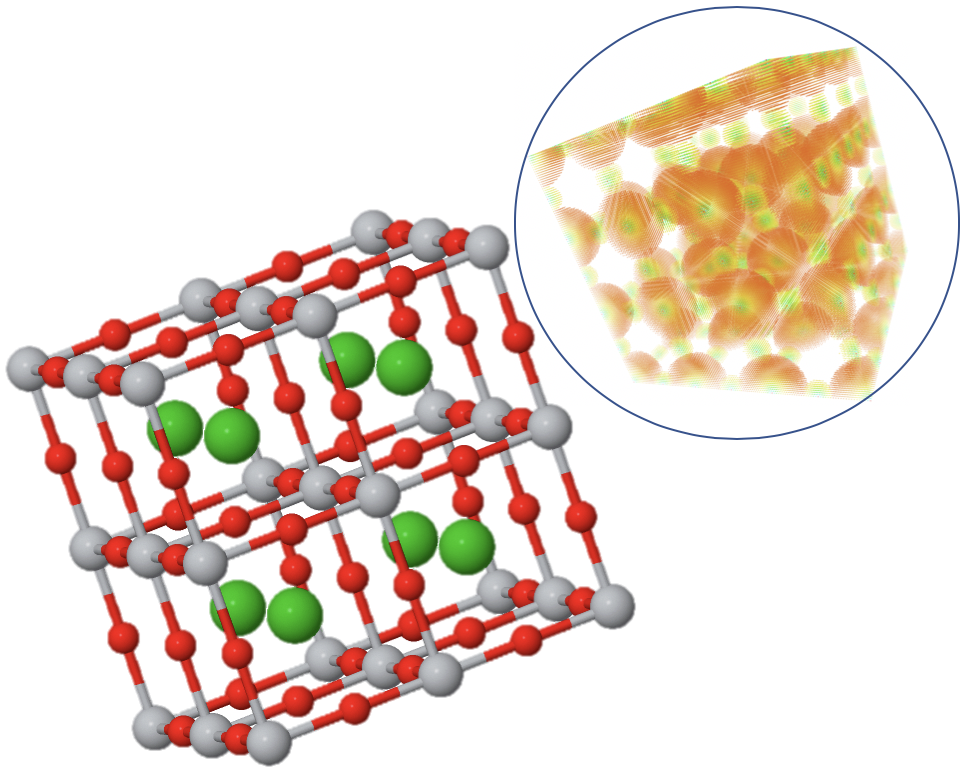}\label{fig:cubicStruct}}
      \hspace{0.5em}
   \subfloat[]{\includegraphics[width=0.20\linewidth]{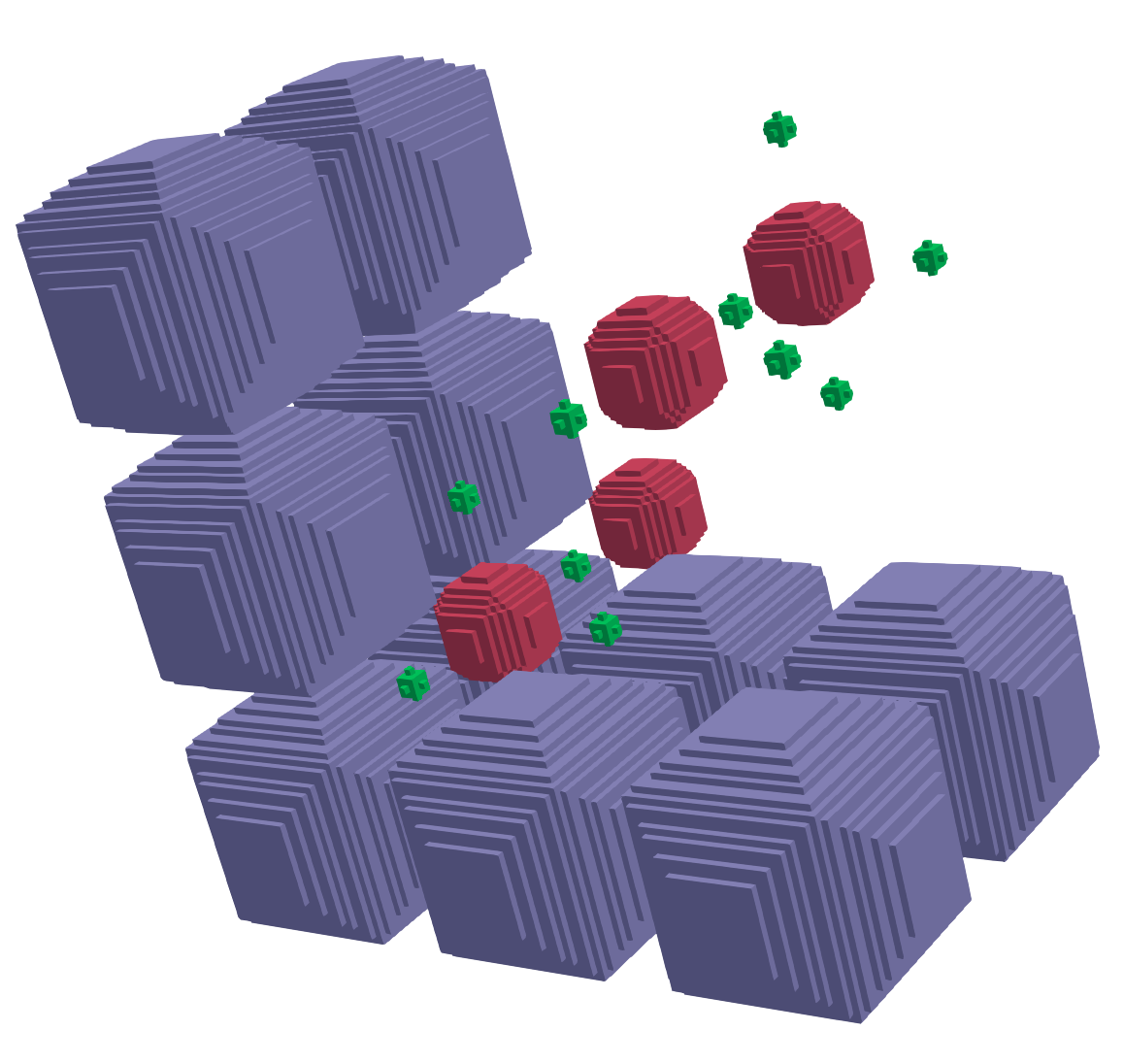}\label{fig:cubic0}}
   \hspace{0.5em}
   \subfloat[]{\includegraphics[width=0.20\linewidth]{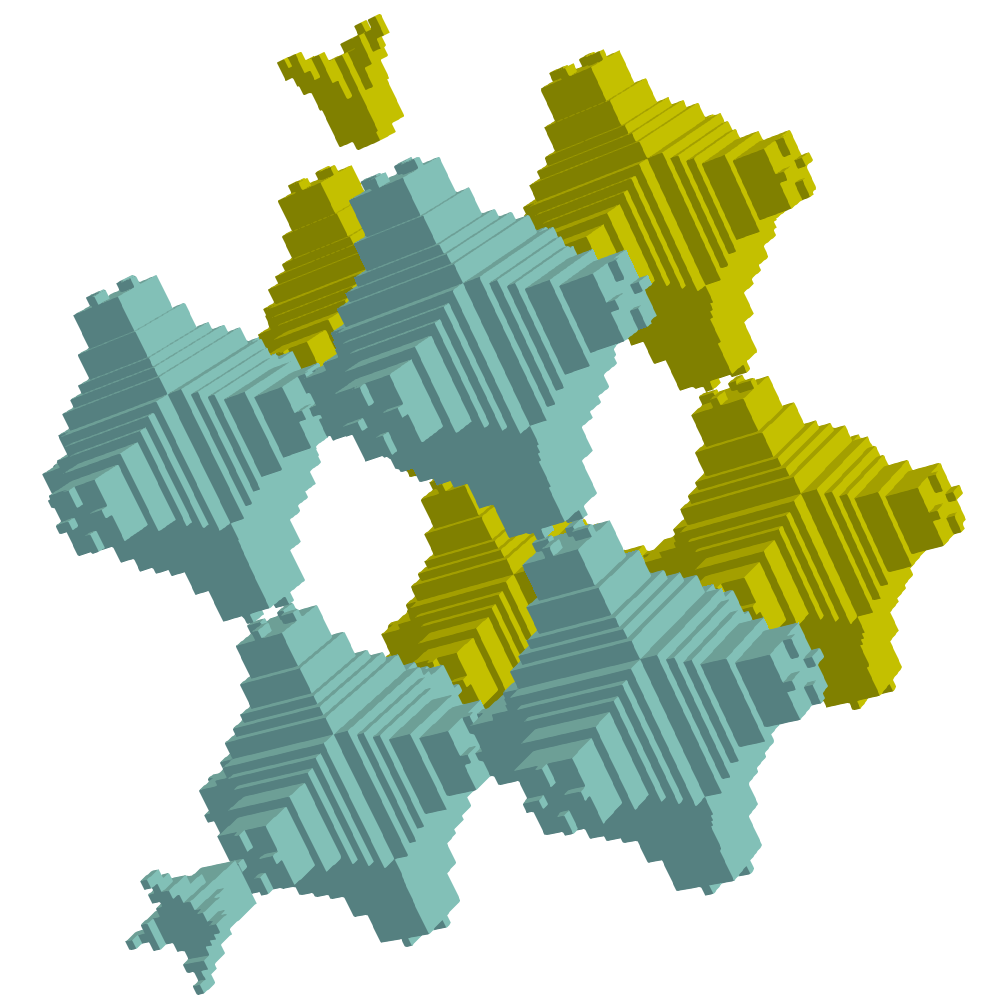}\label{fig:cubic2}}
   \hspace{0.5em}
   \subfloat[]{\includegraphics[width=0.20\linewidth]{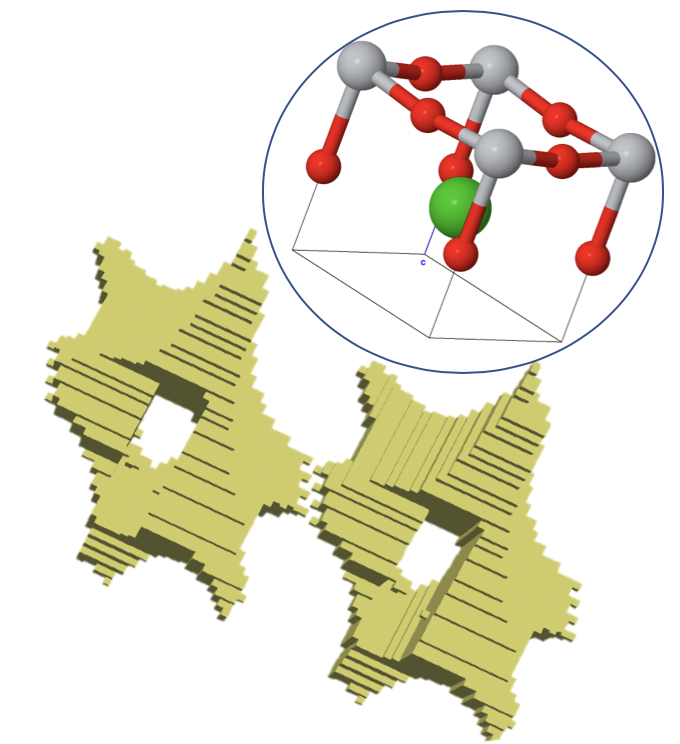}\label{fig:tet1}}
    \caption{(a)~Cubic lattice structure of $BaTiO_3$ (courtesy Springer Materials~\cite{springer-mat})
    with diffused structure in backdrop.
    (b)~Minimal persistent 2-cycles computed on the original function. 
    (c)~Minimal persistent 2-cycles computed on the negated function. 
    (d)~Minimal persistent 2-cycles computed on the negated function of a tetragonal lattice structure of $BaTiO_3$. The inlaid picture~\cite{springer-mat} illustrates the bonds of the structure. }
     \label{fig:results3}
\end{figure}

\paragraph{Material science.} 
We consider the atomic configuration of $BaTiO_3$, 
which is a ferroelectric material used for making capacitors, transducers, and microphones.
Figure~\ref{fig:cubicStruct} shows the atomic configuration of the molecule, 
where the red, grey, and green balls denote the 
Oxygen, Titanium, and Barium atoms separately
and the radii of the balls equal the radii of the corresponding atoms.
Volume data are built by uniformly sampling a $3\times3\times3$ lattice structure
similar to the one shown in Figure~\ref{fig:cubicStruct},
with the step width equal to one {\it angstrom}
(note that Figure~\ref{fig:cubicStruct} only shows a $2\times2\times2$ lattice structure).
Scalar value on a point of the volume is determined as follows:
For each atom, let the distance from the point to the atom's center be $d$,
then the scalar value of the point contributed by the atom is $\max\Set{w(r-d)/r,0}$,
where $r$ is the radius of the atom and $w$ is the atomic weight.
The scalar value on the point is then equal to the sum of the above values contributed by all atoms.
For the purpose of this experiment, we computed minimal persistent 2-cycles 
on both the original scalar function and its negated one.
Figure~\ref{fig:cubic0} shows a portion of the minimal persistent 2-cycles computed on the original function,
where the purple, red, and green cycles correspond to atoms of Barium, Titanium, and Oxygen respectively.
In our experiment, every atom corresponds to such a minimal persistent 2-cycle of a long interval.
Figure~\ref{fig:cubic2} shows a portion of the minimal persistent 2-cycles computed on the negated function, 
where the cycles complement the Barium atoms.
Figure~\ref{fig:tet1} shows the output on the negated function from a tetragonal lattice structure~\cite{springer-mat},
where the atomic bonds are not straight (see Figure~\ref{fig:tet1} inlay). 
The stretch on the lattice structure leads to minimal persistent 2-cycles with non-trivial genus.

\section{Conclusions} 
In this paper, 
we inspect the computational complexity for several problems 
concerning minimal persistent cycles.
We expand the hardness results found in~\cite{dey19pers1cyc}
and discover the cases that are NP-hard
and others that are solvable in polynomial time.
For general complexes,
we conclude that the computation is NP-hard over all dimensions for finite intervals
and NP-hard over dimension greater than one for infinite intervals.
Besides,
we find the problems to be tractable in dimension $\Dimless$ if the
given complex is a weak $(\Dimtop)$-pseudomanifold and, for infinite intervals, 
if the weak $(\Dimtop)$-pseudomanifold is embedded in $\real^\Dimtop$.

This research leads to
some open questions concerning persistent cycles:

\textsf{i.} In our experiments, 
some persistent cycles correspond to important features of the data (see Section~\ref{sec:exp}).
However, we also ran into some intervals whose persistent cycles do not
have obvious meanings.
If there are ways to design filtrations for data 
such that persistent cycles are related to the important features,
then the prospect for the application of persistent cycles
or persistence in general would be more extensive.

\textsf{ii.} As found in~\cite{dey19pers1cyc},
persistent cycles are not stable in general even when
only the weights of the cycles are considered.
It will be helpful to figure out assumptions that are
still relevant in practice, but
under which the persistent cycles remain stable.


\textsf{iii.} We have presented $O(n^2)$-time algorithms for computing a minimal
persistent cycle for a given interval. A natural question is whether this
time complexity can be improved. Furthermore, can we devise a better algorithm
to compute minimal persistent cycles for all intervals 
(i.e., the {\it minimal persistent basis}~\cite{dey19pers1cyc}), 
improving upon the obvious
$O(n^3)$-time algorithm that runs our algorithms on each interval?

\section*{Acknowledgments:} This research was conducted with the support of the NSF grants CCF-1740761 and CCF-1839252. We thank the anonymous reviewers for insightful comments.

\appendix

\section{Proof of Theorem~\ref{thm:walk-correct}}

We first define some symbols used in this section.
The interior of a set $U$ is denoted by $\interior(U)$. 
The boundary of a topological ball $\bll$ is denoted by $\bd(\bll)$.
The set of $\diml$-cofaces of a simplex $\sG$ in a $\DG$-complex~\cite{hatcher2002algebraic} $K$ is denoted by $\cof^K_{\diml}(\sG)$.
\label{apx:walk-correct-pf}

The proof of Theorem~\ref{thm:walk-correct} is based on the extended Jordan--Brouwer separation theorem (Theorem~\ref{thm:jordan-sep}) by Alexander~\cite{alexander1922proof}. 
The statement of the theorem depends on the following definition:

\begin{Definition}[Pseudomanifold] \label{dfn:pseudomnfld}
A simplicial complex $K$ is a {\it $\diml$-pseudomanifold} if $K$ is a pure $\diml$-complex and each $(\diml-1)$-simplex is a face of exactly two $\diml$-simplices in $K$.
\end{Definition}

\begin{Remark}
Note that definitions for $\diml$-pseudomanifolds, such as in~\cite{spanier1989algebraic}, typically assume the complex to be $\diml$-connected. 
\end{Remark}

\begin{theorem}
\label{thm:jordan-sep}
Let $\diml>1$ and $\Mcal$ be a finite $(\diml-1)$-connected $(\diml-1)$-pseudomanifold embedded in $\real^\diml$, 
then $\real^\diml\smallsetminus|\Mcal|$ has exactly 2 connected components.
\end{theorem}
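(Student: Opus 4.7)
The plan is to prove the theorem via Alexander duality over $\Zbb_2$ coefficients. The key intermediate fact is that the top-dimensional homology $\Hm_{\diml-1}(\Mcal;\Zbb_2)$ is isomorphic to $\Zbb_2$, after which the conclusion drops out quickly. The three main steps are: (i) compute $\Hm_{\diml-1}(\Mcal;\Zbb_2)$ directly from the combinatorics of $\Mcal$; (ii) dualize to $\Hm^{\diml-1}$; (iii) apply Alexander duality in $\sph^\diml$ and transfer the conclusion back to $\real^\diml$.

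For step (i), since $\Mcal$ is a pure $(\diml-1)$-complex with no $\diml$-simplices, we have $\Bnd_{\diml-1}(\Mcal;\Zbb_2)=0$, so $\Hm_{\diml-1}(\Mcal;\Zbb_2)=\Zyc_{\diml-1}(\Mcal;\Zbb_2)$. Interpreting a $(\diml-1)$-chain as a set of $(\diml-1)$-simplices, such a chain $c$ is a cycle iff every $(\diml-2)$-face of $\Mcal$ is incident to an even number of simplices of $c$. The pseudomanifold hypothesis (Definition~\ref{dfn:pseudomnfld}) says each $(\diml-2)$-face is incident to exactly two $(\diml-1)$-simplices; thus whenever two top simplices share a $(\diml-2)$-face, either both or neither lie in $c$. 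By the $(\diml-1)$-connectedness of $\Mcal$, this forces $c$ to be either empty or the sum of all $(\diml-1)$-simplices. Hence $\Hm_{\diml-1}(\Mcal;\Zbb_2)\cong\Zbb_2$.

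For step (ii), because $\Zbb_2$ is a field, the universal coefficient theorem gives $\Hm^{\diml-1}(\Mcal;\Zbb_2)\cong\Hm_{\diml-1}(\Mcal;\Zbb_2)\cong\Zbb_2$. For step (iii), embed $\real^\diml$ into $\sph^\diml$ by one-point compactification. Since $\Mcal$ is finite, $|\Mcal|$ is a compact, locally contractible subset of $\sph^\diml$, so Alexander duality applies:
\[
\wdtild{\Hm}_0(\sph^\diml\smallsetminus|\Mcal|;\Zbb_2)\;\cong\;\wdtild{\Hm}^{\diml-1}(|\Mcal|;\Zbb_2)\;\cong\;\Zbb_2.
\]
Thus $\sph^\diml\smallsetminus|\Mcal|$ has exactly two path components. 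Because $|\Mcal|$ is bounded in $\real^\diml$, the point at infinity lies in the unbounded component of $\real^\diml\smallsetminus|\Mcal|$ and its removal does not disconnect that component; therefore $\real^\diml\smallsetminus|\Mcal|$ also has exactly two connected components.

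The main obstacle I anticipate is cleanly executing step (i): one must use the full strength of the pseudomanifold condition (exactly two top cofaces, not merely at most two) together with $(\diml-1)$-connectedness to collapse $\Zyc_{\diml-1}(\Mcal;\Zbb_2)$ down to a single nontrivial class. Once this is established, the passage to cohomology and the invocation of Alexander duality are standard; the only remaining bookkeeping is the trivial observation that the bounded/unbounded component split in $\real^\diml$ matches the two-component split in $\sph^\diml$.
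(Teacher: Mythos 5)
Your proof is correct. Note, however, that the paper does not actually prove Theorem~\ref{thm:jordan-sep}: it states the result and attributes it to Alexander~\cite{alexander1922proof}, using it as a black box in the proof of Theorem~\ref{thm:walk-correct}. So there is no in-paper argument to compare against; what you have supplied is a self-contained modern proof of the cited classical fact. Your route --- computing $\Zyc_{\diml-1}(\Mcal;\Zbb_2)=\Hm_{\diml-1}(\Mcal;\Zbb_2)\cong\Zbb_2$ from the ``exactly two cofaces'' condition plus $(\diml-1)$-connectedness, dualizing over the field $\Zbb_2$, and applying Alexander duality in $\sph^\diml$ --- is the standard argument, and each step checks out: $\Bnd_{\diml-1}=0$ because $\Mcal$ is pure of dimension $\diml-1$; membership of top simplices in a cycle propagates across shared $(\diml-2)$-faces and hence across the whole complex by $(\diml-1)$-connectedness; the sum of all top simplices is genuinely a cycle precisely because each $(\diml-2)$-face has exactly two (not at most two) cofaces; $|\Mcal|$ is compact and locally contractible so Alexander duality applies; and deleting the point at infinity from an open connected subset of $\sph^\diml$ with $\diml\geq 2$ does not disconnect it. The only pedantic caveats are that $\Mcal$ must be nonempty (implicit in the statement) and that components of the open set $\sph^\diml\smallsetminus|\Mcal|$ coincide with path components, which holds since open subsets of manifolds are locally path-connected. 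This is a worthwhile addition where the paper offers only a citation.
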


Now we can finish our proof:

\begin{proof}[Proof of Theorem~\ref{thm:walk-correct}]
The general idea of the proof is as follows:
Using a trick which we call the ``de-contracting'', 
we first create a $\DG$-complex $\wdtild{K}'$
where each oriented simplex of $\vec{\zG}_j$ uniquely corresponds to an unoriented simplex.
Then, using a trick which we call the ``de-pinching'',
we show that $\vec{\zG}_j$ is the boundary of a region $\Acal$.
Finally, from the above fact, we use proof by contradiction to reach the conclusion.
Figure~\ref{fig:cyc-decontract} gives an example of the ``de-contracting'' and ``de-pinching''.

First, let $\SG'$ be the set of $\Dimless$-simplices of $\wdtild{K}$ whose both oriented simplices are in $\vec{\zG}_j$.
For a $\Dimless$-simplex $\sG^{\Dimless}$ of $\SG'$, 
we can let $\bll'$ be a topological $(\Dimtop)$-ball residing in $\real^\Dimtop$ 
such that $\bd(\bll')$ equals two $\Dimless$-simplices with boundaries glued together.
We then homeomorphically map points of $\real^\Dimtop\smallsetminus\sG^{\Dimless}$ to $\real^\Dimtop\smallsetminus\bll'$.
By taking care of the mapping near the boundary of $\bll'$,
we can get a new ambient $\real^\Dimtop$ and a new $\DG$-complex where
all simplices of $\wdtild{K}$ are untouched except that 
$\sG^{\Dimless}$ now corresponds to the two $\Dimless$-simplices bounding $\bll'$.
We can also think of the above process as 
``de-contracting'' the topological $\Dimless$-ball $\sG^{\Dimless}$
into the topological $(\Dimtop)$-ball $\bll'$ 
so that $\sG^{\Dimless}$ turns into two separate $\Dimless$-simplices with identical $(\Dimless-1)$-faces 
(see Figure~\ref{fig:decontract} for an example).
After doing the ``de-contraction'' for all $\Dimless$-simplices in $\SG'$,
we get a $\DG$-complex $\wdtild{K}'$.
It is true that an oriented boundary $\Dimless$-simplex in $\wdtild{K}$ 
can be naturally identified as an oriented boundary $\Dimless$-simplex in $\wdtild{K}'$.
It is also true that the groups of oriented boundary $\Dimless$-simplices in $\wdtild{K}$
are still groups of oriented boundary $\Dimless$-simplices in $\wdtild{K}'$
under the natural identification.
So we can let $\vec{\zG}_j$ denote the same group of oriented $\Dimless$-simplices in $\wdtild{K}'$.
The construction guarantees that if $\vec{\zG}_j$ is the boundary of a void of $\real^\Dimtop\smallsetminus|\wdtild{K}'|$,
then $\vec{\zG}_j$ is also the boundary of a void of $\real^\Dimtop\smallsetminus|\wdtild{K}|$.
So we only need to show that $\vec{\zG}_j$ is the boundary of a void of $\real^\Dimtop\smallsetminus|\wdtild{K}'|$ 
(see Figure~\ref{fig:cyc-decontract} for an example).
From now on, we always treat $\vec{\zG}_j$ as a set of oriented $\Dimless$-simplices as well as
a $\Dimless$-cycle (with $\Zbb$ coefficients) in $\wdtild{K}'$.

\begin{figure}[tb!]
\centering
  \subfloat[]{\includegraphics[width=0.28\linewidth]{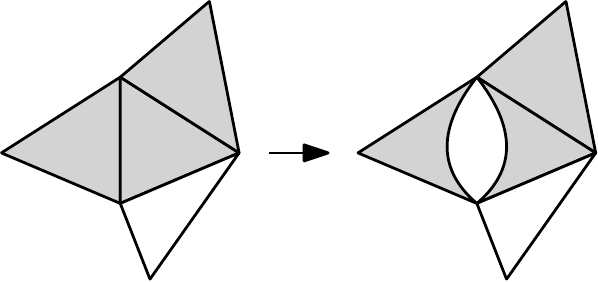}\label{fig:decontract}}
  \hspace{3em}
  \subfloat[]{\includegraphics[width=0.6\linewidth]{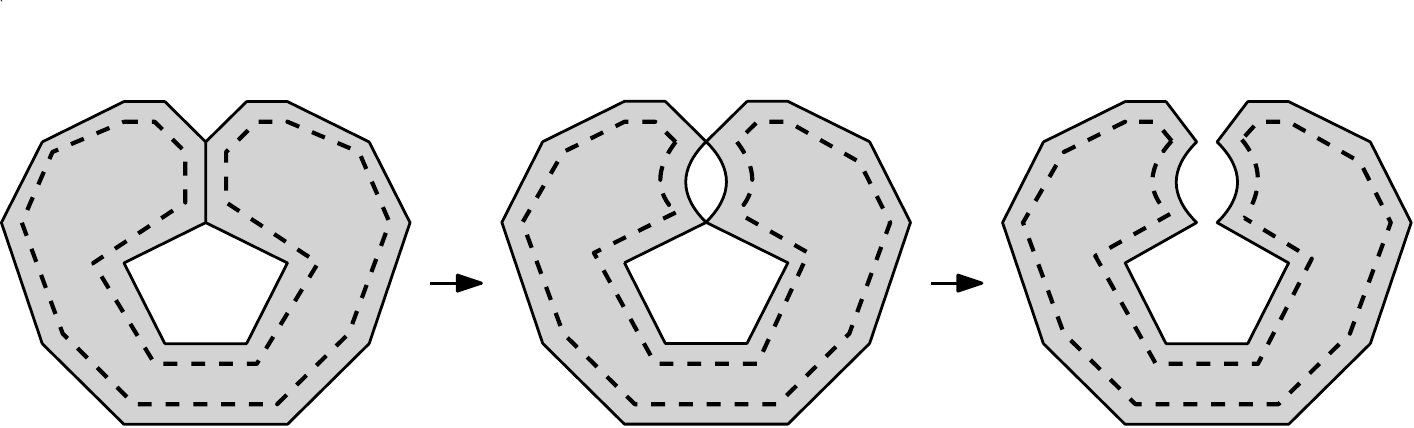}\label{fig:cyc-decontract}}
   \caption{(a) An example of the ``de-contraction'' of $\sG^{\Dimless}$ for $\Dimless=1$, 
   where a $1$-simplex in the left simplicial complex 
   turns into two curved $1$-simplices with identical boundary in the right $\DG$-complex.
   The topological 2-ball $\bll'$ is the one bounded by the two curved $1$-simplices.
   (b)~Left to middle: An example demonstrating the void boundary correspondence 
   from $\wdtild{K}$ to $\wdtild{K}'$ for $\Dimless=1$.
   After a $1$-simplex is de-contracted,
   the shaded void for $\wdtild{K}$ corresponds to the shaded void for $\wdtild{K}'$
   and their boundaries (dashed line) can be identified.
   Middle to right: The ``de-pinching'' properly separates apart incident edges (1-simplices) for 
   the two vertices (0-simplices) having more than two 1-cofaces.
   The complex $\Mcal_h$ (on the right) then becomes a pseudomanifold.
   Te deform $\Mcal_h$ back to $\Mcal$ (in this example $\Mcal=\wdtild{K}'$), 
   only points in $\Bcal$ (unshaded region) are contracted.
   }
\end{figure}

Since different oriented simplices of $\vec{\zG}_j$
correspond to different unoriented simplices in $\wdtild{K}'$,
we define a bijection $\psi:\vec{\zG}_j\to\zG$.
The bijection $\psi$ maps each oriented simplex of $\vec{\zG}_j$ 
to its corresponding unoriented simplex 
and $\zG$ is the image of this mapping.
We then let $\Mcal$ be the closure of the simplicial set $\zG$.
Note that $\zG$ is a $\Dimless$-cycle (with $\Zbb_2$ coefficients) of $\wdtild{K}'$ and $\Mcal$ is a subcomplex of $\wdtild{K}'$.
Therefore, each $(\Dimless-1)$-simplex is a face of an even number of $\Dimless$-simplices in $\Mcal$.
We first pick a $(\Dimless-1)$-simplex $\sG^{\Dimless-1}$ of $\Mcal$ such that $\big|\cof_{\Dimless}^\Mcal(\sG^{\Dimless-1})\big|>2$,
then pick two $\Dimless$-simplices $\sG^{\Dimless}_0$ and $\sG^{\Dimless}_1$ from $\cof_{\Dimless}^\Mcal(\sG^{\Dimless-1})$ 
such that $\psi\inv(\sG^{\Dimless}_0)$ and $\psi\inv(\sG^{\Dimless}_1)$ are paired 
in the void boundary reconstruction for $\wdtild{K}'$.
It is then true that 
$\sG^{\Dimless}_0\union\sG^{\Dimless}_1$ forms a topological $\Dimless$-ball $\bll^{\Dimless}_1$ 
containing $\sG^{\Dimless-1}$.
Forming the topological $\Dimless$-balls for all such pairs of $\Dimless$-simplices in $\cof_{\Dimless}^\Mcal(\sG^{\Dimless-1})$,
we get a set of $\Dimless$-balls $\Set{\bll^{\Dimless}_1,\ldots,\bll^{\Dimless}_\kG}$
for $\kG=\big|\cof_{\Dimless}^\Mcal(\sG^{\Dimless-1})\big|\big/2$.
For each $i$,
we slightly move 
$\bll^{\Dimless}_i\smallsetminus\interior(\sG^{\Dimless-1})$ 
while keeping $\bd(\bll^{\Dimless}_i)$ untouched.
We then take the closure of each $\bll^{\Dimless}_i\smallsetminus\interior(\sG^{\Dimless-1})$ 
to get a new $\DG$-complex $\Mcal_1$
in which the $\bll^{\Dimless}_i$'s have their interiors disjoint.
Note that in $\Mcal_1$, $\sG^{\Dimless-1}$ now corresponds to $\kG$ different $(\Dimlsls)$-simplices sharing the boundary.
We can repeat the above ``de-pinching'' process for each $(\Dimless-1)$-simplex having more than two $\Dimless$-cofaces in $\Mcal$
and then get a sequence of $\DG$-complexes $(\Mcal_0,\Mcal_1,\ldots,\Mcal_h)$.
In the sequence, $\Mcal_0=\Mcal$ and 
$\Mcal_i$ is derived from $\Mcal_{i-1}$ 
by doing the ``de-pinching'' on a $(\Dimless-1)$-simplex.
It is then true that $\Mcal_h$ is a pure $\Dimless$-dimensional $\Dimless$-connected $\DG$-complex
where each $(\Dimless-1)$-simplex is a face of exactly two $\Dimless$-simplices.
Since we can subdivide $\Mcal_h$ to make it a simplicial complex, 
by Theorem~\ref{thm:jordan-sep}, $|\Mcal_h|$ must separate $\real^\Dimtop$ into two connected components.
Note that for each~$i$, we can treat $\real^\Dimtop\smallsetminus|\Mcal_i|$ as a subset of $\real^\Dimtop\smallsetminus|\Mcal_{i+1}|$
because to deform $\Mcal_{i+1}$ back to $\Mcal_{i}$,
we only need to contract some points in $\real^\Dimtop\smallsetminus|\Mcal_{i+1}|$
to points in $|\Mcal_{i+1}|$.
Then 
the connected components of $\real^\Dimtop\smallsetminus|\Mcal|$ are still connected in $\real^\Dimtop\smallsetminus|\Mcal_{h}|$.
Since all oriented $\Dimless$-simplices of $\vec{\zG}_j$ bound the same void of $\real^\Dimtop\smallsetminus|\wdtild{K}'|$,
we can let this void be $\Vcal$.
The void $\Vcal$ is still connected in $\real^\Dimtop\smallsetminus|\Mcal|$ because $\real^\Dimtop\smallsetminus|\wdtild{K}'|\subseteq \real^\Dimtop\smallsetminus|\Mcal|$.
Therefore, $\Vcal$ is still connected in $\real^\Dimtop\smallsetminus|\Mcal_h|$.
We can let $\Acal$ be the connected component of $\real^\Dimtop\smallsetminus|\Mcal_h|$ containing $\Vcal$
and let $\Bcal$ be the other connected component.
The $\Dimless$-simplices in $\Mcal$ and $\Mcal_h$ can be identified
because going from each $\Mcal_i$ to $\Mcal_{i+1}$ 
the interior of each $\Dimless$-simplex is never touched.
Therefore, $\zG$ is still a $\Dimless$-cycle (with $\Zbb_2$ coefficients) in $\Mcal_h$.
We then have that the two $\Dimless$-cycles (with $\Zbb$ coefficients) in $\Mcal_h$,
which are derived from the two consistent orientations of simplices of $\zG$,
bound $\Acal$ and $\Bcal$.
Then, as one of the two $\Dimless$-cycles (with $\Zbb$ coefficients) derived from $\zG$, 
$\vec{\zG}_j$ must be the boundary of $\Acal$ or $\Bcal$ in $\Mcal_h$.
We have that $\vec{\zG}_j$ bounds $\Acal$ because $\Bcal$ does not contain points from $\Vcal$.
A fact about our construction is that
to deform each $\Mcal_{i}$ back into $\Mcal_{i-1}$, we only need to contract points in $\Bcal$.
This implies that $\Acal$ is still a void of $\real^\Dimtop\smallsetminus|\Mcal|$ with boundary $\vec{\zG}_j$ (see Figure~\ref{fig:cyc-decontract} for an example).

To prove that $\vec{\zG}_j$ is the boundary of a void of $\real^\Dimtop\smallsetminus|\wdtild{K}'|$,
we only need to show that there are no oriented $\Dimless$-simplices which are in the boundary of $\Vcal$ but do not belong to $\vec{\zG}_j$.
For contradiction, suppose that there is such an oriented $\Dimless$-simplex $\vec{\sG}^{\Dimless}$.
Then $\vec{\sG}^{\Dimless}$ must not be oppositely oriented to any oriented simplex of $\vec{\zG}_j$ because 
otherwise $\vec{\sG}^{\Dimless}$ would bound another connected component of $\real^\Dimtop\smallsetminus|\Mcal|$ 
and thus bound another connected component of $\real^\Dimtop\smallsetminus|\wdtild{K}'|$.
Let ${\sG}^{\Dimless}$ be the unoriented $\Dimless$-simplex of $\vec{\sG}^{\Dimless}$,
then ${\sG}^{\Dimless}\not\in \Mcal$ because otherwise $\vec{\sG}^{\Dimless}$ would be oppositely oriented to an oriented simplex of $\vec{\zG}_j$.
Since ${\sG}^{\Dimless}\not\in \Mcal$, the interior of ${\sG}^{\Dimless}$ must reside in $\real^\Dimtop\smallsetminus|\Mcal|$.
From now on, we always treat $\Acal$ as a void of $\real^\Dimtop\smallsetminus|\Mcal|$.
Then among all voids of $\real^\Dimtop\smallsetminus|\Mcal|$, 
the interior of ${\sG}^{\Dimless}$ resides in $\Acal$.
This is because $\Acal$ is the void of $\real^\Dimtop\smallsetminus|\Mcal|$ containing $\Vcal$.
If ${\sG}^{\Dimless}$ resides in a void other than $\Acal$, points to either side
of ${\sG}^{\Dimless}$ cannot be from $\Vcal$.
Since $\wdtild{K}'$ is $\Dimless$-connected, 
there must be a sequence of $\Dimless$-simplices $({\sG}^{\Dimless}_0,\ldots,{\sG}^{\Dimless}_l)$ of $\wdtild{K}'$
such that ${\sG}^{\Dimless}_0={\sG}^{\Dimless}$, ${\sG}^{\Dimless}_l\in\Mcal$, 
and ${\sG}^{\Dimless}_i$, ${\sG}^{\Dimless}_{i+1}$ share a $(\Dimless-1)$-face for each $i$ such that $0\leq i<l$.
Because the interior of ${\sG}^{\Dimless}_l$ is not in $\Acal$,
we can let ${\sG}^{\Dimless}_{l'}$ be the first $\Dimless$-simplex 
in the sequence whose interior is not in $\Acal$,
then $l'\neq 0$ and the interior of ${\sG}^{\Dimless}_{l'-1}$ is in $\Acal$. 
Let ${\sG}^{\Dimless-1}_{l'-1}$ be the $(\Dimless-1)$-face shared by ${\sG}^{\Dimless}_{l'-1}$ and ${\sG}^{\Dimless}_{l'}$, 
we claim that ${\sG}^{\Dimless-1}_{l'-1}\in\Mcal$.
If ${\sG}^{\Dimless}_{l'}\in\Mcal$, then it is obvious that ${\sG}^{\Dimless-1}_{l'-1}\in\Mcal$.
If ${\sG}^{\Dimless}_{l'}\not\in\Mcal$, then it is also true that ${\sG}^{\Dimless-1}_{l'-1}\in\Mcal$ 
because otherwise the interiors of ${\sG}^{\Dimless}_{l'-1}$ and ${\sG}^{\Dimless}_{l'}$ would be connected in $\real^\Dimtop\smallsetminus|\Mcal|$.
Around the neighborhood of ${\sG}^{\Dimless-1}_{l'-1}$ during the void boundary reconstruction for $\wdtild{K}'$,
any two paired oriented simplices from $\vec{\zG}_j$ enclose a region residing in $\Acal$.
Because of the nature of the pairing, 
${\sG}^{\Dimless}_{l'-1}$ cannot be contained in any of the regions enclosed 
by the paired oriented simplices from $\vec{\zG}_j$.
Since $\vec{\zG}_j$ is the boundary of the void $\Acal$ of $\real^\Dimtop\smallsetminus|\Mcal|$,
all other regions in the neighborhood of ${\sG}^{\Dimless-1}_{l'-1}$ must not be in $\Acal$.
This implies that ${\sG}^{\Dimless}_{l'-1}$ is not in $\Acal$, which is a contradiction.
\end{proof}

\bibliographystyle{plain} 
\bibliography{refs}

\end{document}